\theoremstyle{remark}
\theoremstyle{plain}
\newtheorem{theorem}{Theorem}
\newtheorem{lemma}[theorem]{Lemma}
\newtheorem{corollary}[theorem]{Corollary}
\newcommand{\la}{\langle}
\newcommand{\ra}{\rangle}
\DeclareMathOperator{\res}{res}
\newcommand{\qmap}{Center for Quantum Mathematics and Physics (QMAP), 
University of California, Davis, USA}
\newcommand{\mathdp}{Department of Mathematics, University of California, Davis, USA}
\newcommand{\getfirst}[1]{%
  \expandafter\firstitem#1,\relax
}
\def\firstitem#1,#2\relax{#1}
\newcommand{\getlast}[1]{%
  \expandafter\lastitem#1,\relax,\@nil
}
\def\lastitem#1,#2,\@nil{%
  \ifx\relax#2
    #1
  \else
    \expandafter\lastitem#2,\@nil
  \fi
}
\newcommand{\onshellgraph}[5]{
    \begin{tikzpicture}[scale=#5]
        \foreach \pos\index in #1{
            \node(\index) at \pos {\index};
        }
    
        \foreach \pos\index in #2{
            \node(\index)[draw,circle] at \pos {};
        }

        \foreach \pos\index in #3{
            \node(\index)[draw,fill,circle,text=white] at \pos {};
        }

        \foreach \doublet in #4{
            \draw[white,line width=3pt] (\getfirst\doublet) -- (\getlast\doublet);    
            \draw[line width=1pt] (\getfirst\doublet) -- (\getlast\doublet);
        }
    \end{tikzpicture}
}
\newcommand{\onshellgraphnolabel}[5]{
    \begin{tikzpicture}[scale=#5]
        \foreach \pos\index in #1{
            \node(\index) at \pos {};
        }
    
        \foreach \pos\index in #2{
            \node(\index)[draw,circle] at \pos {};
        }

        \foreach \pos\index in #3{
            \node(\index)[draw,fill,circle,text=white] at \pos {};
        }

        \foreach \doublet in #4{
            \draw[white,line width=3pt] (\getfirst\doublet) -- (\getlast\doublet);    
            \draw[line width=1pt] (\getfirst\doublet) -- (\getlast\doublet);
        }
    \end{tikzpicture}
}
\newcommand{\polygongraph}[3]{
    \begin{tikzpicture}[scale=#3]
        \foreach \pos\index in #1{
            \node(W\index) at \pos {};
        }
        
        \foreach \polygon in #2{
            \draw[fill=black!60!white,line width=1pt] (W\getfirst\polygon.center) \foreach \i in \polygon{-- (W\i.center)} -- cycle ;
        }

        \foreach \pos\index in #1{
            \node[draw,circle,fill=white] at \pos {};
            \node at \pos {\index};
        }
    \end{tikzpicture}
}
\newcommand{\decoratedpolygongraph}[5]{
    \begin{tikzpicture}[scale=#5]
        \foreach \pos\index in #1{
            \node(W\index) at \pos {};
        }
        
        \foreach \polygon in #2{
            \draw[fill=black!60!white,draw=none] (W\getfirst\polygon.center) \foreach \i in \polygon{-- (W\i.center)} -- cycle ;
        }

        \begin{scope}[decoration={
            markings,
            mark=at position 0.5 with {\arrow{>}}}
            ] 
            \foreach \doublet in #3{ 
            \draw[line width=1pt,postaction={decorate}] (\getfirst\doublet.center) -- (\getlast\doublet.center);
            }
            
            \foreach \doublet in #4{ 
                \draw[line width=1pt,red,postaction={decorate}] (\getfirst\doublet.center) -- (\getlast\doublet.center);
            }
        \end{scope}
        
        \foreach \pos\index in #1{
            \node[draw,circle,fill=white] at \pos {};
            \node at \pos {\index};
        }
    \end{tikzpicture}
}
\newcommand{\dualdecoratedpolygongraph}[6]{
    \begin{tikzpicture}[scale=#5]
        \foreach \pos\index in #1{
            \node(W\index) at \pos {};
        }
        
        \foreach \polygon in #2{
            \draw[fill=black!60!white,draw=none,fill opacity=0.3] (W\getfirst\polygon.center) \foreach \i in \polygon{-- (W\i.center)} -- cycle ;
        }

        \begin{scope}[decoration={
            markings,
            mark=at position 0.5 with {\arrow{>}}}
            ] 
            \foreach \doublet in #3{ 
            \draw[line width=1pt,postaction={decorate},draw opacity=0.3] (\getfirst\doublet.center) -- (\getlast\doublet.center);
            }
            
            \foreach \doublet in #4{ 
                \draw[line width=1pt,red,postaction={decorate},draw opacity=0.3] (\getfirst\doublet.center) -- (\getlast\doublet.center);
            }
        \end{scope}
        
        \foreach \pos\index in #1{
            \node[draw,circle,fill=white,draw opacity=0.3] at \pos {};
            \node at \pos {\index};
        }

        #6
    \end{tikzpicture}
}
\begin{document}

\title{Grassmannian Geometries for Non-Planar On-Shell Diagrams}

\author[a]{Artyom Lisitsyn,}
\author[a]{Umut Oktem,}
\author[b]{Melissa Sherman-Bennett,}
\author[a]{Jaroslav Trnka}
\affiliation[a]{\qmap}
\affiliation[b]{\mathdp}
\emailAdd{alisitsyn@ucdavis.edu}
\emailAdd{ucoktem@ucdavis.edu}
\emailAdd{mshermanbennett@ucdavis.edu }
\emailAdd{trnka@ucdavis.edu}


\abstract{On-shell diagrams are gauge invariant quantities which play an important role in the description of scattering amplitudes. Based on the principles of generalized unitarity, they are given by products of elementary three-point amplitudes where the kinematics of internal on-shell legs are determined by cut conditions. In the ${\cal N}=4$ Super Yang-Mills (SYM) theory, the dual formulation for on-shell diagrams produces the same quantities as canonical forms on the Grassmannian $G(k,n)$. Most of the work in this direction has been devoted to the planar diagrams, which dominate in the large $N$ limit of gauge theories. On the mathematical side, planar on-shell diagrams correspond to cells of the positive Grassmannian $G_+(k,n)$ which have been very extensively studied in the literature in the past 20 years. In this paper, we focus on the non-planar on-shell diagrams which are relevant at finite $N$. In particular, we use the triplet formulation of Maximal-Helicity-Violating (MHV) on-shell diagrams to obtain certain regions in the Grassmannian $G(2,n)$. These regions are unions of positive Grassmannians with different orderings (referred to as oriented regions). We explore the features of these unions, and show that they are pseudo-positive geometries, in contrast to positive geometry of a single oriented region. For all non-planar diagrams which are \emph{internally planar} there always exists a strongly connected geometry, and for those that are \emph{irreducible}, there exists a geometry with no spurious facets. We also prove that the already known identity moves, square and sphere moves, form the complete set of identity moves for all MHV on-shell diagrams.}
\setcounter{tocdepth}{2}

\maketitle
\newpage

\section{Introduction}

Recent years have seen tremendous progress in our understanding of scattering amplitudes in the planar ${\cal N}{=}4$ Super Yang-Mills (SYM) theory (see \cite{Arkani-Hamed:2022rwr} for a recent review). This includes several powerful techniques: unitarity methods \cite{Bern:1994zx,Bern:1994cg,Bourjaily:2017wjl}, Wilson loop duality and dual conformal symmetry \cite{Alday:2007hr,Drummond:2008vq,Drummond:2009fd,Caron-Huot:2010ryg}, momentum twistor integrands \cite{Arkani-Hamed:2010zjl,Bourjaily:2013mma,Bourjaily:2015jna}, BCFW recursion relations \cite{Britto:2004ap,Britto:2005fq,Arkani-Hamed:2010zjl}, symbol alphabet \cite{Goncharov:2010jf,Golden:2013xva,Prlina:2017tvx,He:2020vob,Mago:2020kmp,He:2024fij,He:2025tyv},
higher-loop bootstrap \cite{Dixon:2011nj,Caron-Huot:2011dec,Dixon:2015iva,Caron-Huot:2016owq,Caron-Huot:2019vjl,Basso:2024hlx}, cluster adjancency \cite{Drummond:2018dfd,Drummond:2018caf}, antipodal duality \cite{Dixon:2021tdw,Dixon:2022xqh,Dixon:2023kop,Dixon:2025zwj}, and the connection to the positive Grassmannian \cite{Arkani-Hamed:2009ljj,Arkani-Hamed:2009nll,Mason:2009qx,Arkani-Hamed:2009kmp,Arkani-Hamed:2009pfk,Arkani-Hamed:2012zlh} and the amplituhedron \cite{Arkani-Hamed:2013jha,Arkani-Hamed:2013kca,Arkani-Hamed:2017vfh,Damgaard:2019ztj,Ferro:2022abq,He:2023rou,Arkani-Hamed:2023epq}. This has completely changed our view on perturbative scattering amplitudes, uncovered hidden symmetries and mathematical structures, and also massively improved our computational abilities. The bottom line in most of the developments is to start with the mathematical properties and symmetries as the primary defining principles and derive the physical principles as inevitable consequences.

The amplituhedron picture is the culmination of these efforts. The amplituhedron is a geometric object associated to tree-level amplitudes and all-loop planar integrands, and actual expressions for the amplitudes are recovered from the canonical differential form. Different triangulations of the same geometry provide different expansions of the amplitude such as those coming from Feynman diagrams, recursion relations, or new expansions with no known physical interpretation. The amplituhedron has been extensively studied from various perspectives, which provided many insights how the geometry encodes physical principles and symmetries \cite{Franco:2014csa,Arkani-Hamed:2014dca,Ferro:2015grk,Dennen:2016mdk,Ferro:2016zmx,Arkani-Hamed:2018rsk,Herrmann:2020qlt,Dian:2022tpf}. While the amplituhedron provides an all-loop order definition of the problem, finding the explicit expressions remains a big challenge. The BCFW recursion relations were conjectured to provide a particular triangulation of the geometry, which was recently proven to be the case \cite{Even-Zohar:2021sec,Even-Zohar:2023del,Even-Zohar:2024nvw,Galashin:2024ttp}. In an alternative strategy, the canonical form for the amplituhedron can be expanded in terms of `negative geometries' which form a new class of interesting objects \cite{Arkani-Hamed:2021iya,Brown:2023mqi,Henn:2023pkc,Chicherin:2024hes,Glew:2024zoh,Lagares:2024epo,Brown:2025plq,Chicherin:2025cua}. The negative geometries reproduce an IR finite cousin of the amplitude -- a Wilson loop with a Lagrangian insertion \cite{Engelund:2011fg,Alday:2013ip}. 

The amplituhedron is also interesting from a purely mathematical perspective as a generalization of the positive Grassmannian with many interesting combinatorial properties and connections to cluster algebras and other mathematical structures, see \cite{Karp:2016uax,Karp:2017ouj,Galashin:2018fri,Lukowski:2020dpn,Parisi:2021oql,Parisi:2024psm,Lam:2024gyg,Akhmedova:2023wcf,Even-Zohar:2025ydi,Ranestad:2024svp,Mazzucchelli:2025kzy} for recent progress on various aspects of the story. 

One very important question is how these mathematical structures extend beyond the planar limit. What geometry can we associate to amplitudes in the non-planar theory? We have some indications from physics \cite{Bern:2015ple,Bourjaily:2018omh,Bourjaily:2019gqu,Bourjaily:2019iqr} that non-planar amplitudes do enjoy very special analytic properties, though this is not obvious from any known physical principles. The best starting point for non-planar explorations is the study of \emph{on-shell diagrams} and associated on-shell forms. These are on-shell gauge-invariant objects which represent cuts of loop integrands, and are well-defined both in planar and non-planar sectors of the theory, and can be calculated as products of three-point amplitudes. From a graphical perspective, on-shell diagrams are bi-colored graphs with all trivalent vertices. In mathematics, on-shell diagrams in the planar sector are known as \emph{plabic graphs} and they are associated with cells in the positive Grassmannian; see \cite{Postnikov:2006kva,LusTPFlag,RietschCell,Wil-enumeration,GKL} for a more detailed mathematical discussion. The connection between on-shell diagrams and the Grassmannian $G(k,n)$ allowed for the dual formulation and the evaluation of the canonical dlog form using parameters of the cell in the Grassmannian. This connection holds beyond the planar limit and it is a cornerstone of our exploration in this paper. This was noted in the original paper on scattering amplitudes and the positive Grassmannian \cite{Arkani-Hamed:2012zlh}, but in the context of non-planar diagrams was further studied in \cite{Arkani-Hamed:2014bca,Franco:2015rma,Bourjaily:2016mnp,Cachazo:2019,Paranjape:2022ymg} (see also \cite{Herrmann:2016qea,Heslop:2016plj,Armstrong:2020ljm,Paranjape:2023qsq, Bourjaily:2023ycy} for work on on-shell diagrams in Einstein gravity, though the forms there are non-logarithmic and depend explicitly in kinematics in addition to Grassmannian variables). 

In particular, in \cite{Arkani-Hamed:2014bca} it was shown that all MHV (maximal-helicity-violating) on-shell diagrams can be evaluated in a remarkably simple way. On one hand, the expression for each diagram can be expanded in a basis of Parke-Taylor factors, that are relevant for planar diagrams with different orderings. Alternatively, the same expression can be obtained using an intriguing determinant formula. In the Grassmannian language, the MHV case corresponds to $G(2,n)$ which exhibits major simplifications. Higher degree cases were studied in \cite{Bourjaily:2016mnp} and already for $G(3,6)$ there are qualitatively new features. 

In this paper, we focus on the following question: what geometries can be associated to MHV non-planar on-shell diagrams? That is, are there natural geometries whose canonical forms match those from non-planar on-shell diagrams? For planar diagrams, there is an elegant answer to this question: the positive Grassmannian $G_+(k,n)$. Each planar diagram is associated with a certain cell in $G_+(k,n)$, a region defined by setting some minors to zero and asking that others are nonnegative. If we calculate a canonical form of this region, we reproduce the expression for the diagram calculated from physics. 

In the non-planar sector, we do have an expression for each diagram, written in terms of minors of the (no longer positive part of the) Grassmannian $G(k,n)$. But we do not know a region whose canonical form is this expression. In this paper, we find such a region in the $G(2,n)$ case, which is relevant for $n$-pt MHV amplitudes. This is interesting from a mathematical perspective: we will associate geometries with non-planar graphs, and obtain certain generalizations of the positive Grassmannian. From a physical perspective, this is a major step in the exploration of non-planar amplitudes using new mathematical tools. A detailed understanding of underlying combinatorial structures should lead to insights on non-planar amplitudes, similar to those we obtained in planar amplitudes from the positive Grassmannian and the amplituhedron formulations. 

The paper is organized as follows: In Section \ref{sec:onshell}, we review the connection between planar on-shell diagrams and the cells of the positive Grassmannian. In Section \ref{sec:non-planar}, we discuss the generalization to the non-planar sector and review the triplet formula for the MHV on-shell diagrams. In Section \ref{sec:determinant-and-moves}, we leverage this formula to prove that the previously known sphere move introduced in \cite{Cachazo:2019} are the only identity move that does not change the canonical form. Section \ref{sec:regions-and-geometries} contains the main result of this paper, and we study pseudo-positive geometries associated with MHV on-shell diagrams. In particular, we show that for the class of internally planar diagrams, the geometry is always connected. We end with Conclusions and Outlook. Three appendices contain proofs of various statements.

\section{From Planar On-shell Diagrams to Positive Grassmannian}\label{sec:onshell}

On-shell diagrams are on-shell gauge invariant objects built from elementary building blocks which are tree-level amplitudes. For massless theories in four-dimensions, elementary three-point amplitudes are fixed by Poincar\'e symmetry up to an overall constant. The on-shell conditions and the momentum conservation for external momenta $p_1$, $p_2$, $p_3$,
\begin{equation}
p_1^2=p_2^2=p_3^2=0, \quad p_1+p_2+p_3=0
\end{equation}
have two solutions. These can be best described using the spinor-helicity variables $p^\mu = \sigma^\mu_{\alpha\dot{\alpha}}\lambda_\alpha \widetilde{\lambda}_{\dot{\alpha}}$ as collinearity conditions between spinors, $\lambda_1\sim\lambda_2\sim\lambda_3$ or $\widetilde{\lambda}_1\sim\widetilde{\lambda}_2\sim\widetilde{\lambda}_3$. For amplitudes with spin, these kinematical solutions are associated with two types of amplitudes which differ by helicities of external particles. In the context of ${\cal N}=4$ Super Yang-Mills (SYM) theory, these are MHV and $\overline{\rm MHV}$ superamplitudes which we associate with two blobs,
\begin{equation}
    \vcenter{\hbox{ \onshellgraph{ {(0,0)/2, (1,1.732)/1, (2,0)/3} }{ {} }{ { (1,0.577)/B1} }{ {{1,B1},{2,B1},{3,B1}} }{0.8} }} \hspace{-1em} = \frac{\delta(P) \delta(\mathcal Q)}{\langle 12 \rangle \langle 23 \rangle \langle 31 \rangle}, \hspace{2em} \vcenter{\hbox{ \onshellgraph{ {(0,0)/2, (1,1.732)/1, (2,0)/3} }{ { (1,0.577)/W1} }{ {} }{ {{1,W1},{2,W1},{3,W1}} }{0.8} }} \hspace{-1em} = \frac{\delta(P) \delta(\tilde {\mathcal Q})}{[12][23][31]}
\end{equation}
%
where the momentum delta function is
\begin{equation}
\delta(P) \equiv \delta^{2{\times}2}(\lambda\cdot\widetilde{\lambda})
\end{equation}
and $\cdot$ is the index space (in this case $\lambda{\cdot}\widetilde{\lambda}=\lambda_1\widetilde{\lambda}_1+\lambda_2\widetilde{\lambda}_2+\lambda_3\widetilde{\lambda}_3$). The super-momentum delta functions are
\begin{equation}
\delta({\cal Q}) \equiv \delta^{2{\times}4}(\lambda_1\widetilde{\eta}_1+\lambda_2\widetilde{\eta}_2+\lambda_3\widetilde{\eta}_3),\quad 
\delta(\widetilde{\cal Q}) \equiv \delta^{1{\times}4}([12]\widetilde{\eta}_3+[23]\widetilde{\eta}_1+[31]\widetilde{\eta}_2).
\end{equation}
We can use the amalgamation procedure and glue these vertices into an \emph{on-shell diagram}. Each on-shell diagram represents a full scattering amplitude evaluated on the \emph{cut} in the context of generalized unitarity,
\begin{equation}
\qquad \qquad 
 \raisebox{-17mm}{\includegraphics[trim={0cm 0cm 0cm 0cm},clip,scale=0.82]{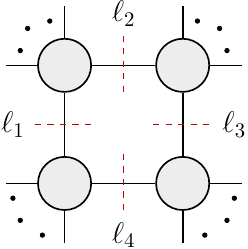}} \ = \int_{\rm Cut} {\cal I} ~ d^4\ell_1\dots d^4\ell_4
\end{equation}
where instead of Minkowski contour we changed the contour to encircle certain poles of the loop integrand. Generalized unitarity dictates that the residue of the amplitude is given by the product of three-point amplitudes. If the on-shell diagram is only a function of kinematics, which happens for $n_I=4L$ where $n_I$ is the number of cut conditions $=$ number of internal edges in the on-shell diagram, we call the diagram a \emph{leading singularity}. 
\begin{equation}\label{eqn:leading_singularity}
\raisebox{-18.5mm}{\includegraphics[trim={0cm 0cm 0cm 0cm},clip,scale=1.1]{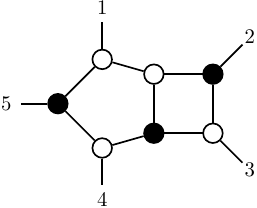}}
\begin{aligned}
     = \prod_k \int d^{\cal N}\widetilde{\eta}_k\int\frac{d^2\lambda_k\,d^2\widetilde{\lambda}_k}{{\rm GL}(1)}\left(\prod_j A_3^{(j)}\right)
\end{aligned}
\end{equation}
and the associated superfunction takes the form,
\begin{equation}
{\cal F}_G = F(\lambda,\widetilde{\lambda},\widetilde{\eta})\,\delta^4({\cal Q})\delta^4(P)\qquad \mbox{where} \qquad \delta(P) \equiv \delta^{2{\times}2}(\lambda{\cdot}\widetilde{\lambda}),\,\,\,\delta({\cal Q}) \equiv \delta^{2{\times}4}(\lambda\cdot\widetilde{\eta})
\end{equation}
where $F(\lambda,\widetilde{\lambda},\widetilde{\eta})$ is the superfunction that depends on $\lambda_i$, $\widetilde{\lambda}_i$ and is degree $4(k{-}2)$ in the Grassmann variables $\widetilde{\eta}_i$. On-shell diagrams are very important objects in any QFT, but in cut-constructible theories (such as ${\cal N}=4$ SYM), on-shell diagrams contain complete information about loop integrands and can be used for integrand reconstruction via recursion relations. We refer the reader to \cite{Arkani-Hamed:2012zlh} for more details.

Planar on-shell diagrams also appear in mathematics, where they are known as \emph{plabic graphs} (``plabic" stands for ``planar bicolored"). Plabic graphs are associated with cells in the positive Grassmannian $G_+(k,n)$ and can be labeled by (decorated) permutations \cite{Postnikov:2006kva}. The positive Grassmannian is a region inside the real Grassmannian $G(k,n)$ where the ordered Pl\"ucker variables $(i_1\,i_2\dots i_k)$ are nonnegative. The top cell is the $k(n{-}k)$-dimensional space where all Pl\"uckers are positive. Lower-dimensional cells can be described by fixing some Pl\"uckers to be zero, and the rest positive. The positive Grassmannian is an extremely interesting object, with many fascinating combinatorial properties; e.g. it has a topology of the ball with the facets being labeled by permutations. We refer the reader to \cite{Postnikov:2006kva,LusTPFlag,RietschCell,Wil-enumeration,GKL} for a more general discussion. 

The boundary measurement procedure takes a graph with the edges (or faces) labeled by variables $\alpha_i$, and produces a matrix $C(\alpha_i)$. As the $\alpha_i$ range over all positive real numbers, $C(\alpha_i)$ ranges over all elements of the associated positroid cell. In other words, boundary measurement gives a parametrization of the cell. Many different graphs will give a parametrization of the same cell. There are two identity moves that change the diagram but do not change the cell (and also do not change the associated differential form).
\begin{equation}
    \begin{gathered}
        \vcenter{\hbox{\onshellgraphnolabel{ {(0.3,0.3)/1, (2.7,0.3)/4, (2.7,2.7)/3, (0.3,2.7)/2,(2,3.2)/X} }{ {(1,1)/W1, (2,2)/W2} }{ {(1,2)/B1, (2,1)/B2} }{ {{1,W1},{W1,B1},{W1,B2},{B1,2},{W2,B1},{3,W2},{W2,B2},{B2,4}} }{1}}} \hspace{-1em} \longleftrightarrow \hspace{-1em} \vcenter{\hbox{\onshellgraphnolabel{ {(0.3,0.3)/1, (2.7,0.3)/4, (2.7,2.7)/3, (0.3,2.7)/2,(2,3.2)/X} }{ {(1,2)/B1, (2,1)/B2} }{ {(1,1)/W1, (2,2)/W2} }{ {{1,W1},{W1,B1},{W1,B2},{B1,2},{W2,B1},{3,W2},{W2,B2},{B2,4}} }{1}}} \\
        \text{square move}
    \end{gathered}
    \hspace{+2em}
    \begin{gathered}
    \vcenter{\hbox{\onshellgraphnolabel{ {(-0.7,0.3)/1, (2.7,0.3)/4, (2.7,2.7)/3, (-0.7,2.7)/2,(1,3.2)/5} }{ {(0,1)/W1, (2,2)/W2, (1,2.5)/W3} }{ {(0,2)/B1, (2,1)/B2} }{ {{1,W1},{W1,B1},{W1,B2},{B1,2},{W3,B1},{3,W2},{W3,W2},{W2,B2},{B2,4},{W3,5}} }{1}}} \longleftrightarrow   \vcenter{\hbox{\onshellgraphnolabel{ {(0.3,0.3)/1, (2.7,0.3)/4, (3.2,1.8)/3, (0.3,2.7)/2,(1.8,3.2)/5} }{ {(1,1)/W1, (2,2)/W2,(2.5,2.5)/W3} }{ {(1,2)/B1, (2,1)/B2} }{ {{1,W1},{W1,B1},{W1,B2},{B1,2},{W2,B1},{W2,B2},{B2,4},{W2,W3},{W3,5},{W3,3}} }{1}}} \\
    \text{merge/expand move}
    \end{gathered}
\end{equation}
If boundary measurement is injective, the plabic graph is called \emph{reduced}. A graph is reduced if and only if doing identity moves to it never produces a bubble, shown below.

\begin{equation}\label{eqn:bubble}
\vcenter{\hbox{\begin{tikzpicture}
  \draw[very thick,black] (0,0) circle (0.5);
  \draw[very thick,black]  (-1.5,0) -- (-0.5,0);
  \draw[very thick,black]  (0.5,0) -- (1.5,0);
  \filldraw[fill=white,draw=black] (-0.5,0) circle (0.2); 
  \filldraw[fill=black,draw=black] ( 0.5,0) circle (0.2); 
\end{tikzpicture}}}
\end{equation}

When a graph corresponds to the top cell, the matrix $C(\alpha_i)$ has all Pl\"ucker coordinates positive when $\alpha_i >0$. Lower-dimensional cells can be obtained from the top cell by setting certain edge variables to zero, which removes the corresponding internal edges in the graph and also makes some Pl\"ucker coordinates zero.

Note that the parameters of the positive Grassmannian $n$, $k$ are encoded respectively by the number of external legs of the plabic graph, and 
\begin{equation}
k = 2n_B + n_W - n_I
\end{equation}
where $n_B$, $n_W$ are the numbers of black and white vertices and $n_I$ is the number of internal edges of the graph. 

The connection between the mathematics of plabic graphs and on-shell diagrams in physics is not just at the graphical level. Very surprisingly, the superfunction associated with a planar on-shell diagram in ${\cal N}=4$ SYM theory can be obtained from the plabic graph and associated boundary measurement $C$-matrix, as long as the graph is reduced. We start with a natural logarithmic form $\omega_G$ for a plabic graph,
\begin{equation}
\omega_G = \frac{d\alpha_1}{\alpha_1}\frac{d\alpha_2}{\alpha_2} \dots \frac{d\alpha_m}{\alpha_m}
\end{equation}
where $\alpha_1, \dots, \alpha_m$ are the edge variables, and $m$ is the dimension of the associated cell in $G_+(k,n)$. The form $\omega_G$ can be also written in terms of maximal minors of the $C$-matrix as
\begin{equation}
\omega_G = \oint \frac{dC}{(12\dots k)(23\dots k{+}1)\dots (n1\dots k{-}1)}
\end{equation}
where the $GL(k)$ invariant measure on the Grassmannian is
\begin{equation}
dC \equiv \frac{d^{k{\times}n}C}{{\rm GL}(k)} \label{measure}
\end{equation}
and the contour is taken as the $k(n{-}k)-m$ dimensional residue. This form is the \emph{canonical form} of the cell, in the sense of positive geometries \cite{Arkani_Hamed_2017}.

To obtain the superfunction ${\cal F}_G$ we decorate $\omega_G$ with certain delta functions and integrate over all edge variables. The delta functions link the edge variables to kinematical variables $\lambda_i$, $\widetilde{\lambda}_i$ for the scattering process. Using
\begin{equation}
\int dx\,f(x) \delta(x-a) = f(a)
\end{equation}
we write the formula for the superfunctions ${\cal F}_G$ as 
\begin{equation}\label{eqn:superfunction}
{\cal F}_G = \int \frac{d\alpha_1}{\alpha_1}\frac{d\alpha_2}{\alpha_2} \dots \frac{d\alpha_m}{\alpha_m}\,\delta^{k{\times}2}(C\cdot\widetilde{\lambda})\,\delta^{(n{-}k){\times} 2}(C^\perp{\cdot}{\lambda})\,\delta^{k{\times}4}(C\cdot\widetilde{\eta}).
\end{equation}
The first two delta functions have a simple interpretation as a linearization of the momentum conservation. In terms of spinor helicity variables, this is the orthogonality of the $\lambda$ and $\widetilde{\lambda}$ planes in the $n$-dimensional index space (of $n$ external particles). 
\begin{equation}
\sum_{i=1}^n p_i^\mu = 0 \quad \leftrightarrow \quad \lambda\cdot\widetilde{\lambda} = 0
\end{equation}
This is a constraint on the spinor-helicity variables which is encoded in the delta function $\delta(\lambda\cdot\widetilde{\lambda})$. However, this is a quadratic condition that we can linearize using a $k$-dimensional plane $C$. We demand that the $C$ plane contains the $\lambda$ plane and is orthogonal to the $\widetilde{\lambda}$ plane, 
\begin{equation}
\raisebox{-15mm}{\includegraphics[trim={0cm 0cm 0cm 0cm},clip,scale=1.1]{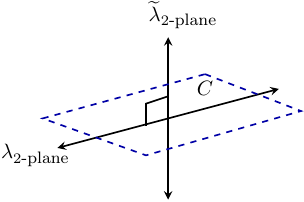}}.
\end{equation}
This automatically forces the orthogonality of both 2-planes and hence the momentum conservation. These geometric constraints are then encoded in the two delta functions 
\begin{equation}
\delta^{k{\times}2}(C\cdot\widetilde{\lambda})\,\delta^{(n{-}k){\times} 2}(C^\perp{\cdot}{\lambda})
\end{equation}
where the $C^\perp$ is the $(n{-}k)$-dimensional plane that contains the $\widetilde{\lambda}$ plane and is orthogonal to $C$. These two sets of conditions impose more than 4 conditions (orthogonality of $\lambda$, $\widetilde{\lambda}$ planes = momentum conservation). It imposes additional $2n{-}4$ constraints which localize $2n{-}4$ edge variables $\alpha_i$ as functions of spinor helicity variables $\alpha_i^\ast=\alpha_i(\lambda_j,\widetilde{\lambda}_k)$,
\begin{equation}\label{eqn:delta}
\delta^{k{\times}2}(C\cdot\widetilde{\lambda})\,\delta^{(n{-}k){\times} 2}(C^\perp{\cdot}{\lambda}) = \delta^{2{\times}2}(\lambda\cdot\widetilde{\lambda})\,\prod_{i{=}1}^{2n{-}4}\delta(\alpha_i-\alpha_i^\ast)
\end{equation}
For the reduced on-shell diagram associated to a cell of dimension $m=2n{-}4$ in the positive Grassmannian $G_+(k,n)$, all edge variables are determined as functions in kinematics. Then the superfunction $\mathcal{F}_G$ is a \emph{leading singularity}. The additional fermionic delta function $\delta^{k{\times}4}(C\cdot\widetilde{\eta})$ does not impose any more constraints, but should be understood as a polynomial in the Grassmann variables $\widetilde{\eta}_i$,
\begin{equation}
\delta^{k{\times}4}(C\cdot\widetilde{\eta}) = \delta^{2{\times}4}({\cal Q})\,\delta^{(k{-}2){\times4}}(\Xi)
\end{equation}
where $\delta^{(k{-}2){\times4}}(\Xi)\equiv \Xi^4$, where $\Xi$ is the homogenous polynomial of degree $k{-}2$ in the Grassmann variables $\widetilde{\eta}_k$. The leading singularity on-shell diagrams are the most prominent objects which play a crucial role in the scattering amplitudes. Note that if $m<2n{-}4$, some of the edge variables are unfixed -- in the physical picture, these are parameters of loop momenta which are not localized on cuts. If $m>2n{-}4$ there are more delta functions than edge variables, and apart from solving for all edge variables we get additional delta functions on $\lambda$, $\widetilde{\lambda}$ (beyond momentum conservation). 

Our case of interest is MHV amplitudes, where $k=2$. The top cell of $G_+(2,n)$ has dimension $2n{-}4$, hence the associated on-shell diagram is a leading singularity. The geometric constraint of the delta functions \eqref{eqn:delta} identifies the $C$-plane with the $\lambda$-plane as this is an unique 2-plane which contains a fixed 2-plane $\lambda$. So we have
\begin{equation}
C = \lambda
\end{equation}
and the $(2{\times}2)$ minors of the $C$ matrix $(ij)$ are identified with the angle brackets as 
\begin{equation}
(ij) = \langle ij\rangle \qquad \langle ij\rangle = \epsilon_{ab} \lambda_a^{(i)}\lambda_b^{(j)}.
\end{equation}
The canonical form on the top cell of $G_+(2,n)$ is given by
\begin{equation}
\omega_G = \frac{dC}{(12)(23)(34)\dots (n{-}1\,n)(n1)}
\end{equation}
where $dC$ is defined in (\ref{measure}) and no residue is taken because $C$ is a top cell. The associated superfunction for the on-shell diagram is
\begin{equation}
{\cal F}_G = \frac{\delta(P)\,\delta({\cal Q})}{\langle 12\rangle\langle 23\rangle\langle 34\rangle \dots \langle n1\rangle}.
\end{equation}
We can talk about three different objects: the canonical form $\omega_G$, the canonical function $f_G$ (which is just $\omega_G$ without a measure) and the superfunction $\mathcal{F}_G$. For $k=2$, the differences are unimportant, and hence it is enough to consider only the canonical function
\begin{equation} \label{eq:PT-top-cell}
f_G = \frac{1}{(12)(23)(34)\dots(n1)}.
\end{equation}
In the remainder of the paper, we will consider canonical functions almost exclusively. We draw the reader's attention to some particularly pertinent points.
\begin{itemize}
\item For each $n$ there is only {\bf one} cell of $G_+(2,n)$ with canonical function as in \eqref{eq:PT-top-cell}, which is the top cell. This cell has many different on-shell diagrams, related by the identity moves. Examples of these on-shell diagrams for $n=4,5$ are
\begin{equation}
\vcenter{\hbox{ \onshellgraph{ { (1.5,3.2)/{$G_+(2,4)$},(0.3,0.3)/1, (2.7,0.3)/4, (2.7,2.7)/3, (0.3,2.7)/2} }{ {(1,2)/W1, (2,1)/W2} }{ {(1,1)/B1, (2,2)/B2} }{ {{1,B1},{W1,B1},{W1,B2},{W1,2},{W2,B1},{3,B2},{W2,B2},{W2,4}} }{1} } } \hspace{2em} \vcenter{\hbox{ \onshellgraph{ {(2.5,3.2)/{$G_+(2,5)$},(0.3,0.3)/1,(0.3,2.7)/2,(4.7,2.7)/3,(4.7,0.3)/4,(2,0.3)/5} }{ {(1,2)/W1,(2,1)/W2,(4,2)/W3,(3,1)/W4} }{ {(1,1)/B1,(3,2)/B2,(4,1)/B3} }{ {{1,B1},{2,W1},{3,W3},{4,B3},{5,W2},{B1,W1},{B1,W2},{B2,W1},{B2,W4},{B2,W3},{B3,W4},{B3,W3},{W2,W4}} }{1} }}
\end{equation}
All diagrams for the top cell give rise to the same canonical function. That is, the canonical function is also invariant under identity moves.

\item Not every on-shell diagram with $m=2n{-}4$ edge variables is a leading singularity diagram---only the reduced ones are. Consider the following non-reduced diagram
\[
    \begin{aligned}
       \vcenter{\hbox{ \onshellgraph{ { (0.3,0.3)/1, (2.7,0.3)/4, (2.7,2.7)/3, (0.3,2.7)/2} }{ {(1,2)/W1, (2,2)/W2} }{ {(1,1)/B1, (2,1)/B2} }{ {{1,B1},{W1,B1},{W1,W2},{W1,2},{W2,B2},{3,W2},{B2,4},{B1,B2}} }{1} } }  \rightarrow  \qquad \begin{tikzpicture}[baseline=(current bounding box.center)]
   \draw[very thick,black] (0,0) circle (0.5);
  \filldraw[fill=white,draw=black] (0,.6) circle (0.2); 
  \filldraw[fill=black,draw=black] ( 0,-.6) circle (0.2); 

  \draw[very thick,black] (-0.1,.75) -- (-0.5,1.3); 
  \draw[very thick,black] (0.1,.75) -- ( 0.5,1.3); 
  \draw[very thick,black] (-0.1,-.75) -- (-0.5,-1.3); 
  \draw[very thick,black] (0.1,-.75) -- ( 0.5,-1.3); 

  \node at (-0.7,1.5) {$2$};
  \node at (0.7,1.5) {$3$};
  \node at (-0.7,-1.5) {$1$};
  \node at ( 0.7,-1.5) {$4$};
\end{tikzpicture}
    \end{aligned}
\]
where we used the merge-expand move to expose the bubble. Because the dimension is still $m=2n{-}4=4$, the presence of the bubble leads to an edge variable $\alpha_4$ which is not solved for from delta functions, leading to an `excess' delta function which imposes the below constraints on external kinematics
\begin{align}
{\cal F}_G &= \int \frac{d\alpha_1}{\alpha_1}\frac{d\alpha_2}{\alpha_2}\frac{d\alpha_3}{\alpha_3}\frac{d\alpha_4}{\alpha_4} \delta^{2{\times}2}(C{\cdot}\widetilde{\lambda})\,\delta^{2{\times}2}(C^\perp{\cdot}{\lambda})\,\delta^{2{\times}4}(C{\cdot}\widetilde{\eta})\nonumber\\
&= \int \frac{d\alpha_4}{\alpha_4} \frac{\delta(P)\,\delta({\cal Q})}{\la 23\ra\la 34\ra\la 41\ra}\delta(\la 12\ra).
\end{align}
This is in contrast with the reduced diagram where the edge variable $\alpha_4$ would be solved for from the delta function.
\end{itemize}

\subsection*{Boundary structure of $G_+(2,n)$}

It will be useful later to recall the following interpretation of the cells of $G_+(2,n)$ as point configurations on the projective line $\mathbb{P}^1$. Suppose $C$ is in the top cell, so has all positive Pl\"ucker coordinates. If we rescale each column of $C$ by a coefficient $c_i>0$ to make the first entry 1, we obtain
\begin{equation}
C' = \left(\begin{array}{cccccc} 1 & 1 & 1 & \dots & 1 & 1\\ x_1 & x_2 & x_3 & \dots & x_{n{-}1} & x_n \end{array}\right) 
\end{equation}
The minor $(ij)$ of the matrix for $i<j$ is
\begin{equation}
(ij) = c_ic_j (x_j-x_i) > 0 \quad\mbox{and hence}\quad x_j>x_i
\end{equation}
Therefore, we can interpret $C'$ as a configuration of $n$ ordered points on $\mathbb{P}^1$, as below.
\begin{center}
    \begin{tikzpicture}
        \draw[line width=1pt] (0,0) -- (6,0);

        \node[label={[yshift=0.06cm]$1$},circle,fill,inner sep=2pt,outer sep=0pt] at (1,0) {};
        \node[label={[yshift=0.06cm]$2$},circle,fill,inner sep=2pt,outer sep=0pt] at (2,0) {};
        \node[label={[yshift=0.06cm]$\cdots$}] at (3,0) {};
        \node[label={[yshift=0.06cm]$n-1$},circle,fill,inner sep=2pt,outer sep=0pt] at (4,0) {};
        \node[label={[yshift=0.06cm]$n$},circle,fill,inner sep=2pt,outer sep=0pt] at (5,0) {};
    \end{tikzpicture}
\end{center}
The only codimension 1 boundaries of the space correspond to merging of two adjacent points, $x_i=x_{i{+}1}$ (note that this includes $x_n=x_1$). There are two types of codimension-2 boundaries of the space, e.g. if $(12)=0$, we can set $(23)=0$ by either merging all three points $x_1=x_2=x_3$ or removing the point $2$ completely by setting $c_2=0$.
\begin{center}
    \begin{center}
    \begin{tikzpicture}
        \draw[line width=1pt] (0,0) -- (6,0);
        \node[label={[yshift=0cm]$1,2$},circle,fill,inner sep=2pt,outer sep=0pt] at (1,0) {};
        \node[label={[yshift=0.06cm]$3$},circle,fill,inner sep=2pt,outer sep=0pt] at (2,0) {};
        \node[label={[yshift=0.06cm]$\cdots$}] at (3,0) {};
        \node[label={[yshift=0.06cm]$n-1$},circle,fill,inner sep=2pt,outer sep=0pt] at (4,0) {};
        \node[label={[yshift=0.06cm]$n$},circle,fill,inner sep=2pt,outer sep=0pt] at (5,0) {};

        \draw[line width = 1pt,->] (2,-0.25) -- (1,-1);
        \draw[line width = 1pt,->] (4,-0.25) -- (5,-1);

        \draw[line width=1pt] (-4.5,-1.5) -- (1.5,-1.5);
        \node[label={[yshift=0cm]$1,2,3$},circle,fill,inner sep=2pt,outer sep=0pt] at (-3,-1.5) {};
        \node[label={[yshift=0.06cm]$\cdots$}] at (-1.5,-1.5) {};
        \node[label={[yshift=0.06cm]$n-1$},circle,fill,inner sep=2pt,outer sep=0pt] at (-0.5,-1.5) {};
        \node[label={[yshift=0.06cm]$n$},circle,fill,inner sep=2pt,outer sep=0pt] at (0.5,-1.5) {};

        \draw[line width=1pt] (4.5,-1.5) -- (10.5,-1.5);
        \node[label={$1$},circle,fill,inner sep=2pt,outer sep=0pt] at (5.5,-1.5) {};
        \node[label=$3$,circle,fill,inner sep=2pt,outer sep=0pt] at (6.5,-1.5) {};
        \node[label=$\cdots$] at (7.5,-1.5) {};
        \node[label=$n-1$,circle,fill,inner sep=2pt,outer sep=0pt] at (8.5,-1.5) {};
        \node[label=$n$,circle,fill,inner sep=2pt,outer sep=0pt] at (9.5,-1.5) {};
    \end{tikzpicture}
\end{center}
\end{center}
This simple picture provides a complete stratification of the space, all the way down to the lowest zero-dimensional boundaries which correspond to two points on the line (all other points are removed). At the level of on-shell diagrams, going to the boundary of the positive Grassmannian and taking the residue of $f_G$ corresponds to erasing an edge.

\section{Non-planar On-shell Diagrams}\label{sec:non-planar}

On-shell diagrams are well-defined objects even when the legs of the graph (cuts of loop amplitude) do not form a planar diagram. Here, a planar diagram is one that can be drawn in the disc with no crossing edges \emph{and} with the external legs on the boundary of the disk, as in the example below.
\begin{equation}
    \begin{aligned}
       \vcenter{\hbox{ \onshellgraph{ { (0.3,-0.4)/5, (2.7,0.3)/4, (2.7,2.7)/3, (0.3,3.4)/2,(-2.1,2.7)/1,(-2.1,0.3)/6} }{ {(1,2)/W1, (2,1)/W2,(0.3,2.7)/W3,(-0.4,1)/W4,(0.3,0.3)/W5,(-1.4,2.0)/W6} }{ {(1,1)/B1, (2,2)/B2,(-0.4,2.0)/B3,(-1.4,1.0)/B4} }{ {{W5,B1},{W1,B1},{W1,B2},{W2,B2},{3,B2},{W3,2},{W2,4},{B1,W2},{W1,W3},{B3,W3},{W5,5},{W5,W4},{W4,B3},{W6,B3},{W6,B4},{B4,W4},{W6,1},{B4,6}} }{1} } } 
    \end{aligned}
\end{equation}
Note that this notion of planarity differs from the standard definition in graph theory, where there are no external legs. 

In the physics setup, each on-shell diagram is dressed with a \emph{color factor} which can be written as a product of structure constants of the $SU(N)$ color group. We can also write same color functions in terms of traces of generators of $SU(N)$. In the planar $N\rightarrow\infty$ limit, the single traces dominate, while multiple traces are subleading. Only planar on-shell diagrams contain single trace contributions, while non-planar diagrams start the expansion at a subleading order (multiple trace). This limit is also called the \emph{planar limit} as the planar diagrams dominate for large $N$. The leading singularity at finite $N$ is given by the on-shell diagram where each vertex is dressed by the $SU(N)$ structure constant $f^{abc}$. The result is a product of the group part (given by the product of structure constants) and the kinematical function ${\cal F}_G$. For large $N$ we can expand the group part in powers of $1/N$ where the dominating piece is a single trace of $SU(N)$ generators $T^a$, while the subleading pieces are multiple traces (products of traces). For the planar diagrams we then get
\begin{equation}
 \vcenter{\hbox{ \onshellgraph{ {(0.3,0.8)/{$f^{a_6 a_2 a_7}$},(-0.5,1.5)/2, (2.6,3.1)/{$f^{a_7 a_3 a_8}$}, (1.75,4)/3, (3.5,2.65)/{$f^{a_8 a_9 a_{10}}$},(5.3,2.2)/{$f^{a_{10} a_4 a_{11}}$},(5.1,2.9)/4,(5.3,0.8)/{$f^{a_{11} a_5 a_{12}}$},(3.5,0.4)/{$f^{a_{12} a_9 a_{13}}$},(2.6,-0.1)/{$f^{a_{13} a_1 a_6}$},(5.1,0.1)/5,(1.75,-1)/1} }{ {(1.75,3)/W1,(1.75,0)/W2,(4.4,2.2)/W3,(3,0.8)/W4} }{ {(0.5,1.5)/B1,(3,2.2)/B2,(4.4,0.8)/B3} }{ {{2,B1},{3,W1},{4,W3},{5,B3},{1,W2},{B1,W1},{B1,W2},{B2,W1},{B2,W4},{B2,W3},{B3,W4},{B3,W3},{W2,W4}} }{1} }}  = \left[{\rm Tr}(T^{a_1}T^{a_2}\dots T^{a_{n}})+\dots\right] \times {\cal F}_G
\end{equation}
where we sum over internal labels $a_6,{\dots},a_{13}$. The non-planar diagrams are subleading at large $N$, they only contain multiple traces. In other words, they are suppressed by ${\cal O}(1/N)$ or higher when compared to their planar counterparts
\begin{equation}
 \vcenter{\hbox{ \onshellgraph{ {(0.5,-1.2)/1,(1.5,-1.1)/$f^{a_6 a_1 a_7}$,(0,1.5)/$f^{a_7 a_2 a_8}$,(-0.5,1)/2,(1.7,3.)/$f^{a_8 a_3 a_9}$,(4,-0.5)/$f^{a_{12} a_{13} a_{6}}$,(4,2.5)/$f^{a_9 a_{10} a_{11}}$,(5.2,1.5)/$f^{a_{11} a_{5} a_{12}}$,(2.2,1.5)/$f^{a_{10} a_{4} a_{13}}$,(0.5,3.2)/3,(1.8,1)/4,(5.2,1)/5} }{ {(1.2,-0.5)/W1,(1.2,2.5)/W2,(3.5,2)/W3,(3.5,0)/W4} }{ {(0.5,1)/B1,(2.5,1)/B2,(4.5,1)/B3} }{ {{1,W1},{B1,2},{W2,3},{W1,B1},{B1,W2},{W2,W3},{W3,B3},{W3,B2},{B2,W4},{B3,W4},{W1,W4},{B2,4},{B3,5}} }{1} }} =  {\cal O}\left(\frac{1}{N}\right) \times {\cal F}_G
\end{equation}
However, for finite $N$ we have to consider both planar and non-planar on-shell diagrams, as they all represent valid cuts of the amplitude. Planar diagrams come with an ordering of external legs on the disc, but non-planar diagrams do not have any natural ordering. 

Non-planar on-shell diagrams provide a way into the problem of ${\cal N}=4$ SYM amplitudes, which is challenging for a number of reasons: difficulty in defining an unique non-planar integrand, absence of dual conformal symmetry, no amplituhedron construction, no powerful symbol methods and no known integrability. On the other, the principles of generalized unitarity are universal, and non-planar on-shell diagrams are non-planar amplitudes evaluated on the cuts. Hence, they provide a perfect window to study non-planar ${\cal N}=4$ SYM amplitudes in a setup which allow one to more easily search for new symmetries and connections to mathematics. 

\subsection{First look: Grassmannian formula}

In physics, non-planar on-shell diagrams are defined as a product of three-point amplitudes in the same way as their planar counterparts. The result is a superfunction with the same $\delta(P)\delta(Q)$ conserving delta functions, obtained as in \ref{eqn:leading_singularity},
\begin{equation} \label{eqn:5ptnon-planar}
 \vcenter{\hbox{ \onshellgraph{ {(1.0,-.7)/1,(0,1)/2,(1.0,2.7)/3,(2.3,1)/4,(4.7,1)/5} }{ {(1.7,0)/W1,(1.7,2)/W2,(3.5,1.5)/W3,(3.5,0.5)/W4} }{ {(1,1)/B1,(3,1)/B2,(4,1)/B3} }{ {{1,W1},{B1,2},{W2,3},{W1,B1},{B1,W2},{W2,W3},{W3,B3},{W3,B2},{B2,W4},{B3,W4},{W1,W4},{B2,4},{B3,5}} }{1} }}
= \frac{\la13\ra\,\delta(P)\delta({\cal Q})}{\la12\ra\la23\ra\la 14\ra\la 15\ra\la 34\ra\la 35\ra}
\end{equation}
but no cyclic symmetry due to the loss of ordering. The general amalgamation prescription allows us to evaluate any non-planar on-shell diagram as the kinematical function (and also the color factor to get the complete dressed formula). For MHV leading singularity diagrams, the superconformal invariance dictates (as in the planar sector) that the result is a function of angle brackets $\la{.}{.}\ra$ only, 
\begin{equation}
{\cal F}_G = \frac{{\cal N}}{\prod_{ij}\la ij\ra}\,\delta(P)\delta({\cal Q})
\end{equation}
where the only poles are physical $\la ij\ra$ (not true for NMHV where spurious poles can appear in leading singularities, even in the planar sector). The numerator ${\cal N}$ is a polynomial in angle brackets $\la{.}{.}\ra$ which guarantees correct little group and mass dimension of $f_G$. Note that there is only one planar MHV leading singularity for each $n$, for which the numerator is just ${\cal N}=1$. 

The connection between non-planar on-shell diagrams and the Grassmannian retains some features from the planar case. In a non-planar graph, we can assign edge variables $\alpha_j$ and construct the $C$-matrix using a boundary measurement. The matrix $C(\alpha)$ no longer has any special positivity properties for $\alpha_j>0$ and there is no connection to cells in $G_+(k,n)$ but the $C$-matrix does label a subset of the Grassmannian $G(k,n)$. The amalgamation procedure (and its independence on the planarity of the graph) ensures that the canonical function $f_G$ can be obtained by the same formula \eqref{eqn:superfunction} as in the planar case. For the MHV case, $G(2,n)$, the delta functions again identify the $C$-matrix with the $\lambda$-matrix, $C=\lambda$, and each on-shell diagram is then given by the rational function $f_G$ on the Grassmannian $G(2,n)$. For the 5pt example above, analogously to the planar case (\ref{eq:PT-top-cell}),
\begin{equation} \label{eqn:5ptfunction}
f_G =  \frac{(13)}{(12)(23)(14)(15)(34)(35)}.
\end{equation}
Note that the role of edge variables and minors $(ij)$ are very similar to the planar case: $\alpha_k=0$ sends some minor $(ij)=0$ and erases an edge in an on-shell diagram. This strongly suggests there should be a positive geometry, like the positive Grassmannian $G_+(k,n)$ for planar on-shell diagrams, which is associated for each non-planar diagram.

\subsection{MHV diagrams and the triplet formula}\label{sec:triplet-formula}

The prescription to calculate $f_G$ for a leading singularity using edge variables and boundary measurements works for any on-shell diagram, planar or non-planar, for any $k$, as it follows from the purely local amalgamation procedure. MHV on-shell diagrams are more special as explored in detail in \cite{Arkani-Hamed:2014bca}. In particular, it was shown that there is an important shortcut to calculate $f_G$ using a \emph{triplet formula}. For an MHV leading singularity on-shell diagram with $n$ external legs, $n_W$ white vertices, $n_B$ black vertices, and $n_I$ internal edges, 
\begin{equation}
k = 2n_B + n_W - n_I = 2.
\end{equation}
Combining this with two expressions $n_W+n_B=3n-8$, $n_I=n_B+n_W+n-4$ which are valid for any leading singularity graph, we get
\begin{equation}
n_B = n-2.
\end{equation}
It can be shown that each of these black vertices are attached to exactly three external legs via white vertices. 
\begin{equation}
    \vcenter{\hbox{\begin{tikzpicture}
        \node(1) at (1,-0.7) {1};
        \node(2) at (0,1) {2};
        \node(3) at (1,2.7) {3};
        \node(4) at (2.3,1) {4};
        \node(5) at (4.7,1) {5};

        \node(W1)[draw,circle,fill=white] at (1.7,0) {};
        \node(W2)[draw,circle,fill=white] at (1.7,2) {};
        \node(W3)[draw,circle,fill=white] at (3.5,1.5) {};
        \node(W4)[draw,circle,fill=white] at (3.5,0.5) {};

        \node(B1)[draw,circle,fill] at (1,1) {};
        \node(B2)[draw,circle,fill] at (3,1) {};
        \node(B3)[draw,circle,fill] at (4,1) {};

        \draw[line width=1pt] (B1) -- (W1) -- (1);
        \draw[line width=1pt] (B1) -- (2);
        \draw[line width=1pt] (B1) -- (W2) -- (3);
        \draw[line width=1pt] (B2) -- (4);
        \draw[line width=1pt] (B2) -- (W3) -- (W2);
        \draw[line width=1pt] (B2) -- (W4) -- (W1);
        \draw[line width=1pt] (B3) -- (5);
        \draw[line width=1pt] (B3) -- (W3);
        \draw[line width=1pt] (B3) -- (W4);

        \draw[red,->,line width=1pt] (1,0.93) -- (0.2,0.93);
        \draw[red,->,line width=1pt] (0.95,0.9) .. controls (1.7,0) and (1.7,0) .. (1.1,-0.5);
        \draw[red,->,line width=1pt] (0.95,1.1) .. controls (1.7,2) and (1.7,2) .. (1.1,2.5);

        \draw[ForestGreen,->,line width=1pt] (3,0.93) -- (2.5,0.93);
        \draw[ForestGreen,->,line width=1pt] (2.95,0.9) .. controls (4.1,0.3) and (1.7,0.2) .. (1.2,-0.6);
        \draw[ForestGreen,->,line width=1pt] (2.95,1.1) .. controls (4.1,1.7) and (1.7,1.8) .. (1.2,2.6);

        \draw[blue,->,line width=1pt] (4,0.93) -- (4.5,0.93);
        \draw[blue,->,line width=1pt] (3.9,1) .. controls (3.5,0) and (1.7,0.2) .. (1.3,-0.7);
        \draw[blue,->,line width=1pt] (3.9,1) .. controls (3.5,2) and (1.7,1.8) .. (1.3,2.7);

        \node(B1)[draw,circle,fill] at (1,1) {};
        \node(B2)[draw,circle,fill] at (3,1) {};
        \node(B3)[draw,circle,fill] at (4,1) {};
    \end{tikzpicture}}}
    \hspace{+3em}
    T = \{{\color{red} (123)},{\color{ForestGreen} (134)},{\color{blue} (135)}\}
\end{equation}
Thus, labeling the black vertices with $1, \dots, n-2$, an MHV leading singularity diagram is encoded by $n-2$ triplets $\tau_i$ consisting of the three external legs black vertex $i$ is attached to. We can see how this works for the following examples for $n=4,5,6$ 
\[
    \begin{aligned}
       \vcenter{\hbox{ \onshellgraph{ { (0.3,0.3)/1, (2.7,0.3)/4, (2.7,2.7)/3, (0.3,2.7)/2} }{ {(1,2)/W1, (2,1)/W2} }{ {(1,1)/B1, (2,2)/B2} }{ {{1,B1},{W1,B1},{W1,B2},{W1,2},{W2,B1},{3,B2},{W2,B2},{W2,4}} }{1} } } \hspace{3 cm}
       \vcenter{\hbox{ \onshellgraph{ {(1.0,-.7)/1,(0,1)/2,(1.0,2.7)/3,(2.3,1)/4,(4.7,1)/5} }{ {(1.7,0)/W1,(1.7,2)/W2,(3.5,1.5)/W3,(3.5,0.5)/W4} }{ {(1,1)/B1,(3,1)/B2,(4,1)/B3} }{ {{1,W1},{B1,2},{W2,3},{W1,B1},{B1,W2},{W2,W3},{W3,B3},{W3,B2},{B2,W4},{B3,W4},{W1,W4},{B2,4},{B3,5}} }{1} }} \\
     \makebox[0pt][l]{\kern-30em\ensuremath{\displaystyle T} =~\{(124),(234)\} } \makebox[0pt][l]{\kern-13em\ensuremath{\displaystyle T} =~\{(123),(134),(135)\} }
    \end{aligned}
\]

\[
\begin{aligned}
       \vcenter{\hbox{ \onshellgraph{ { (4.15,1.73)/5, (-0.58,4.46)/4, (0.65,0.35)/3, (2.25,1.91)/2,(0.2,2.53)/1,(-0.58,-1)/6} }{ {(0,0)/W1, (0,3.46)/W2,(3,1.73)/W3,(1,2.53)/W4,(0.2,1.13)/W5,(1.8,1.13)/W6} }{ {(-1,1.73)/B1, (2,3.46)/B2,(2,0)/B3,(1,1.73)/B4} }{ {{W1,B1},{B1,W2},{W2,B2},{W3,B3},{B3,W1},{B2,W3},{W2,4},{W3,5},{W1,6},{W4,1},{W4,B4},{B1,W5},{W5,B4},{W6,B4},{W4,B2},{W6,B3},{W5,3},{W6,2}} }{0.8} } } 
 \\
     \makebox[0pt][l]{\kern-15em\ensuremath{\displaystyle T} = \{(123),(145),(256),(346)\}}
\end{aligned}
\]

The triplets allow us to construct a particular parametrization of the $(n{-}2)\times n$ matrix $C^\perp(\Vec{\alpha^*})$ where $\alpha^*$ are evaluated in terms of $\lambda$ spinors. Each triplet corresponds to a row in this matrix, and three non-zero entries are given by the labels in the triplet. For the triplet $(abc)$ we put the bracket $\la bc\ra$ in the $a^{\rm th}$ column, $\la ac\ra$ in the $b^{\rm th}$ column and $\la ab\ra$ in the $c^{\rm th}$ column. We do this operation for all triplets, giving us the parametrizaton of the matrix $C^\perp$. For the simplest 4pt example
\[
    \begin{aligned}
       \vcenter{\hbox{ \onshellgraph{ { (0.3,0.3)/1, (2.7,0.3)/4, (2.7,2.7)/3, (0.3,2.7)/2} }{ {(1,2)/W1, (2,1)/W2} }{ {(1,1)/B1, (2,2)/B2} }{ {{1,B1},{W1,B1},{W1,B2},{W1,2},{W2,B1},{3,B2},{W2,B2},{W2,4}} }{1} } } \ C^\perp = \begin{pmatrix}
 \langle24\rangle & \langle14\rangle & 0 & \langle 12 \rangle\\
0 & \langle 3 4 \rangle & \langle 24 \rangle & \langle 23 \rangle 
\end{pmatrix}
    \end{aligned}
\]
\begin{equation}\label{eq:4pt}
    T = \{ (124),(234)\}
\end{equation}
Next, we construct a matrix $M_{ab}$ by deleting two columns $a,b$. Choosing $a,b = 1,3$ gives
\begin{equation} M_{13} = 
\begin{pmatrix}
 \langle14\rangle & \langle 12 \rangle\\
 \langle 3 4 \rangle & \langle 23 \rangle 
\end{pmatrix}.
\end{equation}
Now we calculate the determinant of this matrix divided by $\la ab\ra$, 
\begin{equation}
     \frac{\det(M_{ab})}{\la ab\ra}.
\end{equation}
This quantity does not depend on the choice of $a,b$. We also note that $\det(M_{ab})$ can be written to have a factor of $\la ab \ra$, so dividing by $\la ab \ra$ yields a polynomial. The formula for the on-shell diagram superfunction is then given by 
\begin{equation}\label{eqn:triplet_form}
    {\cal F}_G = \frac{( \det(M_{ab}) / \la ab\ra )^2}{\prod_{\tau \in T} \la\tau_1\tau_2\ra\la\tau_2\tau_3\ra\la\tau_3\tau_1\ra} \delta(P)\delta({\cal Q})
\end{equation}
where the denominator is given by the product of three poles for each triplet $(\tau_1\tau_2\tau_3)$. In the 4pt example, $\det(M_{13}) = \la 13 \ra\la 24 \ra$. The triplet $(124)$ contributes the poles $\frac{1}{\langle 12 \rangle \langle 14 \rangle \langle 24 \rangle}$ and $(234)$ contributes $\frac{1}{\langle 23 \rangle \langle 24 \rangle \langle 34 \rangle}$. Then bringing everything together, we get the superfunction
\begin{equation}
    {\cal F}_G =  \frac{\langle 13 \rangle^2 \langle 2 4 \rangle^2}{\langle 1 3 \rangle^2} \frac{\delta(P)\delta({\cal Q})}{\langle 12 \rangle \langle 14 \rangle \langle 24 \rangle \langle 23 \rangle \langle 24 \rangle \langle 34 \rangle} = \frac{\delta(P)\delta({\cal Q})}{\langle 1 2 \rangle \langle 2 3 \rangle \langle 3 4 \rangle \langle 1 4 \rangle}
\end{equation}
as expected. We can use the same prescription to compute the form for any planar or non-planar MHV on-shell diagrams. As discussed earlier, minors of the $C$-matrix are equal to $(ij)=\la ij\ra$ after imposing the delta functions. Hence, we can also here just use $(ij)$ instead of $\la ij\ra$ and work with the rational function $f_G$, rather than the superfunction ${\cal F}_G$ for each graph. And also the entries of $C^\perp$ and $M_{ab}$ will be minors $(ij)$ of the $C$ matrix. 

Note that the numerator plays a crucial role in making sure that an associated form has only logarithmic singularities. We can demonstrate it on the 5pt example \eqref{eqn:5ptnon-planar} with triplets $(123),(134),(135)$. If we choose to delete columns $1,3$ the $M_{13}$ matrix is given by (remaining columns are $2,4,5$), 
\begin{equation} M_{13} = 
\begin{pmatrix}
 \la 13\ra & 0 & 0\\
0 & \la 13\ra & 0\\
0 & 0 & \la 13\ra
\end{pmatrix}
\end{equation}
and we get
\begin{equation}
    f_G = \frac{(13)^4}{(13)^3(12)(14)(15)(23)(34)(35)} = \frac{(13)}{(12)(23)(14)(15)(34)(35)}
\end{equation}
in agreement with \eqref{eqn:5ptfunction}. Apart from the trivial cancelation of the pole $(13)^3$, the numerator $N=(13)$ also guarantees that all lower-dimensional residues of $\omega_G$ stay logarithmic. For example, if we send $(12)=0$ by setting $1=\alpha 2$, ie. columns 1 and 2 in the $C$ matrix are proportional, we get for the associated canonical form (with the measure) $\omega_G=f_G\,dC$,
\begin{equation}\label{eqn:residue_example}
 \underset{1 = \alpha 2}{\rm Res}\,\omega_G = \frac{(23)\alpha^2 d\alpha \wedge d^4 C'}{\alpha^3 (23)(24)(25)(34)(35)} = \frac{d\alpha \wedge d^4 C'}{\alpha(24)(25)(34)(35)},
\end{equation}
where $d^4C'$ is the measure on the $G(2,4)$ of the four independent columns. The numerator was crucial in canceling the double pole $\alpha^2$ in the denominator: one factor came from the measure, but the second came from $(13) \rightarrow \alpha (23)$. This is much more involved for complicated higher pole examples, but the numerator always does the `magic' to preserve logarithmic singularities. From now on, we will denote the canonical function $f_T$ just to indicate that we derive it from the triplet formula rather than the on-shell diagram, but they are obviously equal, $f_T=f_G$.

Note that there are also on-shell diagrams with the same number of edges and vertices as the MHV leading singularity diagrams, which are not described by triplets. The canonical functions for these diagrams vanish because of the constraints imposed on external kinematics. In the planar sector, these diagrams are not reduced due to the presence of internal bubbles (\ref{eqn:bubble}), but we do not have the same notion of reducedness for non-planar diagrams yet. In any case, these diagrams are not described by triplets, have vanishing canonical functions, and are not part of our discussion.

\subsection{Parke-Taylor expansion}

The canonical function $f_T$ can be also expanded in terms of elementary building blocks called Parke-Taylor factors,
\begin{equation} \label{eqn:PTdecomp}
    f_T = \sum_{\sigma} {\rm PT}(\sigma) \quad\mbox{where}\quad {\rm PT}(\sigma) = \frac{1}{(\sigma_1\sigma_2)(\sigma_2\sigma_3)(\sigma_3\sigma_4)\dots(\sigma_n\sigma_1)}
\end{equation}
where the Parke-Taylor factor is written for a given permutation $\sigma= \sigma_1 \sigma_2 \dots \sigma_n$ and the sum is over a subset of all permutations given in (\ref{PTexp}).

Note that the Parke-Taylor factor ${\rm PT}(\sigma)$ is the canonical function of the positive Grassmannian $G_+(2,n)$ where the columns are ordered using $\sigma$. In fact, the collection of all such spaces for all orderings tiles the full Grassmannian $G(2,n)$. This is also evident from the picture of points on a projective line $\mathbb{P}^1$. The positive Grassmannian $G_+(2,n)$ with a canonical ordering $123\dots n$ corresponds to a set of ordered $n$ points, as discussed in Section \ref{sec:onshell}. For any point in $G_+(2,n)$ the associated  points are ordered in some way and hence the point also belongs to one of the permuted $G_+(2,n)$ subspaces. As a result, the full Grassmannian $G(2,n)$ is tiled by all permuted $G_+(2,n)$s, which also do not overlap. 

This is no longer true for $k>2$ but $k=2$ is special because of this feature. Note that the expansion of $f_T$ in terms of Parke-Taylor factors makes manifest that all singularities of the form are logarithmic -- this is not manifest in the determinant form as the numerator conspires with the denominator to remove all non-logarithmic poles. 

The prescription how to obtain a Parke-Taylor decomposition of a canonical function $f_T$ for a general set of triplets $T$ was given in \cite{Arkani-Hamed:2014bca}. The first step is to fix the \emph{orientation}, which is a permutation of labels for each triplet, modulo cyclic ordering. For a triplet $(123)$ there are two orientations: $(123)$, $(132)$. The orientation of the triplet changes the sign of $f_T$ in \eqref{eqn:PTdecomp} as in the denominator 
\begin{equation}
    (123)\rightarrow (12)(23)(13),\qquad (132) \rightarrow (13)(32)(12) = -(12)(23)(13)
\end{equation}
while the numerator (a square) does not depend on the orientation. For a fixed orientation, $f_T$ is given by 
\begin{equation}
    f_T = \sum_{\sigma \in S_n^{(T)}} {\rm PT}(\sigma). \label{PTexp}
\end{equation}
where the sum is given over all orderings $\sigma$ which are compatible with oriented triplets $T$
\begin{equation}\label{eqn:ST_definition}
    S_n^{(T)} = \{ \sigma \in S_n : \sigma_1 = 1, ~ \forall(ijk) \in T \ i,j,k \text{ appear in that order up to rotation}\}.
\end{equation}
As an example, let us take the 5pt non-planar diagram with triplets $(123)$, $(134)$, $(135)$. For this orientation we get
\begin{equation}
f_T = {\rm PT}(12345) + {\rm PT}(12354)
\end{equation}
while for the orientation $(123)$, $(134)$, $(153)$ it is 
\begin{equation}
\widetilde{f}_T = {\rm PT}(12534) + {\rm PT}(15234).
\end{equation}
Parke-Taylor factors are not independent but satisfy Kleiss-Kuijf relations and the $U(1)$ decoupling identity,
\begin{equation}
    {\rm PT}(12534) + {\rm PT}(15234) + {\rm PT}(12345) + {\rm PT}(12354) = 0
\end{equation}
and hence $\widetilde{f}_T=-f_T$. This is a general feature of triplets: different orientations give the same expressions up to a sign.

\subsection{Towards a positive geometry}

The basic question we want to address in this paper is if there exists a positive geometry associated with a non-planar on-shell diagram. Namely, 
\begin{center}
{\it Is there a positive geometry for which $f_T$ is the canonical function?}
\end{center}
For planar on-shell diagrams (plabic graphs) the answer is positive. In that case the canonical function is a single Parke-Taylor factor ${\rm PT}(123{\dots}n)$ which is the canonical function of the positive Grassmannian $G_+(2,n)$. For non-planar on-shell diagrams, we use the Parke-Taylor expansion (\ref{PTexp}) and interpret it geometrically as the union of the corresponding spaces. The space corresponding to ${\rm PT}(\sigma)$ is a top cell of the positive Grassmannian $G_+(2,n)$ with a particular ordering $\sigma$. Note that there are many options here, both because of different orientations of triplets and different Parke-Taylor expansions, but also because of additional choices to be discussed in Section 5. 

Here as an example, we consider just a planar 4pt graph (\ref{eq:4pt}), the choice of the orientation of triplets $(123)$, $(134)$ and using (\ref{PTexp}) gives us ${\rm PT}(1234)$ as the canonical function. We can characterize the positive geometry, the top cell of $G_+(2,4)$ by the positivity conditions on minors,
\begin{equation}
(12)>0,\,(13)>0,\,(14)>0,\,(23)>0,\,(24)>0,\,(34)>0 \label{G24space1}
\end{equation}
If we now change the orientation to $(123)$, $(143)$ we get 
\begin{equation}
    f_T = {\rm PT}(1423) + {\rm PT}(1243)
\end{equation}
which is equal to ${\rm PT}(1234)$ up to a sign using the $U(1)$ decoupling identity. If we now associate geometries for each Parke-Taylor factor,
\begin{align}
&{\rm PT}(1423): (12)>0,\,(13)>0,\,(14)>0,\,(23)>0,\,(24)<0,\,(34)<0\\
&{\rm PT}(1243): (12)>0,\,(13)>0,\,(14)>0,\,(23)>0,\,(24)>0,\,(34)<0
\end{align}
the union of these spaces is the region in $G(2,4)$ given by the following inequalities:
\begin{equation}
(12)>0,\,(13)>0,\,(14)>0,\,(23)>0,\,(34)<0. \label{G24space2}
\end{equation}
Note that the sign of the minor $(24)$ is unconstrained. This space has a very different topology than the positive Grassmannian $G_+(2,4)$. As a result, we see that even for a simple planar diagram we can associate multiple spaces with its canonical function $f_T$. In this case, the space (\ref{G24space1}) is strongly preferable to than (\ref{G24space2}), but for a general non-planar diagram the situation is less clear. We illustrate this in Section \ref{sec:regions-and-geometries}.

\section{Determinant Form and Identity Moves for Non-Planar Diagrams}\label{sec:determinant-and-moves}

In this section, we investigate the triplet formula \eqref{eqn:triplet_form} for non-planar diagrams further from a more mathematical perspective. While all on-shell MHV canonical functions from diagrams must come from triplets, the converse is not quite true. Some triplets give a form which is identically zero. We would like to exclude these triplets from our consideration. With this in mind, we first give a combinatorial characterization for when $f_T$ is not the zero polynomial, using the determinantal formula. Second, we use the determinantal formula to give a factorization algorithm for $f_T$ following from work of Castravet--Tevelev \cite{Castravet:2013}. Third, and most significantly, we turn to identity moves, meaning moves on triplets which preserve the form. For the planar case, the only identity move is the square move from Section 2. For arbitrary triplets, there are additional identity moves, discovered by \cite{Castravet:2013,Cachazo:2019}, called \emph{sphere moves}. We show that in fact the only identity moves are sphere moves. In the process, we introduce a new object (``doublets") which encode the same information as the triplets but in a sphere-move-invariant way.

From this point on, except for a few examples, we use a simplified notation for on-shell diagrams shown below, where we omit the external legs and instead number the white vertices adjacent to external legs.

\begin{equation}\label{eq:legless-ex}
     \vcenter{\hbox{ \onshellgraph{ { (4.15,1.73)/5, (-0.58,4.46)/4, (0.65,0.35)/3, (2.25,1.91)/2,(0.2,2.53)/1,(-0.58,-1)/6} }{ {(0,0)/W1, (0,3.46)/W2,(3,1.73)/W3,(1,2.53)/W4,(0.2,1.13)/W5,(1.8,1.13)/W6} }{ {(-1,1.73)/B1, (2,3.46)/B2,(2,0)/B3,(1,1.73)/B4} }{ {{W1,B1},{B1,W2},{W2,B2},{W3,B3},{B3,W1},{B2,W3},{W2,4},{W3,5},{W1,6},{W4,1},{W4,B4},{B1,W5},{W5,B4},{W6,B4},{W4,B2},{W6,B3},{W5,3},{W6,2}} }{0.8} } } 
 \\ \rightarrow   \vcenter{\hbox{ \onshellgraph{ { (3,1.73)/5, (0,3.46)/4, (0.2,1.13)/3, (1.8,1.13)/2,(1,2.53)/1,(0,0)/6} }{ {(0,0)/W1, (0,3.46)/W2,(3,1.73)/W3,(1,2.53)/W4,(0.2,1.13)/W5,(1.8,1.13)/W6} }{ {(-1,1.73)/B1, (2,3.46)/B2,(2,0)/B3,(1,1.73)/B4} }{ {{W1,B1},{B1,W2},{W2,B2},{W3,B3},{B3,W1},{B2,W3},{W4,B4},{B1,W5},{W5,B4},{W6,B4},{W4,B2},{W6,B3}} }{0.8} } }
\end{equation}

\subsection{Non-vanishing function and factorization}

First, we identify which triplets give a non-vanishing function $f_T$. One can show that if any Pl\"ucker coordinate of $C$ is identically zero, then in fact there is a triple $(ijk) \in T$ such that all three Pl\"ucker coordinates $(ij), (ik), (jk)$ are identically zero. In this case, the matrix $C^\perp(\vec \alpha^*)$ is never full-rank, as it has a row of zeros, and $\det(M_{ab})= 0 = f_T$. On the other hand, if all Pl\"ucker coordinates of $C$ are not identically zero, then $C^\perp(\vec \alpha^*)$ is generically full-rank, and represents the subspace perpendicular to $C$. So there is a polynomial $g$ in the Pl\"ucker coordinates such that for all $a,b$, $\det(M_{ab}) = \pm (ab) g$. We have that $g$ is generically nonzero, as $M_{ab}$ is generically full-rank. So in this case, $f_T$ is not identically zero.

We can also characterize entirely using the triples $T$ when $f_T$ is not the zero polynomial. Roughly, $f_T$ is nonvanishing if every subset of $T$ contains ``enough points". The formal statement is below, but first we consider a pair of examples:
\begin{align}
        & \vcenter{\hbox{ \onshellgraph{ {(-0.7,0)/2,(-0.7,1)/3,(-0.7,2)/4,(2.7,1)/1,(4.4,1)/5} }{ {(0,0)/W2,(0,1)/W3,(0,2)/W4,(2,1)/W1,(3.6,1)/W5} }{ {(1,0)/B1,(1,1)/B2,(1,2)/B3} }{ {{W2,B1},{W3,B1},{W1,B1},{W2,B2},{W4,B2},{W1,B2},{W3,B3},{W1,B3},{W4,B3},{1,W1},{2,W2},{3,W3},{4,W4},{5,W5}} }{ 1 } }} \hspace{0.5cm} \rightarrow \hspace{0.5cm} \vcenter{\hbox{ \onshellgraph{ {(0,0)/2,(0,1)/3,(0,2)/4,(2,1)/1,(3.5,1)/5} }{ {(0,0)/W2,(0,1)/W3,(0,2)/W4,(2,1)/W1,(3.5,1)/W5} }{ {(1,0)/B1,(1,1)/B2,(1,2)/B3} }{ {{W2,B1},{W3,B1},{W1,B1},{W2,B2},{W4,B2},{W1,B2},{W3,B3},{W1,B3},{W4,B3}} }{ 1 } }} \\
        & \vcenter{\hbox{ \onshellgraph{ {(-0.7,0)/2,(-0.7,1)/3,(-0.7,2)/4,(2,0.3)/1,(4.4,0.3)/6,(4.4,1.7)/5} }{ {(0,0)/W2,(0,1)/W3,(0,2)/W4,(2,1)/W1,(3.7,0.3)/W6,(3.7,1.7)/W5} }{ {(1,0)/B1,(1,1)/B2,(1,2)/B3,(3,1)/B4} }{ {{W2,B1},{W3,B1},{W1,B1},{W2,B2},{W4,B2},{W1,B2},{W3,B3},{W1,B3},{W4,B3},{W1,B4},{W5,B4},{W6,B4},{1,W1},{2,W2},{3,W3},{4,W4},{5,W5},{6,W6}} }{ 1 } }} \hspace{0.5cm} \rightarrow \hspace{0.5cm}
        \vcenter{\hbox{ \onshellgraph{ {(0,0)/2,(0,1)/3,(0,2)/4,(2,1)/1,(3.7,0.3)/6,(3.7,1.7)/5} }{ {(0,0)/W2,(0,1)/W3,(0,2)/W4,(2,1)/W1,(3.7,0.3)/W6,(3.7,1.7)/W5} }{ {(1,0)/B1,(1,1)/B2,(1,2)/B3,(3,1)/B4} }{ {{W2,B1},{W3,B1},{W1,B1},{W2,B2},{W4,B2},{W1,B2},{W3,B3},{W1,B3},{W4,B3},{W1,B4},{W5,B4},{W6,B4}} }{ 1 } }}
\end{align}
The top diagram is disconnected, with the external edge $5$ attached to a white vertex that has no other edges. The bottom diagram amends this by introducing an extra triplet and external edge. The canonical function $f_T$ vanishes for both as a consequence of there being the same group of three triplets that contain only four indices.

\begin{lemma}[Non-vanishing condition]
    The polynomial $  f_T$ is not the zero polynomial if and only if any subset of $d$ triplets contains at least $d+2$ indices. Formally, for any $S \subset \{1,\cdots,n-2\}$,
    \begin{equation}\label{eqn:non-vanishing}
        \bigg|\bigcup_{i \in S} \tau_i\bigg| \geq |S|+2.
    \end{equation}
\end{lemma}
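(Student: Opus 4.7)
\textbf{The plan} is to rephrase non-vanishing of $f_T$ as linear independence of the rows of $C^\perp$. Since the row indexed by a triplet $\tau = (abc)$ records (up to sign) the unique linear relation among $\lambda_a, \lambda_b, \lambda_c$, each row lies in the $(n{-}2)$-dimensional relation space $\mathrm{Rel}(\lambda) = \ker(\lambda : \mathbb{R}^n \to \mathbb{R}^2)$. Combined with the fact recalled just above, that $\det(M_{ab}) = \pm\langle ab\rangle g$ for a polynomial $g$ independent of $\{a,b\}$, this gives $f_T \not\equiv 0$ iff the $n{-}2$ rows of $C^\perp$ are linearly independent, equivalently $C^\perp$ has generic rank $n{-}2$.

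\textbf{The easy direction} (Hall fails $\Rightarrow f_T \equiv 0$) is then a dimension count. If some $S \subseteq T$ satisfies $|I(S)| \leq |S|+1$, where $I(S) := \bigcup_{i \in S} \tau_i$, then each row indexed by $\tau \in S$ is supported on $I(S)$ and thus lies in $\mathrm{Rel}(\lambda|_{I(S)})$, a subspace of dimension $|I(S)| - 2 \leq |S|-1$. Having $|S|$ vectors in a space of strictly smaller dimension, they must be linearly dependent, so $C^\perp$ is rank-deficient and $f_T \equiv 0$.

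\textbf{The harder direction} (Hall $\Rightarrow f_T \not\equiv 0$) I would prove by induction on $n$, with base $n=3$ immediate. In the inductive step I would first handle the case in which some index $j$ appears in exactly one triplet $\tau^*$: deleting both yields a smaller configuration on $n{-}1$ indices and $n{-}3$ triplets which inherits Hall (the inequalities $|I(S)| \geq |S|+2$ for $S \subseteq T \setminus \{\tau^*\}$ are immediate from Hall on $T$, since $j$ is not added by any $\tau \in T \setminus \{\tau^*\}$). By induction the smaller $C^\perp$ has full row rank. Adding back $\tau^*$ and column $j$, which has a single nonzero entry, namely in the row of $\tau^*$, a Laplace expansion along column $j$ of an appropriately chosen $(n{-}2)\times(n{-}2)$ minor reduces its non-vanishing to that of the inductively obtained minor.

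\textbf{The main obstacle} is the remaining case, in which every index lies in at least two triplets; the incidence identity $3(n{-}2) = \sum_j d(j) \geq 2n$ shows this case only arises when $n \geq 6$, and the ``leaf-index'' reduction above is unavailable. Here I would argue directly in terms of circuit-vector structure: each $r_\tau$ is the unique (up to scalar) circuit vector of the $3$-element circuit $\tau$ of the uniform matroid $U_{2,n}$ realized by $\lambda_1,\ldots,\lambda_n$. The claim then specializes the classical fact that circuit vectors of $U_{r,n}$ on circuits $C_1,\ldots,C_k$ are linearly independent iff $|\bigcup_{i\in S} C_i| \geq |S|+r$ for every $S$. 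One proves this by taking a minimal dependence $\sum_{\tau\in S} c_\tau r_\tau = 0$, noting that at any index $j$ with $d_S(j)=2$ the column-$j$ equation rigidly links a ratio $c_\tau/c_{\tau'}$ to a ratio of Pl\"uckers, and then using Hall on $S$ to show that the resulting system of $|I(S)| \geq |S|+2$ such constraints on the $|S|$ unknowns is over-determined and, for generic $\lambda$, forces $c \equiv 0$. This non-cancellation / over-determination step in the ``no-leaf'' case is what I expect to be the technical heart of the proof.
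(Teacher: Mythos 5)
Your easy direction is sound and, in substance, the same as the paper's: the row of $C^\perp$ indexed by $\tau$ is a relation vector supported on $\tau$, so rows indexed by $S$ live in $\ker(\lambda|_{I(S)})$, which has dimension $|I(S)|-2$, forcing dependence when $|I(S)| \leq |S|+1$. Your reduction of the lemma to generic linear independence of the rows of $C^\perp$, via $\det(M_{ab}) = \pm(ab)g$, is also correct. The leaf reduction via Laplace expansion along a degree-one column is valid and is exactly the ``first nontrivial case'' of the paper's factorization (Lemma~\ref{lem:factorization}).

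The genuine gap is in the no-leaf case, and you flag it yourself. The claim that the system ``is over-determined and, for generic $\lambda$, forces $c \equiv 0$'' is not an argument: an over-determined linear system can have nontrivial solutions precisely when the equations are dependent, and the content of the lemma is to show that for the particular structured coefficient matrix $C^\perp(\lambda)^{\mathsf T}$ they are not, which is what you are trying to prove. Moreover, the rigidity step you describe (``column-$j$ equation rigidly links $c_\tau/c_{\tau'}$'') applies only at indices with $d_S(j) = 2$, and in the no-leaf regime the incidence count $\sum_j d(j) = 3(n-2)$ forces some indices to have degree $\geq 3$, where a single column equation links three or more unknowns and the ratios are no longer pinned. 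Finally, the ``classical fact'' about circuit vectors of $U_{r,n}$ is, to my knowledge, equivalent in content to the nonvanishing statement you are proving for $r=2$; invoking it as classical without a reference does not discharge the burden. This is exactly the point at which the paper's proof defers: it applies the factorization lemma repeatedly (reducing at both leaves and non-planar valid subsets) until the remaining pieces are irreducible hypertrees, and then cites Castravet--Tevelev~\cite{Castravet:2013} for the nonvanishing of irreducible hypertree divisors. Your induction takes a different decomposition (leaf reduction only) but hits the same wall, and the no-leaf base case must either be cited with a concrete reference or proved; as written, it is not.
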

\begin{proof}
    Consider a subset $S \subset \{1,\cdots,n-2\}$ of rows in the matrix $C^\perp(\vec \alpha^*)$.
    In order to have a non-vanishing $  f_T$, the number of columns with non-zero entries in at least one of these rows must be at least $|S|+2$. Otherwise, one could choose $\{a,b\}$ so that in $M_{ab}$, rows $S$ have rank at most $|S|-1$ and thus the determinant of $M_{ab}$ is zero. Rows correspond to triplets, and non-zero entries correspond to indices in the triplets, so this shows $f_T$ is nonzero only if \eqref{eqn:non-vanishing} holds.

    On the other hand, if \eqref{eqn:non-vanishing} holds, then we can apply the factorization algorithm below. The algorithm terminates with the collection of irreducible factors of the numerator of $  f_T$, which are either individual Pl\"uckers or the irreducible ``hypertree divisors" of Castravet--Tevelev \cite{Castravet:2013}. In particular, all the factors are nonzero, so $  f_T$ is also nonzero.
\end{proof}

We now turn to factorization. We call a subset $R$ of $T$ a \emph{valid subset} if $R$ consists of $d<n-2$ triplets and uses exactly $d+2$ indices. If $R$ is valid, then $R$ again corresponds to an on-shell diagram on a smaller number of points. As illustrated below, if $T$ has a valid subset $R$, then $f_T$ factors.

\[
    \vcenter{\hbox{
        \begin{tikzpicture}
            \node[ellipse,draw,fill=gray!50!white,minimum height=5em,minimum width=3em](R) at (0,0) {$R$};
    
            \node[ellipse,draw,fill=gray!50!white,minimum height=5em,minimum width=3em](T) at (2.5,0) {$T \setminus R$};
            
            \node[circle,draw,inner sep=0pt](a1) at (1,1) {$a_1$};
            \draw[line width=1pt] (R) -- (a1);
            \draw[line width=1pt] (T) -- (a1);
    
            \node[circle,draw,inner sep=0pt](a2) at (1,0.4) {$a_2$};
            \draw[line width=1pt] (R) -- (a2);
            \draw[line width=1pt] (T) -- (a2);
    
            \node() at (1,-0.2) {$\vdots$};
    
            \node[circle,draw,inner sep=0pt](an) at (1,-1) {$a_r$};
            \draw[line width=1pt] (R) -- (an);
            \draw[line width=1pt] (T) -- (an);
        \end{tikzpicture}
    }}
    =
    \left(\vcenter{\hbox{
        \begin{tikzpicture}
            \node[ellipse,draw,fill=gray!50!white,minimum height=5em,minimum width=3em](R) at (0,0) {$R$};
            
            \node[circle,draw,inner sep=0pt](a1) at (1,1) {$a_1$};
            \draw[line width=1pt] (R) -- (a1);
    
            \node[circle,draw,inner sep=0pt](a2) at (1,0.4) {$a_2$};
            \draw[line width=1pt] (R) -- (a2);
    
            \node() at (1,-0.2) {$\vdots$};
    
            \node[circle,draw,inner sep=0pt](an) at (1,-1) {$a_r$};
            \draw[line width=1pt] (R) -- (an);
        \end{tikzpicture}
    }}\right)
    \times
    \left(\vcenter{\hbox{
        \begin{tikzpicture}
            \node[ellipse,draw,fill=gray!50!white,minimum height=5em,minimum width=3em](T) at (2.5,0) {$T \setminus R$};
            
            \node[circle,draw,inner sep=0pt](a1) at (1,1) {$a_1$};
            \draw[line width=1pt] (T) -- (a1);
    
            \node[circle,draw,inner sep=0pt](a2) at (1,0.4) {$a_2$};
            \draw[line width=1pt] (T) -- (a2);
    
            \node() at (1,-0.2) {$\vdots$};
    
            \node[circle,draw,inner sep=0pt](an) at (1,-1) {$a_r$};
            \draw[line width=1pt] (T) -- (an);

            \node[circle,fill=black](b1) at (0.4,0.7) {};
            \draw[line width=1pt] (a1) -- (b1);
            \draw[line width=1pt] (a2) -- (b1);
            \draw[line width=1pt] (b1) -- (0.8,0);

            \node() at (0.4,0) {$\vdots$};

            \node[circle,fill=black](b2) at (0.4,-0.8) {};
            \draw[line width=1pt] (a1) .. controls (-0.2,1.2) and (-0.2,0) .. (b2);
            \draw[line width=1pt] (an) -- (b2);
            \draw[line width=1pt] (b2) -- (0.8,-0.6);
        \end{tikzpicture}
    }}\right)
    \times
    \prod_{i=1}^r (a_i a_{i+1})
\]

Stated more formally, the factorization is as follows.

\begin{lemma}[Factorization of the form]\label{lem:factorization}
    Let $T$ be a set of $n-2$ triplets on $[n]$. Suppose $R$ is a valid subset of $T$. Then
    \begin{equation}\label{eqn:factorization-body}
          f_T =   f_{R} \times   f_{(T \setminus R) \cup P} \times \prod_{i=1}^{r} ( a_i a_{i+1}),
    \end{equation}
    where $A = \{a_1,\cdots,a_{r}\}$ consists of the indices appearing both in $R$ and $T$ (we set $a_{r+1}=a_1$) and 
    $P = \{(a_1,a_i,a_{i+1}) | i \in \{2,\cdots,r-1\}\}$ is a set of triplets defining a triangulation of a polygon with vertices $A$.
\end{lemma}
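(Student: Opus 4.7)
My plan is to prove the factorization directly from the determinantal formula \eqref{eqn:triplet_form}, by evaluating $\det(M_{ab})$ at the distinguished choice $a=a_1$, $b=a_2$ for each of the three triplet systems $T$, $R$, and $(T\setminus R)\cup P$, and relating the three via Laplace expansion.

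First I would write $C^\perp_T$ in block form: order the rows so the $R$-rows precede the $(T\setminus R)$-rows, and partition the columns as $(U_R\setminus A)\sqcup A\sqcup(U_{T\setminus R}\setminus A)$ of sizes $d+2-r$, $r$, $n-d-2$. This produces a matrix of the shape
\[
C^\perp_T = \begin{pmatrix} X_1 & Y_1 & 0 \\ 0 & Y_2 & X_2 \end{pmatrix}.
\]
Deleting the two columns indexed by $a_1, a_2 \in A$, the $R$-rows retain support on exactly the $d$ columns $(U_R\setminus A)\cup\{a_3,\dots,a_r\}$, so Laplace expansion along the top $d$ rows has exactly one nonzero term and gives
\[
\det(M^T_{a_1 a_2}) = \pm\,\det(M^R_{a_1 a_2})\cdot\det(X_2),
\]
where $X_2$ denotes the submatrix of $C^\perp_{T\setminus R}$ obtained by discarding all $r$ columns indexed by $A$.

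I would then apply the same trick to $C^\perp_{(T\setminus R)\cup P}$. Deleting columns $a_1,a_2$ leaves the $P$-rows supported on the $r-2$ columns $\{a_3,\dots,a_r\}$, so Laplace expansion along the $P$-rows yields
\[
\det(M^{(T\setminus R)\cup P}_{a_1 a_2}) = \pm\,\det(X_2)\cdot\det(M^P_{a_1 a_2}).
\]
A direct inspection shows that, with columns ordered $a_3,\dots,a_r$ and rows ordered by the natural triangulation sequence $(a_1 a_2 a_3),(a_1 a_3 a_4),\dots,(a_1 a_{r-1}a_r)$, the matrix $M^P_{a_1 a_2}$ is lower triangular with diagonal entries $(a_1 a_j)$ for $j=2,\dots,r-1$; hence $\det(M^P_{a_1 a_2}) = \prod_{j=2}^{r-1}(a_1 a_j)$. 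Applying the triplet identity $\det(M_{a_1 a_2}) = \pm(a_1 a_2)\,g$ to each of $T$, $R$, $(T\setminus R)\cup P$ and eliminating $\det(X_2)$ yields the master relation
\[
g_T\cdot\prod_{j=3}^{r-1}(a_1 a_j) = \pm\,g_R\cdot g_{(T\setminus R)\cup P}.
\]

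Finally, I would square and divide by denominators. Writing $\operatorname{denom}(S):=\prod_{\tau\in S}(\tau_1\tau_2)(\tau_2\tau_3)(\tau_3\tau_1)$, we have $\operatorname{denom}(T)=\operatorname{denom}(R)\operatorname{denom}(T\setminus R)$ and $\operatorname{denom}((T\setminus R)\cup P)=\operatorname{denom}(T\setminus R)\operatorname{denom}(P)$, so $f_T/(f_R\,f_{(T\setminus R)\cup P})$ reduces to an explicit monomial in the $(a_i a_j)$. Counting edge multiplicities in the fan triangulation gives
\[
\operatorname{denom}(P) = (a_1 a_2)(a_1 a_r)\prod_{j=3}^{r-1}(a_1 a_j)^2\prod_{i=2}^{r-1}(a_i a_{i+1}),
\]
which combines with $\prod_{i=1}^{r}(a_i a_{i+1})=\pm(a_1 a_2)(a_1 a_r)\prod_{i=2}^{r-1}(a_i a_{i+1})$ to collapse the monomial to $\prod_{i=1}^{r}(a_i a_{i+1})$, proving the claim. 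The main obstacle is tracking signs: two Laplace expansions and the antisymmetry of Pl\"ucker coordinates each contribute a sign which must be reconciled with the $\pm$ in $\det(M_{a_1 a_2})=\pm(a_1 a_2)g$. Since both sides of \eqref{eqn:factorization-body} are polynomials of matching multidegrees, once the monomial content agrees the overall sign can be fixed by checking the $r=2$ base case, where $P=\varnothing$ and $M^T_{a_1 a_2}$ itself is block diagonal with factors $M^R_{a_1 a_2}$ and $M^{T\setminus R}_{a_1 a_2}$.
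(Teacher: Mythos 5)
Your proof is correct and follows essentially the same approach as the paper's Appendix A: order rows and columns to exhibit $C^\perp_T$ in block-triangular form, delete two columns $a_1,a_2\in A$ so that Laplace expansion along the $R$-rows yields $\det(M^T_{a_1a_2})=\pm\det(M^R_{a_1a_2})\det(M_{S,2})$, then apply the same expansion to $(T\setminus R)\cup P$ and compute the lower-triangular $\det(M^P_{a_1a_2})=\prod_{j=2}^{r-1}(a_1a_j)$ to eliminate $\det(M_{S,2})$. Your write-up is somewhat more explicit about the degree count of $\operatorname{denom}(P)$ and about tracking the overall sign (which the paper leaves implicit), but there is no difference of method.
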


\begin{proof}
    The main idea of the proof lies within the determinantal formula. By choosing the removed columns to be within $R$, one finds a block-triangular matrix, factorizing the determinant. In order to get back a set of triplets corresponding to a diagram, one of these factors can be rewritten to include the auxiliary set of triplets $P$, whose contribution is corrected by the extra linear factors. Details of the proof can be found in Appendix \ref{app:factorization}.
\end{proof}


We make a few remarks on the factorization formula. If all the indices of $R$ are contained in $T \setminus R$, and $R$ is planar\footnote{We include the trivial case where $R$ is a single triplet.}, then the factorization is trivial. That is, one may choose $P = R$, so the first and last factors are inverses and the middle factor is again $f_T$.

There are two distinct cases when the factorization is nontrivial. Of course we may assume that the form is nonzero. The first case of nontrivial factorization is when $R$ consists of a single triplet $(ijk)$ and one index, say $k$, does not appear in any other triplet of $T$. We have that $i,j$ will both appear in $T$ (otherwise the form would be zero), so $A= \{i,j\}$ and $P$ is empty. We obtain the factorization 
\begin{equation}
    \begin{aligned}
          f_T &=   f_R \times   f_{T \setminus R} \times ( ij) ^2 \\
        &=   f_{T \setminus R} \times \frac{(ij)}{( ik) ( kj)}.
    \end{aligned}
\end{equation}
So the triplet $(ijk)$ contributes a monomial in Pl\"ucker coordinates to the form. The second case of nontrivial factorization is when $R$ is non-planar. In this case, the numerator of $f_R$ is a high-degree, possibly quite complicated, polynomial in Pl\"uckers.


Applying \eqref{eqn:factorization-body} repeatedly in the two nontrivial cases, one eventually arrives at a product whose factors are Pl\"ucker coordinates, their inverses, and forms for $T$ where $T$ has no lone indices and no valid subsets which are non-planar. We call the latter sets of triplets \emph{irreducible}, and they are in direct correspondence with the \emph{irreducible hypertrees} studied by \cite{Castravet:2013}. For such irreducible $T$, when $f_T$ is written in lowest terms, the numerator is the square of an irreducible polynomial. This irreducible polynomial vanishes on a \emph{hypertree divisor} of $\mathcal{M}_{0,n}$. We note that the factorization of $f_T$ is not necessarily unique, since there may be some cancellation between distinct factors.


\subsection{Sphere moves}

As is already clear from the planar case, there may be many different triplets which give rise to the same nonzero form. It is well-known that all planar diagrams with the same form are related by square moves:
\begin{equation}
    \begin{aligned}
        \vcenter{\hbox{ \onshellgraph{ { (0.3,0.3)/1, (2,1)/4, (2.7,2.7)/3, (1,2)/2} }{ {(1,2)/W2, (2,1)/W4,(0.3,0.3)/W1,(2.7,2.7)/W3} }{ {(1,1)/B1, (2,2)/B2} }{ {{W1,B1},{W2,B1},{W4,B1},{W2,B2},{W3,B2},{W4,B2}} }{1} } } &\leftrightarrow \vcenter{\hbox{ \onshellgraph{ { (1,1)/1, (2.7,0.3)/4, (2,2)/3, (0.3,2.7)/2} }{ {(0.3,2.7)/W2, (2.7,0.3)/W4,(1,1)/W1,(2,2)/W3} }{ {(1,2)/B1, (2,1)/B2} }{ {{W1,B1},{W2,B1},{W3,B1},{W1,B2},{W3,B2},{W4,B2}} }{1} } } \\
        f_{(124),(234)} &= f_{(123),(134)}
    \end{aligned}
\end{equation}

In terms of triplets, $T$ has a square if $(ijk), (ikl) \in T$ for some $i,j,k,l$. The square move replaces these two triplets with $(jil),(jkl)$. For non-planar diagrams, some diagrams with the same form are \emph{not} related by square moves. Castravet--Tevelev show in \cite{Castravet:2013} that the more general \emph{sphere moves} also preserve the numerator of $f_T$; using very different techniques, \cite{Cachazo:2019} showed that the whole form is preserved under sphere moves. We will show that in fact the sphere moves are the only moves on triplets preserving the form.

To describe sphere moves most concisely, we switch from thinking of a triplet $(ijk)$ as a ``tripod" (one black vertex attached to the three white vertices $i,j,k$) to thinking of $(ijk)$ as a triangle with vertices labeled $i,j,k$. A sphere move comes from triangulating the sphere in any way, labeling the vertices with some subset of $[n]$, and then coloring half of the triangles black and the other half white so that every vertex is in both a black and a white triangle. The sphere move replaces the triplets $B$ in $T$ corresponding to the black triangles with the triplets $W$ corresponding to the white triangles.

\begin{figure}
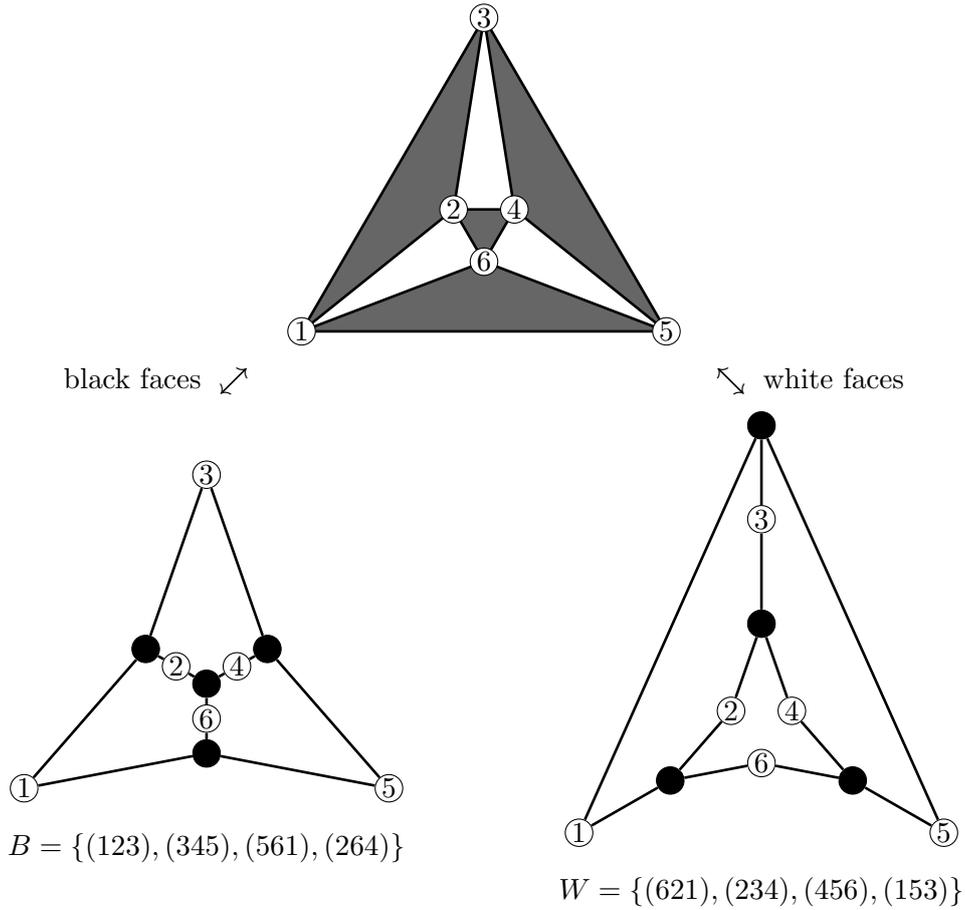

    \centering
    \[
    \begin{gathered}
    \vcenter{\hbox{ \polygongraph{{(0,0)/1,(3.3333,2.6944)/2,(4,6.9282)/3,(4.6667,2.6944)/4,(8,0)/5,(4,1.5396)/6}}{{{1,2,3},{3,4,5},{5,6,1},{2,6,4}}}{0.6} }} \\
    \text{black faces } \swarrow\!\!\!\!\!\!\nearrow \hspace{+16em} \nwarrow\!\!\!\!\!\!\searrow \text{ white faces} \\
    \begin{gathered}
    \hbox{ \onshellgraph{{(0,0)/1,(3.3333,2.6944)/2,(4,6.9282)/3,(4.6667,2.6944)/4,(8,0)/5,(4,1.5396)/6}}{{(0,0)/W1,(3.3333,2.6944)/W2,(4,6.9282)/W3,(4.6667,2.6944)/W4,(8,0)/W5,(4,1.5396)/W6}}{{(2.6667,3.0792)/B1,(5.3333,3.0792)/B2,(4,0.7698)/B3,(4,2.3094)/B4}}{{{W1,B1},{W2,B1},{W3,B1},{W3,B2},{W4,B2},{W5,B2},{W5,B3},{W6,B3},{W1,B3},{W2,B4},{W6,B4},{W4,B4}}}{0.6} } \\ B = \{(123),(345), (561),(264)\}
    \end{gathered}
    \hspace{+4em}
    \begin{gathered}
    \hbox{ \onshellgraph{{(0,0)/1,(3.3333,2.6944)/2,(4,6.9282)/3,(4.6667,2.6944)/4,(8,0)/5,(4,1.5396)/6}}{{(0,0)/W1,(3.3333,2.6944)/W2,(4,6.9282)/W3,(4.6667,2.6944)/W4,(8,0)/W5,(4,1.5396)/W6}}{{(2,1.1547)/B1,(4,4.6188)/B2,(6,1.1547)/B3,(4,9)/B4}}{{{W1,B1},{W2,B1},{W6,B1},{W2,B2},{W3,B2},{W4,B2},{W4,B3},{W5,B3},{W6,B3},{W1,B4},{W3,B4},{W5,B4}}}{0.6} } \\
    W = \{(621),(234), (456),(153)\}
    \end{gathered}
    \end{gathered}
    \]
    \captionsetup{width=.9\linewidth}
    \vspace{-1em}
    \caption{The 6-point sphere move. For convenience, in the bottom row we omit the external legs and label the white vertex they were attached to instead.}\label{fig:6point-triangulation}
\end{figure}

For example, consider the triangulation of the sphere into 8 triangles shown above in Figure \ref{fig:6point-triangulation} (since we draw the sphere on the plane, one triangle is the unbounded region). The black triangles correspond to the triplets $B=\{(123),(345),(561),(264)\}$, and the white to $W=\{(621)$, $(234)$, $(456),(153)\}$. The corresponding sphere move can be applied to any $T$ containing the triplets $B$ and replaces $B$ with $W$ and does not change the associated form. Notice that the square move is a particularly special kind of sphere move. In this case, the triangulation consists of 4 triangles, and looks like a tetrahedron with two faces colored black and two colored white.

Note also that every sphere move changes a valid subset of $T$ into a different valid subset.  This is easy to see by a simple counting argument using the Euler characteristic. With $|B| = |W| = d-2$, the triangulation will have $2d-4$ faces and $3d-6$ edges, so requiring that the Euler characteristic is $\chi = 2$ reveals that the number of vertices is $d$. So both $B$ and $W$ are valid. One could define a similar move for any triangulation of any surface, with half the triangles colored white and half black. However, if this surface is not a sphere, such a move would involve triplets which fail the nonvanishing condition. Indeed, for a surface which is not a sphere, we have $\chi=2-2g\leq 0$. We also have $\chi=2|B| - 3|B| + V$ so the number of vertices is at most $|B|$. Thus any set of triples $T$ containing $B$ has vanishing form.
\subsection{Sphere moves are the only moves}

To show that triplets have the same form if and only if they are related by sphere moves, it is useful to introduce another combinatorial object, inspired by the derivation of the sphere moves in \cite{Cachazo:2019}. For $T$ a collection of triplets, the corresponding set of \emph{doublets} is
\begin{equation}
    D(T) = \{ij \, | \, \text{$i$ and $j$ appear together in an odd number of triplets}\}.
\end{equation}
The idea is that the doublet sets are themselves invariant under sphere moves, so they are a better labeling set for the form than the set of triplets. With these definitions, we state the following theorem.
\begin{theorem}\label{thm:moves}
    Let $T$ and $T'$ be two sets of triplets with nonvanishing forms (i.e. $T, T'$ satisfy \eqref{eqn:non-vanishing}). The following are equivalent:
    \vspace{-0.6em}
    \begin{enumerate}[label=(\arabic*)]
        \item There is a sequence of sphere moves relating $T$ and $T'$.
        \vspace{-0.6em}
        \item The corresponding forms are equal: $f_T = f_{T'}$.
        \vspace{-0.6em}
        \item The corresponding doublets are the same: $D(T) = D(T')$.
    \end{enumerate}
\end{theorem}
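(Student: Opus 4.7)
The plan is to prove the three conditions are equivalent by a cycle: $(1) \Rightarrow (2) \Rightarrow (3) \Rightarrow (1)$. The implication $(1) \Rightarrow (2)$ is already established in \cite{Cachazo:2019}, where the form is shown to be invariant under sphere moves, so I would simply cite this.

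For $(2) \Rightarrow (3)$, I would extract the doublet set from $f_T$ by examining its orders of vanishing along the Pl\"ucker divisors of $G(2,n)$. In the determinantal formula $f_T = N_0^2 / \prod_{\tau \in T}(\tau_1 \tau_2)(\tau_2\tau_3)(\tau_3\tau_1)$ with $N_0 = \det(M_{ab})/(ab)$, the numerator is a perfect square and therefore contributes an even order along every divisor $\{(ij) = 0\}$. The denominator rewrites as $\prod_{i < j}(ij)^{n_T(ij)}$ where $n_T(ij) = |\{\tau \in T : i,j \in \tau\}|$, so the total order of $f_T$ along $\{(ij) = 0\}$ has the same parity as $n_T(ij)$, i.e.\ is odd exactly when $ij \in D(T)$. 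Thus $D(T)$ can be recovered intrinsically from $f_T$ as the set of Pl\"ucker divisors of odd order, proving $(2) \Rightarrow (3)$. As a by-product (and sanity check), $(1) \Rightarrow (3)$ admits an elementary direct proof: each edge of a sphere triangulation is shared by exactly two faces, so the combined number of black and white triangles containing a given pair $\{i,j\}$ is $0$ or $2$; consequently a sphere move changes each $n_T(ij)$ by an even integer and preserves $D(T)$.

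The substantive direction is $(3) \Rightarrow (1)$. Given $T$ and $T'$ with $D(T) = D(T')$, I would view the formal difference $T - T'$ as a $\mathbb{Z}/2$-chain on the 2-skeleton of the simplex on $[n]$; the doublet condition is precisely the statement that $\partial(T - T') = 0$ mod $2$, i.e.\ $T - T'$ is a 2-cycle. The plan is to induct on the size of the symmetric difference $|T \triangle T'|$: locate inside $T \sqcup T'$ a bi-colored triangulated 2-sphere whose black triples lie in $T$ and white triples lie in $T'$, perform the corresponding sphere move on $T$, and verify that the symmetric difference strictly decreases. Iteration then connects $T$ to $T'$ by sphere moves.

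The main obstacle is producing the required sphere subcomplex in every step. A priori, a mod-$2$ 2-cycle in the complete 2-complex on $[n]$ need not itself be a single triangulated sphere, and even when it is one must verify the validity of the bi-coloring (every vertex appearing in both colors). The strategy here would be to reduce to a ``minimal'' 2-cycle using the factorization Lemma~\ref{lem:factorization}, which pulls out shared irreducible subdiagrams of $T$ and $T'$ and reduces the problem to the case where both sides are irreducible and have the same doublets; for such irreducible pieces one can use the correspondence with irreducible hypertrees of \cite{Castravet:2013}. If a single sphere move does not suffice, one may first insert an auxiliary trivial sphere move (adding a pair of black/white configurations that will be removed later) to create the structure needed for the subsequent move. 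Making this procedure deterministic is the technical heart of the theorem and I would defer the details to an appendix.
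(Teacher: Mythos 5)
Your overall cycle $(1)\Rightarrow(2)\Rightarrow(3)\Rightarrow(1)$ matches the paper's, and the easy implications are handled the same way: $(1)\Rightarrow(2)$ is cited to \cite{Cachazo:2019} in both. Your argument for $(2)\Rightarrow(3)$ is in fact a bit more direct than the paper's: the paper reaches the parity statement (Lemma~\ref{lem:doublets-from-form}) by an induction on the factorization together with the irreducible-hypertree case (Lemmas~\ref{lem:lonepair} and \ref{lem:poleirred}), whereas you read the parity of $\operatorname{ord}_{(ij)}f_T$ straight off the determinantal formula, using that $N_0^2$ contributes only even orders and the denominator contributes $n_T(ij)$; since the order of vanishing is intrinsic to the rational function, $f_T=f_{T'}$ forces $D(T)=D(T')$. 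That is a clean and valid shortcut.

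The substantive gap is in $(3)\Rightarrow(1)$, exactly where you flag it. The homological reformulation (that $D(T)=D(T')$ means $T\triangle T'$ is a mod-$2$ $2$-cycle in the complete $2$-complex on $[n]$) is correct, but a mod-$2$ $2$-cycle can be a disconnected union of closed surfaces of arbitrary genus, and you give no mechanism that forces a genuine triangulated \emph{sphere}, bi-colored so that black faces lie in $T$ and white faces in $T'$, to sit inside $T\sqcup T'$. The paper addresses exactly this in Lemma~\ref{lem:doublet-to-sphere}: after discarding $T\cap T'$, so $D(T\cup T')=\emptyset$, it runs a concrete greedy algorithm that grows subsets $R\subset T$, $R'\subset T'$ by repeatedly attaching a triplet sharing a boundary doublet with the partial complex; at termination the complex is closed, and the non-vanishing condition \eqref{eqn:non-vanishing} applied to $R$ and $R'$ forces $\chi\ge 2$, hence a sphere (and a further Euler-characteristic count excludes pinch points). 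Your sketch never invokes the non-vanishing hypothesis on $T,T'$, which is precisely what rules out higher genus, and the suggested reduction via Lemma~\ref{lem:factorization} is not actually a reduction: the factorization lemma pulls a valid subset out of a \emph{single} $T$, and it is not established that $T$ and $T'$ with equal doublets admit matched valid subsets on which to recurse. The idea of inserting auxiliary trivial sphere moves is plausible but unsubstantiated. As written, $(3)\Rightarrow(1)$ is the technical heart of the theorem and the proposal explicitly defers it, so the proof is incomplete at the decisive step.
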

We give a sketch of the proof here; see the appendix for additional details. (1) $\implies$ (2) follows from \cite{Castravet:2013,Cachazo:2019}.

For (2) $\implies$ (3), we claim that $D(T)$ is also equal to 
$$\{ij| (ij) \text{ appears with an odd exponent in }f_T\}.$$ 
For example, taking the set of triplets $T = \{(123),(134),(135)\}$, one has
\begin{equation}
    f_T = \frac{(13)}{(12)(32)(14)(34)(15)(35)}, \quad D(T) = \{12,23,14,34,15,35,13\}.
\end{equation}
We prove this statement in general by examining residues and applying factorization in Lemmas \ref{lem:lonepair}, \ref{lem:poleirred}, and \ref{lem:doublets-from-form}.




To prove (3) $\implies$ (1), we give an algorithm in Lemma \ref{lem:doublet-to-sphere} to build sets $R \subset T, R' \subset T'$ so that $R \cup R'$ triangulate the sphere.
As an example, we illustrate the algorithm for the 6-point non-planar diagram. We have two triplets $B = \{(123),(345), (561),(264)\}$ and $W = \{(621),(234), (456),(153)\}$ that give the same doublets, namely 
\begin{align}
      D(B)=D(W) = \{12,23,13,34,24,35,45,56,15,26,24,46\} \nonumber
\end{align} 
Now we construct a sphere move from $B$ to $W$ by running the algorithm in Lemma \ref{lem:doublet-to-sphere}. We start by picking a random triplet $(123)$ from set $B$. So now $R=\{(123)\}$,$R'=\{\}$, and $D(R) = \{12,23,13\}$. Treating $R$ as a collection of black triangles, so far we have
\begin{equation}
    \hbox{\polygongraph{{(0,0)/1,(3.3333,2.6944)/2,(4,6.9282)/3}}{{{1,2,3}}}{0.6}}
\end{equation}


\par
Moving onto step 2), $W \setminus R'=\{(621),(234),(456),(153)\}$. All triplets except $(456)$ have a pair of indices $\{i,j\}$ in the doublet set $D(R) \setminus D(R') = \{12,23,13\}$. These triplets are added to $R'$, so $R'=\{(621),(234),(153)\}$. Interpreting $R'$ as a collection of white triangles, we have attached three white triangles to the lone black triangle, (the triangle corresponding to $(135)$ is the unbounded region).

\begin{equation}
\begin{tikzpicture}[
  scale=0.55,
  line cap=round, line join=round,
  v/.style={circle,draw,fill=white,inner sep=0.7pt,minimum size=1mm,font=\large}
]

\coordinate (1) at (0,0);
\coordinate (5) at (8,0);
\coordinate (3) at (4,6.9282);
\coordinate (6) at (4,1.5396);
\coordinate (2) at (3.3333,2.6944);
\coordinate (4) at (4.6667,2.6944);
\fill[black!60] (1)--(2)--(3)--cycle;
\draw[very thick] (1)--(5)--(3)--(1);
\draw[very thick] (1)--(2)--(6)--(1);
\draw[very thick] (3)--(2)--(4)--(3);
\node[v] at (0,0) {1};
\node[v] at (3.3333,2.6944) {2};
\node[v] at (4,6.9282) {3};
\node[v] at (4.6667,2.6944) {4};
\node[v] at (8,0) {5};
\node[v] at (4,1.5396) {6};
\end{tikzpicture}
\end{equation}

\par
There aren't any triplets in $W \setminus R' = \{(456)\}$ containing a pair of indices in $D(R') \setminus D(R)$ so we skip step 3). All the triplets in $B \setminus R=\{(345),(561),(264)\}$ contain a pair of indices in $D(R \cup R')=\{26,34,24,15,53\}$ so we now have $R=\{(123),(345),(561),(264)\}$.
\par
Cycling around, we skip step 1) since $W \setminus R=\{\}$. Following step 2), we add $(456)$ to $R'$, which gives us a complete triangulation as show below \begin{equation}
    \hbox{ \polygongraph{{(0,0)/1,(3.3333,2.6944)/2,(4,6.9282)/3,(4.6667,2.6944)/4,(8,0)/5,(4,1.5396)/6}}{{{1,2,3},{3,4,5},{5,6,1},{2,6,4}}}{0.6} }
\end{equation}
The algorithm terminates here and we can see from the graph that $R$ and $R'$ form a complete bicolored triangulation of a sphere.

\begin{tcolorbox}[colback=white!95!black]
\begin{center}
\vspace{-0.1cm}
The upshot of Theorem \ref{thm:moves} is that two general non-vanishing MHV on-shell diagrams have the same canonical functions $f_G$ if and only if they are related by sphere moves and merge/expand moves. This is a generalization of the analogous statement for the planar case, where two diagrams have the same canonical functions if and only if they are related by square moves and merge/expand moves.
\vspace{-0.1cm}
\end{center}
\end{tcolorbox} 

\section{Oriented Regions and Geometries for Non-planar Diagrams}\label{sec:regions-and-geometries}

Our goal in this section is to associate regions of the Grassmannian with MHV on-shell diagrams $G$ (equivalently triplets $T$), just as $G_+(2,n)$ is associated with planar on-shell diagrams. These regions will have canonical form $\omega_G= \omega_T$, in the sense of \cite{Arkani_Hamed_2017}.

\subsection{Parke-Taylor Factors as Canonical Functions of Oriented Regions}

To prepare for a geometric interpretation of the decomposition in \eqref{eqn:PTdecomp}, we identify geometries whose canonical functions are the Parke-Taylor factors. Recall that $G_+(2,n)$ has canonical function ${\rm PT}(12 \dots n)$. We also recall from Section 2 that the interior of $G_+(2,n)$ has a nice interpretation in terms of $n$ ordered points on $\mathbb{P}^1$. If we reorder these points via a permutation $\sigma$, or equivalently reorder the columns of $C$ using $\sigma$, we obtain another space isomorphic to $G_+(2,n)$, which is described by all Pl\"ucker coordinates having some fixed sign and whose canonical function is ${\rm PT}(\sigma).$ We also have the freedom of exchanging any column of $C$ for its negative; this gives a space with the same canonical function. We call these spaces \emph{oriented regions}. 

We can specify an oriented region by specifying a way to permute and negate the columns of $C \in G_+(2,n)$. We use the notation
\begin{equation}
    R(\varepsilon_{i_1} i_1,\varepsilon_{i_2} i_2, \cdots, \varepsilon_{i_n} i_n) = \{ C \in G(2,n) : \varepsilon_{i_a} \varepsilon_{i_b} (i_a i_b) \geq 0 ~ \forall ~ a<b\},
\end{equation}
where $i_1 \dots i_n$ is a permutation of $1, \dots, n$ and $\varepsilon_i\in \{\pm\}$ is a choice of sign.



This notation is redundant, allowing for cyclic shifts, overall sign-flips, and reversals up to a sign,
\begin{align}
    R(\varepsilon_{i_1} i_1,\varepsilon_{i_2} i_2, \cdots, \varepsilon_{i_n} i_n) &= R(\varepsilon_{i_2} i_2 , \cdots, \varepsilon_{i_n} i_n , -\varepsilon_{i_1} i_1), \\
    R(\varepsilon_{i_1} i_1,\varepsilon_{i_2} i_2, \cdots, \varepsilon_{i_n} i_n) &= R(-\varepsilon_{i_1} i_1,-\varepsilon_{i_2} i_2, \cdots, -\varepsilon_{i_n} i_n), \\
    R(\varepsilon_{i_1} i_1,\varepsilon_{i_2} i_2, \cdots, \varepsilon_{i_n} i_n) &= R(\varepsilon_{i_1} i_1,-\varepsilon_{i_n} i_n, \cdots, -\varepsilon_{i_2} i_2).
\end{align}
We fix the redundancy by making the following convention choice: we fix $i_1 = 1$, and $\varepsilon_{1} = \varepsilon_2 = +$. This leaves us with $2^{n-2} \cdot (n-1)!$ distinct oriented regions that tile the real Grassmannian. As a shorthand for oriented regions denoted with this convention, we will write $R(1,\varepsilon \sigma)$, where $\varepsilon$ is understood to denotes the signs $\varepsilon_3,\cdots,\varepsilon_n$ and $\sigma$ denotes the permutation of $2, \dots, n$.

Noting that $R(1,2,\cdots,n) = G_+(2,n)$ has canonical function ${\rm PT}(1,2,\cdots,n)$, it is easy to see that $R(1,\varepsilon \sigma)$ has canonical function ${\rm PT}(1,\sigma)$. Consequently, each Parke-Taylor factor is the canonical function of $2^{n-2}$ distinct oriented regions. It is important to note that in both cases, the canonical function is unique only up to an overall sign; however, this subtlety will not be relevant to our further discussion as the decompositions from \eqref{eqn:PTdecomp} contain only Parke-Taylor factors with positive sign.

As each oriented region is simply a relabeling and reflection of the positive Grassmannian, its boundary structure is as discussed in Section \ref{sec:onshell}. We call two oriented regions \emph{adjacent} if they share a codimension 1 boundary. This occurs when the signed permutations labeling the regions differ by a simple transposition:
\begin{align}
    R(1,\cdots,\varepsilon_i i, \varepsilon_j j,\cdots) \Big|_{(ij) = 0} &= R(1,\cdots,\varepsilon_j j, \varepsilon_i i,\cdots) \Big|_{(ij) = 0}, \\
    R(1,\cdots,\varepsilon_i i) \Big|_{(1i) = 0} &= R(1,-\varepsilon_i i,\cdots) \Big|_{(1i) = 0},
\end{align}
where $|_{f=0}$ denotes the intersection with $\{f=0\}$. The second case is the same as the first, but looks different because of our conventions on how to notate oriented regions. This geometric statement about adjacent regions is in direct correspondence with a statement about residues of adjacent Parke-Taylor factors,
\begin{equation}
    \underset{(ij) =0}{\rm Res} {\rm PT}(1\cdots i j \cdots) = -\underset{(ij) =0}{\rm Res} {\rm PT}(1\cdots ji\cdots).
\end{equation}

\subsection{Unions of Oriented Regions}

We now use oriented regions to obtain spaces with canonical function $f_T$. Since for each $f_T$ there are a multitude of Parke-Taylor decompositions (one for each orientation of the triplets, and then more if one may apply sphere moves), and each Parke-Taylor factor is the canonical function of $2^{n-2}$ different oriented regions, there is a very large set of geometries with canonical function $f_T$, most of which have undesirable properties such as unnecessary spurious boundaries. We will restrict our attention as follows.
\begin{itemize}
    \item We consider only those geometries that come directly from the formula \eqref{PTexp} for $f_T$ in terms of triplets 
    \begin{equation}\label{eqn:PR-decomposition}
        f_T = \sum_{\sigma \in S^{(T)}_n} {\rm PT}(\sigma) \rightsquigarrow R_T = \bigcup_{\sigma \in S_n^{(T)}} R(\varepsilon_\sigma \sigma).
    \end{equation}
    There are many other formulas for $f_T$. For example, one may add a sum of PT factors which is equal to 0. We do not consider geometries from such formulas.
    \item We consider geometries only up to re-signing all of the appearances of a given index. 
    \item We choose signs for oriented regions so that the number of adjacencies is maximized. 
\end{itemize}
To convince the reader that these restrictions are reasonable, we will quickly illustrate some bad features that occur already at 4 points when we do \emph{not} place these restrictions. Then, with the restrictions in mind, we analyze all possible geometries for the simplest non-planar on-shell diagram, which serves as a starting point for understanding more general unions.

\subsubsection*{4-point square}

The unique, up to relabeling, 4-point on-shell diagram has canonical function ${\rm PT}(1234)$, which can be thought of as coming from triplets in the sense of (\ref{PTexp})
\[
T = \{(123),(134)\} \rightarrow S_4^{(123),(134)} = \{1234\}.
\]
Consequently, one positive geometry with this form is simply $R(1234) = G_+(2,4)$ as one expects from this being a planar diagram. Since the canonical function is not affected by reflections, the positive geometries $R(1,2,-3,4)$ or $R(1,2,-3,-4)$ also have the desired form, and are isomorphic. This motivates our constraint to consider geometries only up to re-signing all appearances of a single index.

Applying a Kleiss-Kuijf relation, we can find a more complicated geometry. As one has ${\rm PT}(1234) = -({\rm PT}(1243) + {\rm PT}(1423))$, the positive geometry $R(1243) \cup R(1423)$ also has canonical function ${\rm PT}(1234)$. This geometry has an undesirable property, called a \emph{null boundary}: there is a codimension-1 boundary along $(13) = 0$ but $(13)$ is not a pole of the original form. Nonetheless, we still consider this geometry as it does come from an orientation of the triplets in the sense of (\ref{PTexp}) as
\[
T' = \{(123),(143)\} \rightarrow S_4^{(123),(143)} = \{1243,1423\}.
\]
If we flip some signs, say to obtain $R(1,2,4,3) \cup R(1,-4,2,3)$, we do not change the form. As this is not an overall reflection, the topology of the resulting geometry has changed. In particular, we now have a second null boundary when $(24) = 0$ as the two regions are no longer glued on the corresponding facet. This motivates our constraint to consider only those geometries where signs are chosen to make as many regions adjacent as possible.

Furthermore, one could in principle include extra oriented regions for which the canonical function adds to zero. For example, ${\rm PT}(1432) + {\rm PT}(1342) + {\rm PT}(1234) = 0$, so we can take any of the aforementioned geometries and then take the union with $R(1432) \cup R(1342) \cup R(1324)$, which introduces many null boundaries. This motivates our constraint to consider only those geometries that directly come from triplets.

\subsubsection*{5-point non-planar}

The unique 5-point nonplanar on-shell diagram (up to relabeling) is
\begin{equation}\label{eqn:5point-non-planar}
    \vcenter{\hbox{\onshellgraph{ {(1,0)/1,(-1,1)/2,(1,2)/3,(1,1)/4,(3,1)/5} }{ {(1,0)/W1,(-1,1)/W2,(1,2)/W3,(1,1)/W4,(3,1)/W5} }{ {(-0.5,1)/B1,(0.5,1)/B2,(2,1)/B3} }{ {{B1,W1},{B1,W2},{B1,W3},{B2,W1},{B2,W3},{B2,W4},{B3,W1},{B3,W3},{B3,W5}} }{1}}} = \vcenter{\hbox{ \onshellgraph{ {(1,0)/1,(0,1)/2,(1,2)/3,(1,1)/4,(3,1)/5} }{ {(1,0)/W1,(0,1)/W2,(1,2)/W3,(1,1)/W4,(3,1)/W5} }{ {(0.5,0.5)/B1,(0.5,1.5)/B2,(2,1)/B3} }{ {{B1,W1},{B1,W2},{B1,W4},{B2,W3},{B2,W2},{B2,W4},{B3,W1},{B3,W3},{B3,W5}} }{1} }}
\end{equation}
with triplets $T = \{(123),(134),(135)\}$ or, applying a square move, $T' = \{(124),(234),(135)\}$. The canonical function is
\begin{equation}\label{eqn:5point-form}
    f_T = \frac{(13)}{(12)(23)(14)(34)(15)(35)},
\end{equation}
and (up to a sign) has many PT expressions induced by different orientations of $T$ or $T'$,
\begin{equation}
    \begin{aligned}
        f_T &= {\rm PT}(12345) + {\rm PT}(12354) \\
        &= {\rm PT}(13245)+{\rm PT}(13254)+ {\rm PT}(13425)+{\rm PT}(13452) + {\rm PT}(13524) + {\rm PT}(13542) \\
        &= {\rm PT}(13425)+{\rm PT}(13452) +{\rm PT}(13542) + {\rm PT}(14235).
    \end{aligned}
\end{equation}
The first line comes from both $S_5^{(123),(134),(135)}$ and $S_5^{(124),(234),(135)}$, the second comes from $S_5^{(132),(134),(135)}$, and the third from $S_5^{(142),(234),(135)}$. This list exhausts the possible decompositions up to relabeling. The first decomposition can come from either choice of triplets for a particular choice of orientations; this is caused by $(123),(134)$ and $(124),(234)$ making the same requirement that the indices $(1234)$ to appear in that cyclic order. Corresponding to these decompositions, we have several geometries that have the correct canonical function:
\begin{align}
        R_{(123),(134),(135)} &= R_{(124),(234),(135)} = R(12345) \cup R(12354),\label{eqn:5point_R1} \\
        R_{(132),(134),(135)} &=  R(13245) \cup R(13254) \cup R(13425) \cup R(13452)
       \cup R(13524) \cup R(13542),\label{eqn:5point_R2} \\
        R_{(142),(234),(135)} &= R(13425) \cup R(13452) \cup  R(13542) \cup R(14235).\label{eqn:5point_R3}
\end{align}
The first two geometries (\ref{eqn:5point_R1},~\ref{eqn:5point_R2}) are strongly connected, i.e. the oriented regions are connected by their facets; we will find in Section \ref{sec:hierarchy} that this is because we chose the orientation of triplets coming from reading around each black vertex in \eqref{eqn:5ptnon-planar} clockwise. The last geometry (\ref{eqn:5point_R3}) is not strongly connected, as $R(14235)$ does not share any facets with the remaining regions.

The latter two geometries (\ref{eqn:5point_R2},~\ref{eqn:5point_R3}) both have a null codimension-1 boundary at $(13) = 0$, with the oriented regions having it as a facet despite it not being a pole of the canonical function. The first geometry (\ref{eqn:5point_R1}) appears not to have such a problem, but it turns out that a closer look reveals null boundaries at higher codimension. Taking the residue as $2 = \alpha 1$ (or $2 = \alpha 3$), one finds a residue (in the sense of (\ref{eqn:residue_example}))
\begin{equation}\label{eqn:first_residue}
    \underset{2 = \alpha 1}{\rm Res} f_T = \frac{1}{\alpha (14)(34)(15)(35)},
\end{equation}
which has no pole at $(13) = 0$. However, finding the boundary of $R_{(123),(134),(135)}$ at $2 = \alpha 1$, one can write it as the closure of an affinely parametrized set,
\begin{equation}\label{eqn:spurious_boundary}
    R_{(123),(134),(135)} \bigg|_{2 = \alpha 1} = \overline{\left\{ \begin{pmatrix}  1 & \alpha & x & 0 & y \\ 0 & 0 & z & 1 & w \end{pmatrix} \in G(2,5) : \alpha,x,z,w \geq 0, xw-yz \geq 0  \right\}},
\end{equation}
which does have a boundary at $(13) = 0$. Note that the variable $y$ intentionally does not have an associated inequality in (\ref{eqn:spurious_boundary}), as the sign of the minor $(45)$ is not fixed in $R_{(123),(134),(135)}$. Studying this boundary region one does find a further boundary at $(13) = 0$, reached by $z \rightarrow 0$ in the parametrization above. More specifically, we get
\begin{equation}\label{eqn:spurious_boundary2}
    R_{(123),(134),(135)} \bigg|_{2 = \alpha 1, 3=\beta 1} = \overline{\left\{ \begin{pmatrix}  1 & \alpha & \beta & 0 & y \\ 0 & 0 & 0 & 1 & w \end{pmatrix} \in G(2,5) : \alpha,\beta,w \geq 0 \right\}}
\end{equation}
Note that the parameter $y$ is completely free once we set $(123)=0$. Then the geometry becomes $(\alpha,\beta,w) \in \mathbb{R}^3_{\geq 0} \times y \in \mathbb{R}$. When we take the projective closure, the geometry associated with the $y$ factor is $\mathbb{P}^1(\mathbb{R})$. This is a circle topologically and hence has no boundary points, meaning the canonical function is 0. Thus we can see from the geometry that the boundary lying on $(13)=0$ is null and the canonical function has no pole there. 

\begin{tcolorbox}[colback=white!95!black,left=3pt,right=3pt,top=7pt,bottom=7pt]
\begin{center}
\vspace{-0.1cm}
  Even for the simplest non-planar on-shell diagram, it is \emph{impossible} to find a Grassmannian region $R_T$ with correct canonical form $f_T$ which is a positive geometry.
\vspace{-0.1cm}
\end{center}
\end{tcolorbox} 

We now turn to the topology of $R(12345) \cup R(12354)$. The regions $R(12345)$ and $R(12354)$ glue on the facet where $(45)=0$. They also glue on the (codimension-6) vertex where the first three columns vanish:
\begin{equation}
    R(12345) \bigg|_{\text{cols } 1,2,3 \text{ vanish}} = \begin{pmatrix}  0 & 0& 0 & 1 & 0 \\ 0 & 0 & 0 & 0 & 1 \end{pmatrix} \overset{GL(2)}{=} \begin{pmatrix}  0 & 0& 0 & 1 & 0 \\ 0 & 0 & 0 & 0 & -1 \end{pmatrix} = R(12354) \bigg|_{\text{cols } 1,2,3 \text{ vanish}}.
\end{equation}
As each oriented region is topologically a ball, gluing them together on a facet and a vertex results in something like a pinched torus. A caricature of this geometry is illustrated in Figure \ref{fig:5-point-pinched}. One must keep in mind that $R(12345)$ and $R(12354)$ are 6-dimensional, the intersection with $(45)=0$ is 5-dimensional, the intersection with $(13)=0$ is 4-dimensional, and the point where the first three columns vanish is 0-dimensional.

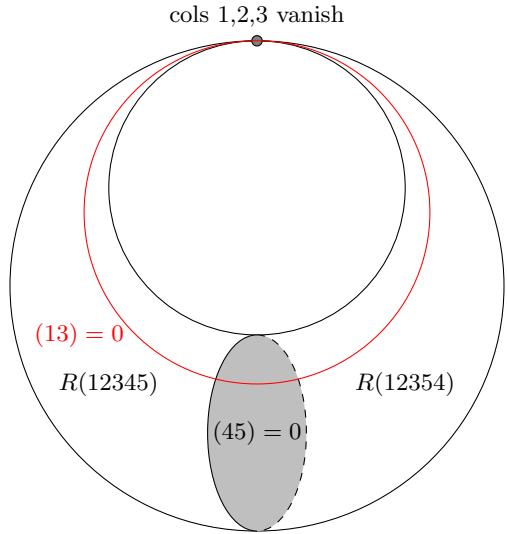
\begin{figure}
    \centering
    \begin{tikzpicture}[scale=0.65]
        \draw (0,0) circle (5);
        \draw (0,2) circle (3);
        \draw[fill=gray!50!white,draw=none] (0,-3) ellipse (1 and 2);
        \draw (0,-1) arc (90:270:1 and 2);
        \draw[dashed] (0,-1) arc (90:-90:1 and 2);
        \draw[fill=gray] (0,5) circle (3pt);
        \draw[red] (0,1.5) circle (3.5);

        \node() at (0,-3) {\footnotesize{$(45)=0$}};
        \node() at (3,-2) {\footnotesize{$R(12354)$}};
        \node() at (-3,-2) {\footnotesize{$R(12345)$}};
        \node() at (-3.6,-1) {\footnotesize{\color{red} $(13) = 0$}};
        \node() at (0,5.5) {\footnotesize{cols 1,2,3 vanish}};
    \end{tikzpicture}
    \caption{A caricature of $R(12345) \cup R(12354)$.\label{fig:5-point-pinched}}
\end{figure}

Another feature to consider is how PT-like residues are represented in the geometry. Taking the residue at $\alpha = 0$ of (\ref{eqn:first_residue}), one finds ${\rm PT}(1435)$. The label $2$ has vanished as we have sent the entire column to zero, so we might expect the corresponding geometry to have reduced to the oriented region $R(1435)$ embedded into $G(2,5)$ by inserting a zero-column. Unfortunately, this is not the case for any of the geometries\footnote{Technically, one can find a geometry by relabeling the indices that does give a suitable boundary that is a single oriented region. For example, $R(14325) \cup R(14352)$ has boundary $R(1435)$ as desired. However, this will always introduce an analogous problem for another index, as with the vanishing of column $4$ for this relabeling.}, as taking the boundary when columns $1$ and $2$ are parallel followed by the vanishing of column $2$ still leaves us with a union of oriented regions with the canonical function decomposed as ${\rm PT}(1345) + {\rm PT}(1354)$. Consequently, the boundaries of the geometries for non-planar diagrams do not generally keep the nice recursive structure that appeared for planar diagram geometries.

\subsection{Hierarchy of Diagrams and Corresponding Results}\label{sec:hierarchy}

In this section, we discuss the behavior of the geometries for arbitrary collections of triplets $T$, and so for arbitrary on-shell diagrams. As mentioned above, there are \emph{many} geometries for each collection of triplets, and some are substantially better behaved than others. We will focus on two kinds of well-behavedness, which require some definitions.

For a collection $T$ of triplets, the region $R_T$ has facets along hypersurfaces $(ij)=0$. If $f_T$ has a pole along a facet, we call the facet \emph{physical}; otherwise, we call the facet \emph{null}. Note that the residue of $f_T$ along a null facet is zero. More generally, a boundary of $R_T$ is a \emph{null boundary} if, taking iterated residues of $f_T$, one obtains the zero function on the boundary. Positive geometries have no null boundaries; if a geometry with null boundaries otherwise satisfies the definition of positive geometries, it is a \emph{pseudo-positive geometry}. We will look for geometries which are as close to positive geometries as possible, meaning that null boundaries occur in codimension at least 2. We call a pseudo-positive geometry a \emph{good-facet geometry} if all facets are physical. Positive geometries are good-facet geometries.

We would like the decomposition of $R_T$ into oriented regions to be as analogous to a triangulation of a polytope as possible. We call $R_T$ \emph{strongly connected} if one can walk in $R_T$ from any region $R(1, \epsilon \sigma) \subset R_T$ to any other $R(1, \epsilon' \sigma')$ passing only through facets.

\vspace{-0.05cm}

\begin{tcolorbox}[colback=white!95!black,left=3pt,right=3pt,top=3pt,bottom=3pt]
\begin{center}
To summarize, for an arbitrary collection $T$ of triplets, our main two questions are:

\vspace{-0.35cm}

\begin{enumerate}
    \item Can we find a good-facet geometry $R_T$? That is, a geometry where $f_T$ has a pole along \emph{every} facet?
    \vspace{-0.25cm}
    \item Can we find a strongly connected geometry $R_T$?
\end{enumerate}
\end{center}
\end{tcolorbox} 

Note that if the diagram is planar, the geometry $G_+(2,n)$ is both a good-facet geometry and strongly connected. So we are looking for geometries which are in some sense comparably nice to the planar geometry\footnote{Other nice properties of $G_+(2,n)$, such as being topologically a ball, are too restrictive to ask for.}. We also note that the two properties are quite different. The first is about how well the geometry reflects the canonical function $f_T$, while the second is about how ``tightly" the geometry is packed together.

We find two particularly nice classes of diagrams which have good-facet or strongly connected geometries. Recall from Section 4.1 that a diagram is {irreducible} if $f_T$ does not admit a non-trivial factorization\footnote{A more precise characterization is given by irreducible hypertrees \cite{Castravet:2013}.} from Lemma \ref{lem:factorization}. We explain below that if $T$ is irreducible, then for any orientation of $T$, there is a good-facet geometry $R_T$.

The second nice class of diagrams are the \emph{internally planar} diagrams. A diagram is internally planar if it can be drawn in a planar way once the external legs have been deleted. One example is the 6-point diagram in \eqref{eq:legless-ex}. We emphasize that, while all planar diagrams are also internally planar, most internally planar diagrams are \emph{not} planar on-shell diagrams. This is because planar on-shell diagrams by definition must admit a planar drawing with external legs ending on the boundary of the disk. We prove in Appendix C that all internally planar diagrams have a strongly-connected geometry $R_T$. (In fact, the orientation of the triplets which gives this geometry is dictated by the planar drawing of the diagram without legs; one reads around each black vertex clockwise.)

It is a fact of life, however, that for some on-shell diagrams, no geometries are strongly connected (see Section \ref{sec:7point-irred}), or no geometries are good-facet (see Section \ref{sec:8point-red}). Even for internally planar irreducible diagrams, which have strongly connected good-facet geometries, all geometries may have null boundaries of higher codimension and so the diagram admits no positive geometry (see Section \ref{sec:6point-irred}). Our results are summarized in the diagrams below. We note that planar diagrams are internally planar but are technically \emph{not} irreducible; however, they admit strongly connected good-facet geometries, so we treat them as ``morally" irreducible.


\[
    \begin{tikzpicture}
        \node() at (-8,0) {};
        \node() at (8,0) {};
        
        \node[draw,rectangle](general) at (0,0) {General MHV On-Shell Diagrams};

        \node() at (-2,-0.65) {$\bigcup$};
        \node[draw,rectangle](int_planar) at (-3,-1.3) {Internally Planar};

        \node() at (2,-0.65) {$\bigcup$};
        \node[draw,rectangle](irred) at (3,-1.3) {Irreducible};

        \node() at (-2,-1.95) {$\bigcup$};
        \node() at (2,-1.95) {$\bigcup$};
        \node[draw,rectangle](int_planar_irred) at (0,-2.6) {Internally Planar Irreducible};

        \node() at (0,-3.25) {$\bigcup$};
        \node[draw,rectangle]() at (0,-3.9) {Planar};

        \node[text=blue,anchor=west]() at (-8,-0.3) {\small All diagrams have};
        \node[text=blue,anchor=west]() at (-8,-0.65) {\small a strongly connected geometry.};
        \draw[blue,->] (-7,-0.8) to[out=-90,in=-180] (int_planar.west);
    
        \node[text=blue,anchor=west]() at (3.4,-0.3) {\small All diagrams have a};
        \node[text=blue,anchor=west]() at (3.4,-0.65) {\small good-facet geometry.};
        \draw[blue,->] (6,-0.8) to[out=-90,in=0] (irred.east);

        \node[text=red,anchor=west]() at (-7.8,-2.6) {\small For some diagrams,};
        \node[text=red,anchor=west]() at (-7.8,-2.95) {\small all geometries have};
        \node[text=red,anchor=west]() at (-7.8,-3.3) {\small a null boundary.};
        \draw[red,->] (-4.6,-2.75) to[out=0,in=180] (int_planar_irred.west);

        \node[text=red,anchor=west]() at (3.5,-2.6) {\small Some diagrams have no};
        \node[text=red,anchor=west]() at (3.5,-2.95) {\small strongly connected};
        \node[text=red,anchor=west]() at (3.5,-3.3) {\small geometries.};
        \draw[red,->] (4.5,-2.4) to[out=90,in=-90] (irred.south);

        \node[text=red,anchor=west]() at (-6.25,1.1) {\small For some diagrams,};
        \node[text=red,anchor=west]() at (-6.25,0.75) {\small all geometries have null facets.};
        \draw[red,->] (-1.2,0.8) to[out=0,in=90] (general.north);
    \end{tikzpicture}
\]

\newpage
\noindent We can also visualize the different sets of on-shell diagrams by the Venn diagram:

\[
    \begin{tikzpicture}
        \draw (0,0) ellipse (7 and 4.5);
        \draw[fill=gray,fill opacity=0.5] (-2,-0.3) ellipse (4 and 3);
        \draw[fill=gray,fill opacity=0.5] (2,-0.3) ellipse (4 and 3);
        \draw[fill=white] (0,-1) ellipse (1 and 1);

        \node[align=center]() at (0,3.2) {General MHV Diagrams \\ ${\color{red} {\bullet}~ {\bullet}~ {\bullet}}$ };
        \node[align=center]() at (-4,0) {Internally Planar \\ ${\color{blue} \bullet}~{\color{red} {\bullet}~ {\bullet}}$ };
        \node[align=center]() at (0,1) {Irreducible \\ Internally Planar \\ ${\color{blue} \bullet}~{\color{red} {\bullet}}~{\color{blue} \bullet}$ };
        \node[align=center]() at (4,0) {Irreducible \\ ${\color{red} \bullet}~{\color{red} {\bullet}}~{\color{blue} {\bullet}}$ };
        \node[align=center]() at (0,-1) {Planar \\ ${\color{blue} {\bullet}~ {\bullet}~ {\bullet}}$};
    \end{tikzpicture}
\]
In the image we use dots $\bullet \bullet \bullet$ to represent the three properties of our interest in this order:
\begin{itemize}
    \item All diagrams in the set admit strongly connected geometries.
    \vspace{-0.15cm}
    \item All diagrams in the set have good-boundary geometries.
    \vspace{-0.15cm}
    \item All diagrams in the set have good-facet geometries.
\end{itemize}
Blue $\color{blue} \bullet$ signifies the property is always present, and red $\color{red} \bullet$ signifies that there exists some counterexample. We can see that the only diagrams which have all three properties are planar diagrams, while the generic (non-planar) MHV diagrams fail to have any of these three. The `best' set of non-planar on-shell diagrams are irreducible internally planar diagrams which only fail to satisfy the second property.  


Before moving to examples, we discuss why irreducible diagrams have good-facet geometries for any orientation of the triplets $T$. Dividing the triplets into maximal planar subdiagrams\footnote{This can always be done uniquely. The only way for it to not be unique would be if three triplets share the same doublet $(abc),(abd),(abe)$ with any pair being part of some maximal planar diagram. }, e.g.
\begin{equation}\label{eqn:8-point-maximally-decomposed}
    T = \{(123)\} \cup \{(345)\} \cup \{(567)\} \cup \{(178)\} \cup \{(284),(246)\},
\end{equation}
one finds (see Lemma \ref{lem:poleirred}) that the denominator of the form $f_T$ is precisely the product of the denominators from each planar subdiagram:

\begin{equation}
    f_T = \frac{((17) (28) (34) (56) - 
 (18) (26) (34) (57) + 
 (18) (24) (35) (67))^2}{(12)(23)(31) \times (34)(45)(53) \times (56)(67)(75) \times (17)(78)(81) \times (28)(86)(64)(42)}.
\end{equation}
If $(ab)$ is not a pole of $f_T$, then either $a,b$ do not appear together in any triplet, or they are non-adjacent indices in a maximal planar subdiagram. In the former case, for each oriented region $R(\cdots ab \cdots)$ there is a corresponding $R(\cdots ba \cdots)$ glued on the facet $(ab)=0$. In the latter case, the natural orientation of the triplets in the planar subdiagram keeps the indices non-adjacent in the oriented regions $R(\cdots a \cdots b \cdots)$. So, choosing suitable orientations within each maximal planar subdiagram, one can ensure that $(a b)=0$ is not a facet of $f_T$, and thus one can always find a good-facet geometry.



We note that we do not characterize the diagrams admitting strongly connected or good-facet geometries, though we show that this certainly includes internally planar and irreducible diagrams, respectively. We also do not investigate which diagrams admit positive geometries. However, the previous analysis of the simplest non-planar diagram, the 5-point example in \eqref{eqn:5point-non-planar}, shows that all geometries $R_T$ have null boundaries and so are not positive geometries.  It seems likely that more generally, diagrams with a 5-point non-planar subdiagram admit only pseudo-positive $R_T$. This motivates the following question: 

\vspace{0.1cm}

\begin{tcolorbox}[colback=white!95!black,left=3pt,right=3pt,top=6pt,bottom=6pt]
\begin{center}
Are planar on-shell diagrams the only diagrams which admit a positive geometry $R_T$?
\end{center}
\end{tcolorbox} 

\vspace{0.1cm}

If such a geometry exists for a non-planar diagram, it should be connected as a disconnected geometry does not admit an unique canonical form. For two disconnected positive geometries, the canonical form for its union would be
\begin{equation}
    \Omega = \Omega_1 \pm \Omega_2
\end{equation}
and there is no way how to fix a relative sign. In fact, the situation is more complicated as for some connected geometries we can have a similar problem. Namely, if two positive geometries share only a vertex, and not a higher dimensional boundary, then the ambiguity of the canonical form is present too. In any case, from our explorations (and using the 5pt argument above) it seems likely that the only positive geometries are positive Grassmannians $G_+(2,n)$, associated with planar diagrams.  

\subsection{Examples}

In the remainder of the section, we take a closer look at strongly connected and god-facet examples, and provide examples of geometries lacking these properties.

\subsubsection{6-point internally planar irreducible}\label{sec:6point-irred}

The smallest irreducible diagram appears at 6 points, defined by the set of triplets $T = \{(123),(345),(561),(264)\}$. It is also the unique 6-point irreducible diagram, and is internally planar as shown in this embedding:
\begin{equation}
    \vcenter{\hbox{ \onshellgraph{{(0,0)/1,(3.3333,2.6944)/2,(4,6.9282)/3,(4.6667,2.6944)/4,(8,0)/5,(4,1.5396)/6}}{{(0,0)/W1,(3.3333,2.6944)/W2,(4,6.9282)/W3,(4.6667,2.6944)/W4,(8,0)/W5,(4,1.5396)/W6}}{{(2.6667,3.0792)/B1,(5.3333,3.0792)/B2,(4,0.7698)/B3,(4,2.3094)/B4}}{{{W1,B1},{W2,B1},{W3,B1},{W3,B2},{W4,B2},{W5,B2},{W5,B3},{W6,B3},{W1,B3},{W2,B4},{W6,B4},{W4,B4}}}{0.6} }}
\end{equation}

With the orientation of triplets induced by this embedding, one finds a decomposition labeled by
\begin{equation}
    S_6^{(T)} = \{142536,142563,145236,145263,125364,125634,152364,152634\}
\end{equation}
and the canonical function is
\begin{equation}
    \begin{aligned}
    f_T &= \frac{((15) (26) (34)-(16) (24) (35))^2}{(12) (13) (15) (16) (23) (24) (26) (34) (35) (45) (46) (56)} \\
    &= {\rm PT}(142536) + {\rm PT}(142563) + {\rm PT}(145236) + {\rm PT}(145263) \\
    & \quad \quad + ~{\rm PT}(412536) + {\rm PT}(412563) + {\rm PT}(415236) + {\rm PT}(415263).
    \end{aligned}
\end{equation}

This canonical function is the canonical function of the union of oriented regions
\begin{equation}
    \begin{aligned}
    R_T &= R(142536) \cup R(142563) \cup R(145236) \cup R(145263) \\
    & \quad \quad \cup ~ R(412536) \cup R(412563) \cup R(415236) \cup R(415263) \\
    & = \left\{\begin{matrix} (12) \geq 0, & (13) \geq 0, & & (15) \geq 0 , & (16) \geq 0, \\ & (23) \geq 0 , & (24) \leq 0 , & & (26) \geq 0, \\ & & (34) \leq 0, & (35) \leq 0 , \\ & & & (45) \geq 0, & (46) \geq 0, \\ & & & & (56) \geq 0. \end{matrix}\right\}
    \end{aligned}
\end{equation}

This geometry is strongly connected, as one can see by the gluing on the facets $(14),(25),(36)=0$. This property is specific to the orientation chosen; flipping one of the triplets can lead to a geometry that is not strongly connected:
\begin{equation}
    \begin{aligned}
        R_{(123),(345),(561),(246)} &= [R(123456)] \cup [R(124536) \cup R(124563)] \\
        & \quad \quad \cup ~ [R(125346) \cup R(152346)] \cup [R(145623) \cup R(415623)],
    \end{aligned}
\end{equation}
where the strongly connected components have been grouped with braces.

This is a good-facet geometry, as one can see from the defining inequalities matching the poles of the canonical function. This property is independent of the orientation chosen; each triplet is its own maximal planar subdiagram, so any orientation avoids null facets by making the non-pole facets $(14),(25),(36)=0$ be only internal boundaries of the union.

However, this geometry has null boundaries. Taking the residues on $6 = \alpha 5$ and $\alpha = 0$, the canonical function is
\begin{equation}
    \res_{\alpha = 0} \res_{6 = \alpha5} f_T = \frac{(23)}{(12)(13)(24)(34)(25)(35)},
\end{equation}
and the corresponding geometry living in $G(2,5) \cong \{ C \in G(2,6) : \text{column 6 vanished}\}$ is
\begin{equation}
    R_T \bigg|_{6 = \alpha 5} \bigg|_{\alpha = 0} = R(14253) \cup R(41253).
\end{equation}
This is just a relabeling of the canonical function (\ref{eqn:5point-form}) and first geometry (\ref{eqn:5point_R1}) from the 5-point non-planar diagram. As analyzed earlier, the 5-point non-planar diagram has a null boundary for any geometry that is chosen. Since the 6-point irreducible diagram contains the 5-point non-planar diagram on a boundary, one deduces that \emph{any} geometry chosen for the 6-point irreducible diagram must also have a null boundary at some codimension.

\subsubsection{7-point irreducible}\label{sec:7point-irred}
The unique irreducible diagram at 7 points is defined by the set of triplets
\[
T = \{(123),(145),(167),(246),(357)\}
\] 
\begin{equation}
    \onshellgraph{ {(0,0)/6,(1.5,2)/2,(0,4)/4,(2,3)/1,(4,4)/5,(3,2)/3,(4,0)/7} }{ {(0,0)/W1,(1.5,2)/W2,(0,4)/W3,(2,3)/W4,(4,4)/W5,(3,2)/W6,(4,0)/W7} }{ {(0.5,2)/B1,(2,3.5)/B2,(3.5,2)/B3,(2,0.5)/B4,(2.5,2)/B6} }{ {{W1,B1},{W2,B1},{W3,B1},{W3,B2},{W4,B2},{W5,B2},{W5,B3},{W6,B3},{W7,B3},{W7,B4},{W1,B4},{W2,B6},{W4,B4},{W4,B6},{W6,B6}} }{1}
\end{equation}
and has canonical function
\begin{equation}
  f_T= \frac{((1 4) (1 7) (2 6) (3 5) - 
   (1 5) (1 6) (2 4) (3 7))^2}{((1 2) (1 
    3) (1 4) (1 5) (1 6) (1 7) (2 3) (2 
    4) (2 6) (3 5) (3 7) (4 5) (4 6) (5 
    7) (6 7))}.
\end{equation}
The diagram is not internally planar, so there is no preferred orientation of the triplets.
Using the orientation above, $|S_7^{(T)}| = 24$ and we can write it as
\begin{equation}
    \begin{aligned}
        S_7^{(T)} & = \{1234567,1234657,1243567,1243657,1246357\} \\
        & \quad \quad \cup ~\{1624735,1627435,1627345,1672435,1672345\}  \\ & \quad \quad \cup ~\{1456273,1456723,1465273,1465723,1462573\} \\
        & \quad \quad \cup ~ \{1245673,1246573\} \cup \{1462735,1467235\} \cup \{1623457,1624357\}  \\
        & \quad \quad \cup ~ \{1462357\} \cup \{1246735\} \cup \{1624573\},
    \end{aligned}
\end{equation}
where the permutations have been grouped by the maximal strongly connected components. We do not show the many other possible decompositions / geometries here, but it is easy to check (with computer assistance) that all possible orientations also give a non-strongly connected geometry.

This is however a good-facet geometry. Like with the 6-point irreducible diagram, this property is independent of the orientation chosen.

\subsubsection{8-point internally planar irreducible}\label{sec:8point-irred}

One 8-point irreducible diagram is
\begin{equation}
    \onshellgraph{ {(0,0)/1,(1,2)/2,(0,4)/3,(2,3)/4,(4,4)/5,(3,2)/6,(4,0)/7,(2,1)/8} }{ {(0,0)/W1,(1,2)/W2,(0,4)/W3,(2,3)/W4,(4,4)/W5,(3,2)/W6,(4,0)/W7,(2,1)/W8} }{ {(0.5,2)/B1,(2,3.5)/B2,(3.5,2)/B3,(2,0.5)/B4,(1.5,2)/B5,(2.5,2)/B6} }{ {{W1,B1},{W2,B1},{W3,B1},{W3,B2},{W4,B2},{W5,B2},{W5,B3},{W6,B3},{W7,B3},{W7,B4},{W8,B4},{W1,B4},{W2,B5},{W4,B5},{W8,B5},{W4,B6},{W6,B6},{W8,B6}} }{1}
\end{equation}
from which one may read the oriented triplets $T = \{(123),(345),(567),(178),(284),(486)\}$. Then, $|S_8^{(T)}| = 47$, and can be described as follows:
\begin{equation}
    \begin{aligned}
        & S_8^{(T)} = \{\sigma \in S_8 / \mathbb Z_8 \, | \, \sigma \text{ related to } 14725836 \text{ by adjacent transpositions} \\ & \hspace{18em}\text{such that no index is swapped twice}\}.
    \end{aligned}
\end{equation}

For example, $17428536 = 1\overset{\curvearrowright\!\!\!\!\!\!\curvearrowleft}{47}2\overset{\curvearrowright\!\!\!\!\!\!\curvearrowleft}{58}36$ is in the set, but $14257836$ is not. The adjacent transpositions also include swapping $1$ and $6$, as the permutations defined only up to cyclic shifts. The corresponding canonical function is
\begin{equation}
    f_T = \frac{((17) (28) (34) (56)+(18) ((24) (35) (67)-(26) (34) (57)))^2}{(12) (13) (17) (18) (23) (24) (28) (34) (35) (45) (46) (56) (57) (67) (68) (78)}
\end{equation}

The resulting geometry is strongly connected, with $R(14725836)$ being `at the center' of the clump of positive regions, so none of its facets are boundaries of $R_T$.

One may write
\begin{equation}
    R_T = \left\{ \begin{matrix} (12) \geq 0, & (13) \geq 0, & & (15) \geq 0, & & (17) \geq 0, & (18) \geq 0, \\ & (23) \geq 0, & (24) \leq 0, & & (26) \geq 0, & & (28) \geq 0, \\ & & (34) \leq 0, & (35) \leq 0, & & (37) \leq 0, & \\ & & & (45) \geq 0, & (46) \geq 0, & & (48) \geq 0, \\ & & & & (56) \geq 0, & (57) \leq 0, & \\ & & & & & (67) \leq 0, & (68) \leq 0, \\ & & & & & & (78) \geq 0. \end{matrix} \right\}.
\end{equation}
These are the inequalities defining $R(14725836)$ with those involving adjacent indices removed, as they are allowed to swap. This is also a good-facet-geometry, which is expected as the diagram is irreducible. Unlike the 6 and 7-point cases in \ref{sec:6point-irred} and \ref{sec:7point-irred}, this does depend on the orientation: only those orientations where $(284),(486)$ are oriented as a square produce a geometry with no null facets.

\subsubsection{8-point reducible with null facets}\label{sec:8point-red}

Consider the diagram defined by triplets $T = \{(123),(124),(125),(346),(357),(458)\}$,
\begin{equation}
    \vcenter{
    \hbox{
    \onshellgraph{ {(0,0)/1,(2,0)/2,(0,1.5)/3,(1,1.5)/4,(2,1.5)/5,(0,3)/6,(2,3)/8,(1,3)/7} }{ {(0,0)/W1,(2,0)/W2,(0,1.5)/W3,(1,1.5)/W4,(2,1.5)/W5,(0,3)/W6,(2,3)/W8,(1,3)/W7} }{ {(0,0.75)/B1,(1,0.75)/B2,(2,0.75)/B3,(0.25,2.25)/B4,(1.75,2.25)/B5,(1,2.25)/B6} }{ {{B1,1},{B1,2},{B1,3},{B2,1},{B2,2},{B2,4},{B3,1},{B3,2},{B3,5},{B4,3},{B4,4},{B4,6},{B5,4},{B5,5},{B5,8},{B6,3},{B6,5},{B6,7}} }{1}
    }
    }
\end{equation}

This diagram has many lone indices, so it is clearly reducible, and the canonical function is easy to compute
\begin{equation}
    f_T = \frac{(34)(35)(45)(12)}{(13)(23)(14)(24)(15)(25)(36)(46)(37)(57)(38)(58)}.
\end{equation}
Accounting for possible square moves, and for every possible orientation of triplets, one finds for every geometry that at least one of the following is a null facet: $(34),(35),(45) = 0$. 

Furthermore, since the diagram is not internally planar, we do not have an orientation induced by a corresponding embedding. Checking all possible geometries, one finds that none are strongly connected.

\subsection{Geometries without a Diagram}

What makes the forms, or corresponding geometries, coming from on-shell diagrams special? From the analysis done earlier in the section it is evident that properties corresponding to planar diagrams are not preserved for a general on-shell diagram; the geometries one generally finds are not strongly connected and have null facets. To begin answering this question, one may study those forms/geometries that do not come from on-shell diagrams to see what may be missing.

Enumerating the canonical functions for pseudo-positive geometries consisting of unions of oriented regions in $G(2,5)$, one finds (up to permutation of indices) three that do not come from on-shell diagrams:
\begin{equation}
    R_A =  R(15352) \cup R(15423) \rightsquigarrow f_A = \frac{(12)(34)(35)+(13)(23)(45)}{(12)(13)(15)(23)(24)(34)(35)(45)}
    \end{equation}
    \begin{equation}
    R_B = R(15234) \cup R(15423) \cup R(15342) \rightsquigarrow f_B = \frac{(13)(23)(25)(45)-(12)(24)(35)^2}{(12)(13)(15)(23)(24)(25)(34)(35)(45)}
    \end{equation}
    \begin{equation}
    R_C =R(13452) \cup R(15324) \rightsquigarrow f_C = \frac{(14)(15)(23)(24)(35)-(12)(13)(25)(34)(45)}{(12)(13)(14)(15)(23)(24)(25)(34)(35)(45)}        
    \end{equation}

It is easy to check that none of these regions are strongly connected, and they all have null facets. Interestingly, though $R_A, R_B, R_C$ do not come from 5-point on-shell diagrams (equivalently, sets of three triplets), they still arise from choosing a larger collection of triplets and taking a union as in \eqref{eqn:PR-decomposition}:
\begin{align}
    T_A &= \{(153),(154),(142),(423)\}, \\
    T_B &= \{(152),(153),(234),(154)\}, \\
    T_C &= \{(132),(134),(152),(345)\}.
\end{align}
This is interesting, because obviously not all unions of regions come from oriented triplets. For example, consider the following geometry
\begin{equation}
    R(12345)  \cup R(12354) \cup R(12453). \label{union1}
\end{equation}
Here the first two regions have labels $4,5$ swapped, but the region found by swapping those indices in the third region, $R(12543)$, is not included. This makes the geometry inconsistent with any triplets. However, the canonical function for this region is equal (up to relabeling) to $f_B$. The task of enumerating all distinct classes of forms is not feasible at 6pt, so we leave it as an open question:  Is there a canonical function of a union of oriented regions that is not represented by triplets? Note that in the absence of the on-shell diagram connection, it is ambiguous if the canonical function contains $+{\rm PT}(\dots)$ or $-{\rm PT}(\dots)$. In fact, for the region (\ref{union1}) we need to consider 
\begin{equation}
    f = {\rm PT}(12345) + {\rm PT}(12354) - {\rm PT}(12453)
\end{equation}
with these particular signs to guarantee that all highest dimensional residues are $\pm1$. If we flip the sign, 
\begin{equation}
    f' = {\rm PT}(12345) + {\rm PT}(12354) + {\rm PT}(12453)
\end{equation}
some of the residues are $\pm2$ and we can no longer talk about pseudo-positive geometries. In a different context, a similar feature was observed in \cite{Dian:2022tpf} where the authors defined the notion of \emph{graded positive geometries}. It seems conceivable that using a `wrong' orientation of the region $R(12453)$ in the union (\ref{union1}) yields such geometry. Note that on-shell diagrams by definition never lead to residues other than $\pm1$ and we never need to discuss the `graded version' of pseudo-positive geometries in this context.

The purpose of studying the canonical functions associated with Grassmannian regions -- unions of oriented regions -- is motivated by another, even more important, question which in some sense motivated the study of non-planar on-shell diagrams in the first place:

\begin{tcolorbox}[colback=white!95!black,left=3pt,right=3pt,top=6pt,bottom=6pt]
\begin{center}
\noindent{\bf Question:} Are the regions in the Grassmannian associated with non-planar on-shell diagrams special? 
\end{center}
\end{tcolorbox}

In our previous discussion, we concluded that the geometries for generic non-planar on-shell diagram (internally non-planar and non-irreducible) do not exhibit any nice properties: some of their facets and boundaries are null and the spaces are disconnected. This is exactly what we expect from a generic union of $R$-regions. So at the level of geometry, we do not see anything special. Nevertheless, the canonical functions $f_T$ are very special thanks to the formula (\ref{eqn:triplet_form}). In particular, the numerator of the canonical function is a perfect square of a polynomial ${\cal P}[(\dots)]$ in Pl\"uckers multiplied by linear factors,
\begin{equation}
    f = \frac{\prod ({\dots}) \times {\cal P}[({\dots})]^2}{\prod ({\dots})} \label{special}
\end{equation}
We can see that the 5pt canonical functions $f_A$, $f_B$, $f_C$ not associated with on-shell diagrams do \emph{not} have this form. The same is true for the class of functions we were able to check at 6pt. Hence it is natural to conjecture that only the regions associated with oriented triplets produce the canonical function of the form (\ref{special}). The big open question is what are the geometric implications of this very special property.

\section{Conclusions and Outlook}

In this paper, we explored the MHV non-planar on-shell diagrams and the correspondence to Grassmannian geometries in $G(2,n)$. In the planar case there is only one top cell of $G_+(2,n)$ and one planar on-shell diagram, but there are many non-planar diagrams of the same (maximal) dimensionality for general $n$. Our analysis relied mostly on the triplet description, which determines the canonical function either using the determinant formula or the Parke-Taylor expansion. Our first result is an algorithm for identifying diagrams with vanishing forms. In the planar sector, the vanishing of the form indicates the presence of internal bubbles and the fact that the diagram is not reduced. The same type of diagrams do not have a triplet representation. But there is a new class of diagrams which are described by triplets but still have vanishing form, which we are able to identify. Our second result concerns the identity moves. In the planar case, there are only square moves (and trivial merge-expand move) while in the non-planar case there are also sphere moves proposed in \cite{Cachazo:2019}. We show using a novel doublet representation that indeed square moves and sphere moves are the only identity moves.

Our main result concerns the (pseudo-)positive Grassmannian geometries that we associate with non-planar on-shell diagrams with non-vanishing canonical functions. The Parke-Taylor decompositions provide candidate spaces, which are unions of permuted positive parts $G_+(2,n)$s. However, these spaces are generally disconnected and the canonical function does not encode properly the boundary structure, i.e. there are boundaries for which the form vanishes -- we call them null facets. We showed that for internally planar diagrams there is a choice of Parke-Taylor decomposition such that the space is connected, and for irreducible diagrams there are no null facets. The privileged class of internally planar irreducible diagrams enjoys both of these special properties, and they provide an interesting non-planar extension of the planar on-shell diagrams associated with the positive Grassmannian.

Our paper is just an invitation into the very complex problem of non-planar on-shell diagrams and the Grassmannian geometries. There are two main future directions on the mathematical side: 1) Further study of MHV diagrams and $G(2,n)$ geometries, their boundary structures, topological and combinatorial properties. 2) Generalization to N$^{k{-}2}$MHV on-shell diagrams starting with six-point NMHV case, ie. $G(3,6)$. The complete classification of all non-planar on-shell forms was given in \cite{Bourjaily:2016mnp}, and the list contains exotic terms with polynomial and even algebraic poles. The first step is the classification of all terms with only monomial poles, which is closely related to problems considered in \cite{Early:2024asu,Cachazo:2023ltw}. On the physics side, the canonical forms of non-planar on-shell diagrams are leading singularities of loop integrands, and hence they play a role of coefficients that multiply polylogarithms (and generalizations) in the expansion of non-planar ${\cal N}=4$ SYM amplitudes. If the dual conformal symmetry does extend in some form to the non-planar sector, such kinematic structures should be most accessible to study precisely at the level of leading singularities. We plan to study all these problems in future work. 

\section*{Acknowledgments}

We thank Nick Early, Chris Eur, Melvyn Nabavi, Shruti Paranjape, Alexander Postnikov, Lizzie Pratt, Marcelo Augusto Ferreira dos Santos, Bernd Sturmfels, Lauren Williams for very useful discussions, and Taro Brown for an early collaboration on this work. This work was supported by the DOE grant No.SC0009999 and the funds of the University of California.  MSB was partially supported by the National Science Foundation under Award No.~DMS-2444020. 
  Any opinions, findings, and conclusions or recommendations expressed in this material are
those of the author(s) and do not necessarily reflect the views of the National Science
Foundation.

\vspace{0.2cm}

\begin{center}
    \rule{0.8\textwidth}{0.4pt}
\end{center}

\vspace{0.2cm}

\appendix

\section{The factorization formula}\label{app:factorization}

\newtheorem*{lem:factorization}{Lemma \ref{lem:factorization}}
\begin{lem:factorization}[Factorization of the form]
    Let $T$ be a set of $n-2$ triplets on $[n]$. Suppose $R$ is a valid subset of $T$. Then
    \begin{equation}
          f_T =   f_{R} \times   f_{(T \setminus R) \cup P} \times \prod_{i=1}^{r} ( a_i a_{i+1}),
    \end{equation}
    where $A = \{a_1,\cdots,a_{r}\}$ consists of the indices appearing both in $R$ and $T$ (we set $a_{r+1}=a_1$) and 
    $P = \{(a_1,a_i,a_{i+1}) | i \in \{2,\cdots,r-1\}\}$ is a set of triplets defining a triangulation of a polygon with vertices $A$.
\end{lem:factorization}

\begin{proof}
    Let $S = T \setminus R$ to simplify the notation. Since $R$ is a valid subset, it consists of $m-2$ triplets on $m$ indices. Then, $S$ must contain the remaining $n-m$ triplets of $T$, as well as at least $n-m+r$ vertices. One must have $r \geq 2$, as otherwise $T$ would fail the non-vanishing condition.

    Then, the matrix $M_T$ can be written as
    \begin{equation}
        M_T = \begin{pmatrix} M_{R,1} & M_{R,2} & 0 \\ 0 & M_{S,1} & M_{S,2} \end{pmatrix},
    \end{equation}
    where $M_{R,1}$ is $(m-2) \times (m-r)$, $M_{R,2}$ is $(m-2) \times r$, $M_{S,1}$ is $(n-m) \times r$, and $M_{S,2}$ is $(n-m) \times (n-m)$. Then, computing the form $f_T$ one finds
    \begin{equation}
        \begin{aligned}
            f_T &= \frac{[\det( M_{T,ab}) / (ab)]^2}{\prod_{\tau \in T} (\tau_1\tau_2)(\tau_2\tau_3)(\tau_3\tau_1)} \\
            &= \frac{\left[\det(M_{R,ab}) / (ab)\right]^2}{\prod_{\tau \in R} (\tau_1\tau_2)(\tau_2\tau_3)(\tau_3\tau_1)} \times \frac{[\det(M_{S,2})]^2}{\prod_{\tau \in S} (\tau_1\tau_2)(\tau_2\tau_3)(\tau_3\tau_1)} \\
            &= f_{R} \times \frac{[\det(M_{S,2})]^2}{\prod_{\tau \in S} (\tau_1\tau_2)(\tau_2\tau_3)(\tau_3\tau_1)},
        \end{aligned}
    \end{equation}
    where we begin to see the factorization of the form into a piece dependent on the valid subset of triplets and some term dependent on the rest. This latter term may be rewritten further by introducing a set of extra triplets $P = \{(a_1,a_i,a_{i+1}) | i \in \{2,\cdots,r-1\}\}$ that form an arbitrary polygon over the indices $a_i$ shared by $R$ and $S$. Then,
    \begin{equation}
        \begin{aligned}
            \det(M_{P \cup S,a_1a_2}) / (a_1a_2) &= \det\begin{pmatrix} M_{P,a_1a_2} & 0 \\ M_{S,1,a_1a_2} & M_{S,2} \end{pmatrix} / (a_1a_2) \\
            &= \det(M_{S,2}) \times [\det(M_{P,a_1a_2}) / (a_1a_2)] \\
            &= \det(M_{S,2}) \times \prod_{i=2}^{r-1} (a_1a_i).
        \end{aligned}
    \end{equation}

    This allows us to rewrite
    \begin{equation}
        \begin{aligned}
        f_T &= f_R \times \frac{[\det(M_{S,2})]^2}{\prod_{\tau \in S} (\tau_1\tau_2)(\tau_2\tau_3)(\tau_3\tau_1)} \\
        &= f_R \times \frac{[\det(M_{P \cup S,a_1a_2}) / (a_1a_2)]^2}{\prod_{\tau \in S} (\tau_1\tau_2)(\tau_2 \tau_3) (\tau_3\tau_1) \times \prod_{i=2}^{r+1} (a_1 a_i)^2} \\
        &= f_R \times \frac{[\det(M_{P \cup S,a_1 a_2}) / (a_1 a_2)]^2}{\prod_{\tau \in P \cup S} (\tau_1\tau_2)(\tau_2\tau_3)(\tau_3\tau_1)}  \times \prod_{i=1}^r (a_i a_{i+1}) \\ 
        &= f_R \times f_{P \cup S} \times \prod_{i=1}^r (a_i a_{i+1}),
        \end{aligned}
    \end{equation}
    so the form corresponding to $T$ factorizes into two other forms, with an extra factor from the polygon on which $R$ and $S$ were glued.
\end{proof}

\section{Sphere moves are the only moves}\label{app:sphere}

\begin{lemma}[Pole for lone pair]\label{lem:lonepair}
    Suppose that in the triplets $T$, the indices $i,j$ appear together only in one triplet $(ijk)$. Then, defining $T' = (T \setminus \{(ijk)\}) \bigg|_{j \mapsto i}$, one has
    \begin{equation}
        [(ij) \times f_T] \bigg|_{j \mapsto i} = -f_{T'}.
    \end{equation}
\end{lemma}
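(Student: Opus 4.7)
The plan is to work directly with the determinantal formula \eqref{eqn:triplet_form}, writing $f_T = N_T^2 / D_T$ where $N_T = \det(M_{ab})/(ab)$ and $D_T = \prod_{\tau \in T}(\tau_1\tau_2)(\tau_2\tau_3)(\tau_3\tau_1)$. The hypothesis that $i,j$ appear together only in $(ijk)$ means $(ij)$ enters $D_T$ exactly once, via the contribution $(ij)(jk)(ki)$ from the triplet $(ijk)$. So $(ij) f_T$ is regular along $(ij)=0$ and can be evaluated at $j \mapsto i$.

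I will analyze the denominator first. After cancelling $(ij)$, the leftover piece of the $(ijk)$-contribution is $(jk)(ki)$, which under $j \mapsto i$ becomes $(ik)(ki) = -(ik)^2$. For every other $\tau \in T \setminus \{(ijk)\}$, the substitution $j \mapsto i$ turns the triplet's contribution into exactly the contribution of the relabelled triplet $\tau|_{j \mapsto i} \in T'$: no other triplet contains both $i$ and $j$ by hypothesis, so no spurious $(ii)$ appears. Hence the denominator of $[(ij) f_T]|_{j \mapsto i}$ equals $-(ik)^2 \cdot D_{T'}$.

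For the numerator I choose the deleted columns $a,b \notin \{i,j,k\}$ (assuming $n \ge 5$; the $n=4$ case is a direct check). After substituting $j \mapsto i$, the row of $M_{ab}$ indexed by $(ijk)$ has entries $(ik)$ in both column $i$ and column $j$, a $0$ in column $k$ (since $(ij)|_{j=i}=0$), and zeros elsewhere. The column operation $c_j \leftarrow c_j - c_i$ preserves the determinant and reduces this row to a single nonzero entry $(ik)$ in column $i$. Laplace expansion then gives $\det(M_{ab}|_{j \mapsto i}) = \pm (ik)\cdot \det(N)$, where $N$ is the submatrix obtained by deleting row $(ijk)$ and column $i$. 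Relabelling column $j$ of $N$ as column $i$ and inspecting the three types of remaining rows (triplets originally containing $i$ but not $j$; triplets originally containing $j$ but not $i$; triplets containing neither), one sees that $N$ agrees with $M'_{ab}$ up to negation of the rows indexed by those triplets of $T$ that contained $i$ (and not $j$). Thus $\det(M_{ab}|_{j \mapsto i}) = \pm (ik)\cdot \det(M'_{ab})$, so dividing by $(ab)$ and squaring yields $N_T^2|_{j \mapsto i} = (ik)^2 N_{T'}^2$.

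Putting the two pieces together,
\[
    [(ij) f_T]\big|_{j \mapsto i} = \frac{(ik)^2 N_{T'}^2}{-(ik)^2 D_{T'}} = -f_{T'},
\]
which is the desired identity. The main obstacle is the sign bookkeeping in the numerator — through the column operation, the Laplace expansion, and the row negations relating $N$ and $M'_{ab}$ — but all these signs are washed out because $N_T$ enters squared; the only surviving sign is the canonical $-1$ coming from $(ik)(ki) = -(ik)^2$ in the denominator. The remaining subtlety is the small-$n$ edge case when no $a,b$ can be chosen outside $\{i,j,k\}$, which I would handle by an ad hoc verification, or equivalently by noting that $N_T/(a'b')$ is independent of the choice of deleted columns and re-running the argument with a convenient pair.
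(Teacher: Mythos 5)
You take a genuinely different route from the paper, and the key step does not go through as written. The paper deletes columns $i$ and $j$: then $M_{ij}$ has its $(ijk)$-row with a \emph{single} nonzero entry $\la ij\ra$ in column $k$, so $\det(M_{ij}) = \pm\la ij\ra\det(M_{ijk})$ by Laplace expansion, the explicit $(ij)$ cancels against the triplet pole \emph{before} any substitution, and the cofactor $M_{ijk}$ under $j\mapsto i$ is \emph{identical}, entry by entry, to $M'_{ik}$. No column operations or sign bookkeeping are needed, and the surviving $-1$ comes purely from $(ik)(ki)$. Your choice $a,b\notin\{i,j,k\}$ keeps columns $i$ and $j$ around and forces you to relate them; this is where the proof breaks.

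Concretely, the claim that $N$ ``agrees with $M'_{ab}$ up to negation of the rows indexed by those triplets of $T$ that contained $i$'' is false. After $j\mapsto i$, the operation $c_j \leftarrow c_j - c_i$, and relabeling column $j$ of $N$ as column $i$, the discrepancy with $M'_{ab}$ is confined to a \emph{single entry} of each affected row: for $\tau=(icd)\in T$ containing $i$ but not $j$, the column-$i$ entry is $-(cd)$ in $N$ but $(cd)$ in $M'_{ab}$, while the remaining entries $(id),(ic)$ already agree. A per-entry sign flip in one column is not a row negation and does not preserve the determinant up to overall sign, so $\det(M_{ab}|_{j\mapsto i})=\pm(ik)\det(M'_{ab})$ does not follow; for instance with $T=\{(123),(234),(135)\}$, $i=1$, $j=2$, $k=3$, $a,b=4,5$, one obtains $\det(M_{ab}|_{j\mapsto i})=(13)\bigl[(34)(15)+(14)(35)\bigr]$ whereas $(ik)\det(M'_{ab})=(13)\bigl[(34)(15)-(14)(35)\bigr]$. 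The root cause is a sign looseness in the stated $C^\perp$ convention: the genuinely orthogonal choice puts $\la ca\ra=-\la ac\ra$ (not $\la ac\ra$) in column $b$ of the row for $(abc)$. With that convention the $(ijk)$-row after $j\mapsto i$ has $+\la ik\ra$ in column $i$ and $-\la ik\ra$ in column $j$, so the correct column operation is $c_j\leftarrow c_j+c_i$, and then $N$ matches $M'_{ab}$ \emph{exactly}, with no row negations at all. A sign-careful version of your argument therefore does work, but it is not the one you wrote, and the paper's choice of $M_{ij}$ sidesteps the entire issue (as well as your $n=4$ edge case).
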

\begin{proof}
    Computing the LHS prior to mapping the indices,
    \begin{equation}
        \begin{aligned}
            (ij) \times f_T &= \frac{[\det(M_{ij}) / (ij)]^2}{(jk)(ki) \prod_{(a,b,c) \in T \setminus \{(ijk)\}}(ab)(bc)(ca)} \\
            &= \frac{\bigg[\det\begin{pmatrix} (ij) & 0 \\ \vdots  & M_{ijk}\end{pmatrix} / ij\bigg]^2}{(jk)(ki) \prod_{(a,b,c) \in T \setminus \{(ijk)\}}(ab)(bc)(ca)} \\
            &= \frac{\det(M_{ijk})^2}{(jk)(ki) \prod_{(a,b,c) \in T \setminus \{(ijk)\}}(ab)(bc)(ca)}
        \end{aligned}
    \end{equation}
    where $M_{ijk}$ is the matrix $M$ with columns $i,j,k$ and the row corresponding to triplet $(ijk)$ removed. Then, mapping the index $j \mapsto i$, one finds
    \begin{equation}
        \begin{aligned}
            [(ij) \times f_T] \bigg|_{j \mapsto i} &= -\frac{[\det(M_{ijk}) / (ik)]^2}{\prod_{(a,b,c) \in T \setminus \{(ijk)\}}(ab)(bc)(ca)} \bigg|_{j \mapsto i} \\
            &= -\frac{[\det(M'_{ik}) / (ik)^2]}{\prod_{(a,b,c) \in T'}(ab)(bc)(ca)} \\
            &= -f_{T'},
        \end{aligned}
    \end{equation}
    as desired.
\end{proof}

\begin{lemma}[Poles for irreducible hypertrees]\label{lem:poleirred}
    For $T$ a set of triplets corresponding to an irreducible hypertree, the denominator of $f_T$ matches the doublets $D(T)$.
\end{lemma}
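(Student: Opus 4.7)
The plan is to combine three ingredients: the Castravet--Tevelev irreducibility of the numerator polynomial, the Parke--Taylor decomposition, and the parity structure of the determinantal formula. Writing $f_T = P^2/Q$ in lowest terms with $P$ irreducible (this is the Castravet--Tevelev input, valid for irreducible $T$), the strategy is to pin down the multiplicity $v_{(ij)}(Q)$ of each Pl\"ucker in $Q$ by sandwiching it between an upper bound of $1$ and a parity constraint matching $m_{ij} \bmod 2$, where $m_{ij}$ denotes the number of triplets in $T$ containing both $i$ and $j$.

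The first step extracts a consequence of irreducibility: since for an irreducible hypertree $P$ has degree $\geq 2$ in the Pl\"uckers, it is not associate to any single $(ij)$, so $(ij) \nmid P$ for every pair. Consequently every appearance of $(ij)$ in $f_T$ must come through $Q$, and in particular $Q$ is (up to sign) a product of distinct Pl\"uckers to nonnegative exponents $v_{(ij)}(Q)$.

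The second step uses the Parke--Taylor expansion from \eqref{PTexp}. After fixing any orientation of $T$, $f_T$ is a finite sum of Parke--Taylor factors, each of which has only simple poles along every coordinate hyperplane $(ij)=0$. A finite sum of rational functions with at most simple poles along a given hyperplane still has at most a simple pole there, yielding the upper bound $v_{(ij)}(Q) \leq 1$ for every pair $ij$.

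The third step uses the determinantal formula $f_T = (\det M_{ab}/(ab))^2 / \prod_{\tau \in T}(\tau_1\tau_2)(\tau_2\tau_3)(\tau_3\tau_1)$. The natural denominator factors as $\prod_{ij}(ij)^{m_{ij}}$. Using unique factorization in the polynomial ring of Pl\"uckers, write $\det M_{ab}/(ab) = h \cdot P$ where $h$ collects the non-$P$ prime factors; since these must cancel against pieces of the natural denominator (which is a product of Pl\"uckers), $h$ is itself a product of Pl\"uckers. Then $Q = \prod_{ij}(ij)^{m_{ij}}/h^2$, so $v_{(ij)}(Q) = m_{ij} - 2 v_{(ij)}(h) \equiv m_{ij} \pmod{2}$. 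Combined with $v_{(ij)}(Q) \in \{0,1\}$, this forces $v_{(ij)}(Q) = m_{ij} \bmod 2$, which equals $1$ precisely when $ij \in D(T)$ and $0$ otherwise. Hence $Q = \prod_{ij \in D(T)}(ij)$, as claimed. The principal obstacle is the Castravet--Tevelev irreducibility of $P$, which is imported as a black box; reproducing it from scratch would require a significant digression into hypertree geometry and the structure of the associated divisor in $\overline{\mathcal{M}}_{0,n}$.
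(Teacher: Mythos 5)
Your proof is correct, but it follows a genuinely different route from the paper's. The paper's argument is constructive and combinatorial: for $ij \in D(T)$ it applies square moves (whose form-invariance is imported from Castravet--Tevelev and Cachazo et al.) to bring $i,j$ together into a single triplet, then invokes the ``lone pair'' Lemma~\ref{lem:lonepair} together with a case analysis showing that the residual triplet system $T'$ still satisfies the non-vanishing condition \eqref{eqn:non-vanishing}; for $ij \notin D(T)$ it again uses square moves to eliminate all co-occurrences of $i,j$ and then reads off from the determinantal formula that $(ij)$ cannot appear in the denominator. Your argument instead sandwiches the pole order $v_{(ij)}(Q)$ between the parity constraint $v_{(ij)}(Q)\equiv m_{ij}\pmod 2$ (from the determinantal formula and the block factorization $\det M_{ab}/(ab)=hP$) and the upper bound $v_{(ij)}(Q)\le 1$ (from the Parke--Taylor expansion, whose summands each have only simple poles), with the Castravet--Tevelev irreducibility of $P$ used to ensure $(ij)\nmid P$ and hence that $Q$ is an honest product of Pl\"uckers. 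The paper's version buys self-containedness and reuses the lone-pair lemma elsewhere; yours buys brevity and a cleaner conceptual picture, at the cost of leaning harder on the irreducibility of the numerator polynomial, which the paper's proof of this specific lemma does not directly need (it uses only combinatorial irreducibility of $T$).

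One small presentational caution: the phrase ``unique factorization in the polynomial ring of Pl\"uckers'' is slightly off, since the Pl\"ucker coordinates are not algebraically independent. The argument should be run either in the polynomial ring $\mathbb{C}[c_{ai}]$ in the matrix entries, or in the homogeneous coordinate ring of $G(2,n)$; both are UFDs in which the $(ij)$ are pairwise non-associate irreducibles, so the step $h^2 \mid N \Rightarrow h$ is a product of Pl\"uckers goes through, and $P$ being irreducible of degree $\ge 2$ in the Pl\"uckers (equivalently, cutting out a non-boundary divisor of $\overline{\mathcal{M}}_{0,n}$) gives $(ij)\nmid P$ as you use.
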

\begin{proof}
    For $\{i,j\} \in D(T)$, we can repeatedly apply square moves $(ija),(ijb) \mapsto (iab),(jab)$, rewriting $T$ such that $i,j$ appear in only one triplet $(ijk)$. This allows us to apply Lemma \ref{lem:lonepair} to conclude that $(ij)$ must be a pole of $f_T$ if $T' = (T \setminus \{(ijk)\})\bigg|_{j \mapsto i}$ satisfies the non-vanishing condition (\ref{eqn:non-vanishing}).

    Suppose the triplets $T'$ fail the non-vanishing condition. Then, there exists some subset of triplets $R' \subset T'$ such that
    \[
        \left|\bigcup_{(abc) \in R'} \{a,b,c\}\right| < |R'|+2.
    \]
    Denote by $R \subset T$ the subset of triplets that maps to $R'$ after the substitution $j \mapsto i$. Note that $(ijk) \notin R$.
    
    If the index $j$ does not appear in $R$, then $R = R'$, and consequently $T$ also fails the non-vanishing condition, leading to a contradiction with $T$ coming from an irreducible hypertree.

    Using that $j$ is the only index in $R$ but not in $R'$, one has
    \[
        \left|\bigcup_{(abc) \in R} \{a,b,c\}\right| = \left|\bigcup_{(abc) \in R'} \{a,b,c\}\right| + 1 \leq |R'|+2 = |R|+2.
    \]
    As $T$ satisfies the nonvanishing condition, the left-hand side and right-hand side must be equal. So $R$ is a valid subset of $T$.

    If $R$ does not correspond to a planar diagram, then $T$ has a valid non-planar subset, contradicting that it came from an irreducible hypertree.

    If $R$ does correspond to a planar diagram, we consider two cases both of which lead to contradictions. If the index $k$ is contained in $R$, then $R \cup \{(ijk)\}$ causes $T$ to fail the non-vanishing condition by adding a triplet but not an index. If the index $k$ is not contained in $R$, we either have that $R \cup \{(ijk)\} = T$ leading to a factorization of $f_T = f_R \times \frac{(ij)}{(ik)(kj)}$, or $R \cup \{(ijk)\} \subset T$ is a valid non-planar subset.

    Thus, $T'$ does not fail the non-vanishing condition, making $(ij)$ a pole of $f_T$. This shows that if $ij$ is a doublet of $T$, then $(ij)$ is a pole of $f_T$.

    To show the opposite, note that one can use a square move to replace any triplets with a repeated doublet $(ija),(ijb)$ with two triplets that do not have this doublet $(iab),(jab)$. For any $ij$ that is not a doublet of $T$, it appears in an even number of triplets, to which the square move can be applied to remove any appearances of the doublet. Using the set of triplets where the doublet $ij$ does not appear, one can see from the determinant formula that $(ij)$ cannot appear in the denominator.
\end{proof}

\begin{lemma}[Doublets from the form]\label{lem:doublets-from-form}
    For a set of triplets $T$, the following definitions of the doublets are equivalent:
    \begin{itemize}
        \item $D(T) = \{\{i,j\} | \text{$i$ and $j$ appear together in an odd number of triplets}\}$.
        \item $D(T) = \{\{i,j\} | \text{$(ij)$ appears with an odd power in $f_T$}\}$.
    \end{itemize}
\end{lemma}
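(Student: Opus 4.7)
The plan is to induct on a measure of reducibility of $T$, using Lemma \ref{lem:factorization} as the inductive step. Writing $f_T$ in lowest terms as a product of Pl\"ucker coordinates, let $O(T)$ denote the set of pairs $\{i,j\}$ for which $(ij)$ has odd exponent in $f_T$. The goal is to show $O(T) = D(T)$.

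\textbf{Base case.} If $T$ is irreducible, Lemma \ref{lem:poleirred} shows that the denominator of $f_T$ in lowest terms is exactly $\prod_{\{i,j\} \in D(T)} (ij)$. The discussion at the end of Section~4.1 further states that, in the irreducible case, the numerator of $f_T$ in lowest terms is the square of an irreducible polynomial in Pl\"uckers, so every Pl\"ucker appears with even exponent there. Consequently the parity of the exponent of $(ij)$ in $f_T$ is determined entirely by the denominator, giving $O(T) = D(T)$.

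\textbf{Inductive step.} Suppose $T$ admits a nontrivial factorization via Lemma \ref{lem:factorization}, with valid subset $R \subsetneq T$, set of shared indices $A = \{a_1, \dots, a_r\}$, and polygon triangulation $P$. Then
\[
f_T = f_R \cdot f_{(T \setminus R) \cup P} \cdot \prod_{i=1}^{r} (a_i a_{i+1}).
\]
Since $O$ distributes over products as symmetric difference,
\[
O(T) = O(R) \,\triangle\, O\bigl((T \setminus R) \cup P\bigr) \,\triangle\, \{\{a_i, a_{i+1}\} : i=1, \dots, r\}.
\]
By the inductive hypothesis applied to $R$ and $(T \setminus R) \cup P$, the first two symmetric-difference terms equal $D(R)$ and $D((T \setminus R) \cup P)$, respectively. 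Since doublet parities add under disjoint union of triplet sets, $D(T) = D(R) \,\triangle\, D(T \setminus R)$ and $D((T \setminus R) \cup P) = D(T \setminus R) \,\triangle\, D(P)$. Substituting yields
\[
O(T) = D(T) \,\triangle\, D(P) \,\triangle\, \{\{a_i, a_{i+1}\} : i=1, \dots, r\}.
\]

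\textbf{Polygon identity and main obstacle.} It remains to verify the combinatorial identity $D(P) = \{\{a_i, a_{i+1}\} : i = 1, \dots, r\}$. This follows from the general fact that in any triangulation of a convex $r$-gon, each boundary edge lies in exactly one triangle (odd parity, hence a doublet) while each interior diagonal lies in exactly two triangles (even parity, hence not a doublet); one may check this directly for the fan triangulation $P = \{(a_1, a_i, a_{i+1}) : i = 2, \dots, r-1\}$ chosen in Lemma \ref{lem:factorization}. The main obstacle is the well-foundedness of the induction: one must verify that both $R$ and $(T \setminus R) \cup P$ still satisfy the nonvanishing condition \eqref{eqn:non-vanishing} (so that Lemma \ref{lem:poleirred} and the inductive hypothesis can legitimately be invoked) and that they are strictly simpler than $T$ in a suitable measure---for instance, by tracking the irreducible hypertree components produced by iterating the factorization, which terminate at either individual Pl\"ucker monomials or irreducible pieces, as discussed in Section~4.1.
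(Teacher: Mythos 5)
Your proposal is correct and follows essentially the same route as the paper: induct via the factorization formula of Lemma~\ref{lem:factorization}, use Lemma~\ref{lem:poleirred} for the irreducible base case, and track parities through the factors (the paper's phrase ``the parity coming from $D(P)$ is canceled by $\prod_i(a_i a_{i+1})$'' is exactly your symmetric-difference bookkeeping and polygon identity). You are somewhat more explicit than the paper about both the $\triangle$-bookkeeping and the well-foundedness concern; the paper handles the latter silently by insisting on a valid subset $R$ ``that does not share all its indices with $T \setminus R$,'' which guarantees $\lvert(T\setminus R)\cup P\rvert < \lvert T\rvert$, though neither you nor the paper fully spell out why such an $R$ always exists when $T$ is reducible.
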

\begin{proof}
    For diagrams corresponding to irreducible hypertrees, the numerator of the form has no monomials, so this statement is a corollary of Lemma \ref{lem:poleirred}.

    Otherwise, suppose the statement is true for all sets of triplets satisfying $|T| \leq n$. We show the statement is then true for arbitrary triplets $T$ where $|T| = n$. Supposing that $T$ not an irreducible hypertree, we apply the factorization formula (\ref{eqn:factorization-body}) for some proper subset $R$ that does not share all its indices with $T \setminus R$,
    \begin{equation}
        f_T = f_R \times f_{(T \setminus R) \cup P} \times \prod_{i=1}^r (a_i a_{i+1}).
    \end{equation}
    Since $|R| < |T| = n$ and $|(T \setminus R) \cup P| < |T| = n$, we can apply our inductive hypothesis. The parity of odd powers of monomials on the RHS is exactly the parity of the doublets appearances in $D(R)$ and $D(T \setminus R)$, where the parity coming from $D(P)$ is canceled by $\prod_{i=1}^r (a_i a_{i+1})$. Using a single triplet as the base case, the statement is true for all $n$.
\end{proof}

\begin{lemma}[Same doublets imply sphere moves]\label{lem:doublet-to-sphere}
    Suppose two sets of triplets $T$ and $T'$ with nonvanishing forms have the same doublets, $D(T) = D(T')$. Then, these sets of triplets are related by a sequence of sphere moves.
\end{lemma}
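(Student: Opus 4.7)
The plan is to induct on $m := |T \triangle T'|$. The base case $m = 0$ gives $T = T'$ and needs no moves. For the inductive step, it suffices to exhibit a single sphere move exchanging some nonempty $R \subseteq T \setminus T'$ for some $R' \subseteq T' \setminus T$: sphere moves preserve $f_T$ by \cite{Castravet:2013,Cachazo:2019} (the already established $(1) \Rightarrow (2)$ of Theorem \ref{thm:moves}), and hence preserve $D(T)$ by Lemma \ref{lem:doublets-from-form}. So $T_1 := (T \setminus R) \cup R'$ has nonvanishing form, satisfies $D(T_1) = D(T')$, and has $|T_1 \triangle T'| = m - |R| - |R'| < m$, closing the induction.

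To build the sphere move, I formalize the alternating algorithm demonstrated on the $6$-point example in the text. Pick any $t_0 \in T \setminus T'$ (nonempty, else $T = T'$) and initialize $R = \{t_0\}$, $R' = \emptyset$. Then iterate: whenever some triplet of $(T' \setminus T) \setminus R'$ contains a pair in $D(R) \triangle D(R')$, append it to $R'$; whenever some triplet of $(T \setminus T') \setminus R$ contains a pair in $D(R) \triangle D(R')$, append it to $R$; terminate when neither applies. Monotonicity of $|R| + |R'|$ forces termination at some pair $(R, R')$. Two things must then be verified: (i) $D(R) = D(R')$ at termination, and (ii) coloring $R$ black and $R'$ white, the complex $R \cup R'$ is a simplicial triangulation of $S^2$.

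For (i), the $\mathbb{F}_2$-linearity of the doublet map on multisets of triplets, combined with $D(T) = D(T')$, yields $D(T \setminus T') = D(T' \setminus T)$. If some pair $\{i,j\}$ lay in $D(R) \triangle D(R')$ at termination, then comparing parities on $\{i,j\}$ in this global equality would force a witness triplet outside $R \cup R'$ on one of the two sides containing $\{i,j\}$, contradicting the termination condition. For (ii), once $D(R) = D(R')$, each pair in $D(R)$ appears an odd number of times on each side; using the nonvanishing inequality $|\bigcup_{\tau \in S}\tau| \geq |S|+2$ on $R$ and $R'$ (inherited from $T$ and $T'$), and if necessary refining the algorithm to forbid raising any pair's multiplicity within its color above $1$, each such pair appears exactly once on each side. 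Hence $R \cup R'$ is a closed bicolored simplicial pseudomanifold, connected by the algorithm's growth rule. The Euler-characteristic argument at the end of Section \ref{sec:determinant-and-moves} then forces genus zero: a bicolored triangulation of a surface with $\chi = 2 - 2g \leq 0$ has $V = |R| - g \leq |R|$ vertices but $|R|$ triangles in $R$, violating nonvanishing. The hard part will be the joint bookkeeping of (i) and (ii)---specifically, ensuring multiplicity-$1$ on each edge and ruling out higher-genus closures; this is precisely where the nonvanishing hypothesis is indispensable, and it is what distinguishes sphere moves from the surface-move analogs on tori and higher-genus surfaces mentioned in Section \ref{sec:determinant-and-moves}.
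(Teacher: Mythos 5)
Your proof is essentially the paper's: the same alternating growth of $R\subseteq T\setminus T'$ and $R'\subseteq T'\setminus T$, the same mod-2 parity argument for closure at termination (the paper phrases it as $D(R\cup R')=\emptyset$, equivalent to $D(R)=D(R')$ since $R\cap R'=\emptyset$), and the same Euler-characteristic-plus-nonvanishing argument for genus zero, with your explicit induction on $|T\triangle T'|$ matching the paper's informal ``repeat until no triplets differ.'' On the piece you flag as the hard part, though, the paper takes a cleaner tack than the one you sketch: rather than trying to refine the algorithm to forbid a pair's multiplicity within a color from exceeding one, it allows non-simplicial cell complexes with multi-edges outright (the sphere moves it defines already admit arbitrary triangulations of $S^2$, so there is no need for the complex to be simplicial), and it separately rules out pinch-points by a second Euler count with handles. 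Your proposed ``refinement'' is both unnecessary and not obviously achievable, and pinch-points go unaddressed in your sketch; since the paper itself admits it glosses over some manifold-versus-pseudomanifold details, you have correctly located the genuine subtlety, but you should handle it the way the paper does rather than by constraining the algorithm.
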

\begin{proof}    
    Without loss of generality, suppose $T$ and $T'$ have no triplets in common, $T \cap T' = \emptyset$. If they did share triplets, the shared triplets $T \cap T'$ can simply be ignored as we still have $D(T \setminus (T \cap T')) = D(T' \setminus (T \cap T'))$, and can construct sphere moves for the remaining triplets. Now, since $T,T'$ are disjoint, the equality $D(T) = D(T')$ implies that $D(T \cup T') = \emptyset$.

    We will now construct sets $R \subset T$ and $R' \subset T'$ related by a sphere move. We will do this by starting with $R, R' = \emptyset$, and incrementally adding triplets to each via the algorithm below. As this algorithm unfolds, it correspondingly constructs a cell complex with black faces labeled by $R$, white faces labeled by $R'$, and vertices labeled by indices in $R \cup R'$.
    
    At any step, each edge must appear in at most two faces. Each time a triplet is added in the algorithm below, a face is added to the complex. This face is to be glued to any edge that does not yet have two faces attached; otherwise new edges are created. Note that this means that complex may not be simplicial: the same pair of vertices may have multiple edges between them if they appear together in more than two faces. This procedure ensures that the boundary of the complex consists of edges identified with $D(R \cup R')$.

    To start, we choose a single triplet $t \in T$ and add it to $R$. Now, we proceed with the following cycle, repeating until we complete a full cycle without adding any new triplets to $R$ or $R'$.
    \begin{enumerate}
        \item Suppose there exists a triplet in $T \setminus R$ containing a pair of indices $ij \in D(R) \setminus D(R')$; add this triplet to $R$. Repeat this step until no such triplet exists. (Correspondingly in the complex, attach as many black faces to existing black faces as possible.)
        \item Suppose there exists a triplet in $T' \setminus R'$ containing a pair of indices $ij \in D(R) \setminus D(R')$; add this triplet to $R'$. Repeat this step until no such triplet exists. (Correspondingly in the complex, attach as many white faces to existing black faces as possible.)
        \item Suppose there exists a triplet in $T' \setminus R'$ containing a pair of indices $ij \in D(R') \setminus D(R)$; add this triplet to $R'$. Repeat this step until no such triplet exists. (Correspondingly in the complex, attach as many white faces to existing white faces as possible.)
        \item Suppose there exists a triplet in $T \setminus R$ containing a pair of indices $ij \in D(R') \setminus D(R)$; add this triplet to $R$. Repeat this step until no such triplet exists. (Correspondingly in the complex, attach as many black faces to existing white faces as possible.)
    \end{enumerate}
    Since $T$ and $T'$ are finite, this procedure must terminate. We now seek to show that the resulting complex has no boundary, i.e. $D(R \cup R') = \emptyset$.
    
    Suppose $\exists ~ij \in D(R \cup R')$. Then, either $ij \in D(R) \setminus D(R')$ or $ij \in D(R') \setminus D(R)$. Since the algorithm terminating implies no triplet satisfying the necessary conditions existed in each step, one finds that $T \setminus R$ and $T' \setminus R'$ contain no triplets with indices $ij$. Consequently, $ij \not \in D(T \setminus R)$ and $ij \not \in D(T'\setminus R')$. But then,
    \begin{equation}
        ij \in D( [R \cup R'] \cup [T \setminus R] \cup [T' \setminus R']) = D(T \cup T'),
    \end{equation}
    which is a contradiction with $D(T \cup T') = \emptyset$.

    So, the complex with black faces $R$ and white faces $R'$ has no boundary, and each edge is in exactly two faces. The number of faces is $F = |R| + |R'|$ and the number of edges is $E = \frac{3}{2}F$. The number of vertices $V$ is equal to the number of distinct indices in $R \cup R'$. The non-vanishing condition for either $T$ or $T'$ requires that $|R| \leq V-2$ and $|R'| \leq V-2$. The Euler characteristic of the complex is $\chi = V-E+F = V - (|R|+|R'|)/2 \geq V-(V-2) = 2$, implying that it is a sphere.

    There is a possible obstruction to identifying the complex as the triangulation of a sphere: one might have had a pinch-point, i.e. a vertex with multiple cycles of faces around it. However, a similar counting argument to the main text reveals that the existence of a pinch-point would contradict the non-vanishing condition. We remove the pinch-points by replacing each vertex by a pair of vertices with an extra edge for each cycle, e.g.
    \[
    \vcenter{\hbox{\begin{tikzpicture}
        \draw (0.1,0.98) -- (2,0) -- (0.1,-0.98);
        \draw[fill=gray!50!white] (0,0) ellipse (0.5 and 1);
        \draw (3.9,0.98) -- (2,0) -- (3.9,-0.98);
        \draw[fill=gray!50!white] (4,0) ellipse (0.5 and 1);
        \draw[fill=red] (2,0) circle (0.05);
    \end{tikzpicture}}}
    \hspace{2em} \rightarrow \hspace{2em}
    \vcenter{\hbox{\begin{tikzpicture}
        \draw[fill=gray!50!white] (0,0) ellipse (0.5 and 1);
        \draw[fill=gray!50!white] (4,0) ellipse (0.5 and 1);
        \draw (0,1) -- (4,1);
        \draw (0,-1) -- (4,-1);
        \draw[blue,dashed] (2,-1) arc (270:90:0.5 and 1);
        \draw[blue] (2,1) arc (90:-90:0.5 and 1);
        \draw[fill=red] (2,1) circle (0.05);
        \draw[fill=red] (2,-1) circle (0.05);
    \end{tikzpicture}
    }}
    \]
    Suppose this procedure introduces $k$ new vertices, $l$ new handles, and $k+l$ new edges (at each vertex, there is one more new edge than there are new handles). Then, the number of faces is still $\tilde F = |R|+|R'|$, the number of edges is now $\tilde E = \frac{3}{2} F + k + l$, and the number of vertices is $\tilde V = V + k$. Using the non-vanishing condition, one finds $\tilde \chi = \tilde V - \tilde E + \tilde F \geq 2 - l$. But with the $l$ extra handles one also has genus at least $l$, so $\tilde \chi \leq 2-2l$. So, we reach a contradiction unless $l=0$, the case when no such pinch-points existed.

    Consequently, $R$ and $R'$ together form a bicolored triangulation of the sphere, and are related by a sphere move. Thus, for any $T$ and $T'$ that have the same doublets, there is a sphere move that allows some subset of the different triplets to be related. Repeating until no triplets differ we find a sequence of sphere moves that relate $T$ and $T'$.

\end{proof}

We note that in the previous proof, we gloss over some topological details on when a pure 2-dimensional cell complex is a manifold.

\newtheorem*{thm:moves}{Theorem \ref{thm:moves}}
\begin{thm:moves}
    Let $T$ and $T'$ be two sets of triplets with nonvanishing forms (i.e. $T, T'$ satisfy \eqref{eqn:non-vanishing}). The following are equivalent:
    \begin{enumerate}[label=(\arabic*)]
        \item There is a sequence of sphere moves relating $T$ and $T'$.
        \item The corresponding forms are equal: $f_T = f_{T'}$.
        \item The corresponding doublets are the same: $D(T) = D(T')$.
    \end{enumerate}
\end{thm:moves}
\begin{proof}
    $(1) \implies (2)$ is a consequence of \cite{Cachazo:2019}.

    $(2) \implies (3)$ is a corollary of Lemma \ref{lem:doublets-from-form}.

    $(3) \implies (1)$ is proven by Lemma \ref{lem:doublet-to-sphere}.
\end{proof}

\section{Geometries induced by planar embeddings are strongly connected}\label{app:spherical-diagrams}

Before tackling the full question to do with geometries that come from orientations of triplets, we prove a more general result for geometries that come from a partial order on $\{1,\cdots,n\}$. For a partial order $P = \{ i \prec j, \cdots\}$, the set of linear extensions $S_n^{(P)}$ is
\begin{equation}
    S_n^{(P)} = \{\sigma \in S_n : \sigma_{i} < \sigma_{j} ~ \forall ~ i \prec j \in P\}.
\end{equation}
These define a corresponding geometry via a union of oriented regions,
\begin{equation}
    R_P(\varepsilon) = \bigcup_{\sigma \in S_n^{(P)}} R(\varepsilon\sigma),
\end{equation}
where $\varepsilon \in \{+,-\}^n$ encodes the sign of each label.

\begin{lemma}[Linear extensions of a poset are connected]
    Any element of $S_n^{(P)}$ can be transformed into any other using only adjacent transpositions without leaving the set.
\end{lemma}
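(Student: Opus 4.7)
The plan is to prove the lemma by a bubble-sort style induction, exploiting the fact that linear extensions of a poset form a connected subgraph of the Cayley graph of $S_n$ generated by adjacent transpositions.

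Fix $\sigma,\tau\in S_n^{(P)}$ and let $i$ be the smallest index at which they differ. I would argue by induction on the number of positions where $\sigma$ and $\tau$ disagree (equivalently, downward induction on $i$). Set $a=\tau_i$; since $\sigma$ and $\tau$ agree on positions $<i$, the element $a$ sits at some position $k>i$ in $\sigma$. The goal is to move $a$ leftward in $\sigma$ by adjacent transpositions until it occupies position $i$, without ever leaving $S_n^{(P)}$.

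The key step, and the main thing to verify, is that each of these adjacent swaps is legal. Suppose at some intermediate stage $a$ sits at position $j$ with $i<j\le k$ and $b:=\sigma_{j-1}$ is its left neighbor. To swap $a$ and $b$ we need $b\not\prec a$ in $P$, since after the swap $a$ precedes $b$. To see this, note that $b$ occurs at position $\ge i$ in $\sigma$ (it was not in the common prefix), hence also at position $\ge i$ in $\tau$; but $\tau_i=a$ and $b\ne a$, so $b$ sits strictly to the right of $a$ in $\tau$. Since $\tau\in S_n^{(P)}$ this forces $b\not\prec a$, so the transposition $(a,b)$ produces a permutation still in $S_n^{(P)}$.

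After $k-i$ such swaps the updated permutation agrees with $\tau$ on positions $1,\dots,i$, strictly reducing either the first disagreement index or the total number of disagreements; the induction hypothesis then finishes the job. The main obstacle is exactly the legality check above, but it is handled uniformly by comparing the relative order of $a$ and $b$ in $\tau$. No other subtleties appear, so the argument yields the desired sequence of adjacent transpositions entirely inside $S_n^{(P)}$.
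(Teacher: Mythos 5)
Your proof is correct and closely parallels the paper's: both are bubble-sort arguments in which the legality of each adjacent swap is deduced from one of the endpoint permutations being a linear extension. The only real implementation difference is that the paper first relabels so that the identity permutation is a linear extension and then uses the inversion number as a decreasing potential, whereas you transform $\sigma$ directly into $\tau$ by moving $\tau_i$ leftward into the first position of disagreement; your version avoids the relabeling step at the cost of slightly more explicit bookkeeping.
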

\begin{proof}
    By relabeling the indices if necessary, suppose without loss of generality that the identity permutation $12\cdots n$ is in $S_n^{(P)}$.
    
    The pair of indices $i < j$ is an inversion of a permutation $\sigma$ if $\sigma_i > \sigma_j$. We denote the number of inversions by $\mathrm{inv}(\sigma)$. The only permutation with $\mathrm{inv}(\sigma) = 0$ is $12\cdots n$.

    We show that for any $\sigma \in S_n^{(P)}$ with $\mathrm{inv}(\sigma) > 0$, there exists a permutation $\sigma' \in S_n^{(P)}$ related by an adjacent transposition such that $\mathrm{inv}(\sigma') = \mathrm{inv}(\sigma)-1$.

    Because $\sigma \neq 12 \cdots n$, there is some $i$ such that $\sigma_i > \sigma_{i+1}$. Notice that $i \prec {i+1} \not\in P$ as these elements are oppositely ordered in $12\cdots n \in S^{(P)}_n$. Consequently, the permutation $\sigma' = \sigma (i ~ i+1)$ is also in $S^{(P)}_n$. This permutation has the same inversions as $\sigma$, except without the pair $i,i+1$, so $\mathrm{inv}(\sigma') = \mathrm{inv}(\sigma) - 1$ as desired.

    By induction on the inversion number, with the identity as the base case, every permutation in $S^{(P)}_n$ can be brought to the identity via adjacent transpositions without leaving $S^{(P)}_n$.
\end{proof}

\begin{corollary}\label{cor:poset-connected}
    The geometry $R_P$ corresponding to the poset $P$ is strongly connected.
\end{corollary}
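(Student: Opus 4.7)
The plan is to reduce the statement to the previous lemma by using the adjacency criterion for oriented regions spelled out in Section 5.1. Recall from there that two oriented regions $R(1,\varepsilon \sigma)$ and $R(1,\varepsilon' \sigma')$ share a codimension-$1$ facet precisely when their signed permutations differ by a single adjacent transposition of two neighboring entries (the boundary lying along the corresponding Pl\"ucker $(ij)=0$). This is the only input from the geometric side that I need.

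First I would fix any two oriented regions $R(1,\varepsilon \sigma)$ and $R(1,\varepsilon' \sigma')$ appearing in $R_P(\varepsilon)$; by definition both $\sigma, \sigma' \in S_n^{(P)}$. Since all regions in $R_P(\varepsilon)$ share the same fixed sign vector $\varepsilon$ (the signs are determined once and for all by the definition of $R_P(\varepsilon)$), it suffices to connect $\sigma$ to $\sigma'$ inside $S_n^{(P)}$ by a sequence of adjacent transpositions. But that is exactly the content of the preceding lemma: there is a sequence
\[
\sigma = \sigma^{(0)} \to \sigma^{(1)} \to \cdots \to \sigma^{(N)} = \sigma'
\]
of permutations in $S_n^{(P)}$ such that consecutive $\sigma^{(t)}, \sigma^{(t+1)}$ differ by an adjacent transposition.

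Each step $\sigma^{(t)} \to \sigma^{(t+1)}$ then corresponds to crossing a codimension-$1$ facet $(ij)=0$ between the adjacent oriented regions $R(1,\varepsilon\sigma^{(t)})$ and $R(1,\varepsilon \sigma^{(t+1)})$, both of which lie in $R_P(\varepsilon)$. Concatenating these facet-crossings produces a walk in $R_P(\varepsilon)$ from $R(1,\varepsilon\sigma)$ to $R(1,\varepsilon\sigma')$ passing only through facets, which is precisely the definition of strong connectedness.

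I do not anticipate any serious obstacle: the only non-trivial ingredient is the combinatorial lemma on linear extensions, which has already been established, and the geometric translation between ``adjacent transposition" and ``shared codimension-$1$ facet" is just the adjacency criterion recalled in Section 5.1. The minor point to double-check is that the cyclic identification in the convention $\sigma_1 = 1$ does not cause trouble, but this is handled because the lemma operates on full permutations of $[n]$ and the cyclic shift only identifies redundant labels of the same region.
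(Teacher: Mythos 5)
Your proof is correct and is precisely the argument the paper leaves implicit (the corollary is stated without a separate proof). You correctly identify the two ingredients: the lemma produces a path of adjacent transpositions staying inside $S_n^{(P)}$, and the adjacency criterion of Section 5.1 converts each such transposition into a shared codimension-$1$ facet, while the fixed sign vector $\varepsilon$ (attached to labels, not positions) is unchanged by any swap.

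One sharpening of your ``minor point to double-check'': the real reason the cyclic convention causes no trouble is not so much about redundant labels, but that the lemma's inductive step only ever swaps consecutive positions $j,j{+}1$ with $1\le j\le n-1$. It never uses the wrap-around adjacency between the last and first positions, which is the one case where the adjacency rule $R(1,\cdots,\varepsilon_i i)\big|_{(1i)=0}=R(1,-\varepsilon_i i,\cdots)\big|_{(1i)=0}$ introduces a sign flip and would carry you from $R_P(\varepsilon)$ into a different $R_P(\varepsilon')$. Because the lemma confines itself to the non-wrap-around transpositions, every intermediate region genuinely lies in $R_P(\varepsilon)$, and your concatenated walk is valid.
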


We observe that the set of permutations defined by an orientation of triplets $S^{(T)}$ can be decomposed into a disjoint union of permutations defined by posets $S^{(P)}$ by choosing each possible cyclic shift of the triplets, except those involving $1$ which is fixed to be the lowest index. For example, with oriented triplets $T = \{(123),(345),(156),(264)\}$, we select a cyclic shift for each triplet not involving $1$,
\begin{equation}
    \begin{aligned}
        P_1 &= \{1 \prec 2 \prec 3, ~ 3 \prec 4 \prec 5, ~ 1 \prec 5 \prec 6, ~ 2 \prec 6 \prec 4, 1 \prec j ~\forall~ j\}, \\
        P_2 &= \{1 \prec 2 \prec 3, ~ 4 \prec 5 \prec 3, ~ 1 \prec 5 \prec 6, ~ 2 \prec 6 \prec 4, 1 \prec j ~\forall~ j\}, \\
        P_3 &= \{1 \prec 2 \prec 3, ~ 5 \prec 3 \prec 4, ~ 1 \prec 5 \prec 6, ~ 2 \prec 6 \prec 4, 1 \prec j ~\forall~ j\}, \\
        P_4 &= \{1 \prec 2 \prec 3, ~ 3 \prec 4 \prec 5, ~ 1 \prec 5 \prec 6, ~ 4 \prec 2 \prec 6, 1 \prec j ~\forall~ j\}, \\
        P_5 &= \{1 \prec 2 \prec 3, ~ 4 \prec 5 \prec 3, ~ 1 \prec 5 \prec 6, ~ 4 \prec 2 \prec 6, 1 \prec j ~\forall~ j\}, \\
        P_6 &= \{1 \prec 2 \prec 3, ~ 5 \prec 3 \prec 4, ~ 1 \prec 5 \prec 6, ~ 4 \prec 2 \prec 6, 1 \prec j ~\forall~ j\}, \\
        P_7 &= \{1 \prec 2 \prec 3, ~ 3 \prec 4 \prec 5, ~ 1 \prec 5 \prec 6, ~ 4 \prec 2 \prec 6, 1 \prec j ~\forall~ j\}, \\
        P_8 &= \{1 \prec 2 \prec 3, ~ 4 \prec 5 \prec 3, ~ 1 \prec 5 \prec 6, ~ 4 \prec 2 \prec 6, 1 \prec j ~\forall~ j\}, \\
        P_9 &= \{1 \prec 2 \prec 3, ~ 5 \prec 3 \prec 4, ~ 1 \prec 5 \prec 6, ~ 4 \prec 2 \prec 6, 1 \prec j ~\forall~ j\}.
    \end{aligned}
\end{equation}
Most of these end up not being valid posets, due to there being a cycle of inequalities, e.g. $4 \prec 5 \prec 6 \prec 4 \in P_1$. In this example, the only ones that are posets are $P_5$ and $P_9$, with linear extensions
\begin{equation}
    \begin{aligned}
        S^{(P_5)}_6 &= \{142536,145236,142563,145263\},\\
        S^{(P_9)}_6 &= \{125364,152364,125634,152634\}.
    \end{aligned}
\end{equation}

Since we enumerated all possible cyclic shifts of the triplets, one finds that $S^{(P_5)}_6 \cup S^{(P_9)}_6 = S_6^{(T)}$, such that
\begin{equation}\label{eqn:6point-poset-example}
    R_T = R_{P_5}(\varepsilon_5) \cup R_{P_9}(\varepsilon_9),
\end{equation}
where $\varepsilon_i$ parametrize the freedom to reflect the regions corresponding to each poset. Corollary \ref{cor:poset-connected} is sufficient to find the strong connectedness within each piece $R_{P_i}$, so one might ask whether we can choose $\varepsilon_i$ such that the entire $R_T$ is strongly connected. In this example, choosing $\varepsilon_5 = ({+}{+}{+}{+}{+}{+})$ and $\varepsilon_9 = ({+}{+}{+}{-}{+}{+})$, the region defined by \eqref{eqn:6point-poset-example} is strongly connected, as each region in $R_{P_5}(\varepsilon_5)$ is adjacent to a region $R_{P_9}(\varepsilon_9)$ one the boundary $(14) = 0$. With the idea of regions corresponding to posets in mind, we now return to proving the following theorem:

\begin{theorem}[Internally planar diagrams admit strongly connected geometries]
    The orientation of triplets $T$ induced by a planar embedding of the corresponding on-shell diagram defines a strongly connected geometry $R_T$ for an appropriate choice of signs.
\end{theorem}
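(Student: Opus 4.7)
The strategy is to decompose $S_n^{(T)}$ as the disjoint union $\bigsqcup_{P} S_n^{(P)}$ over valid posets $P$, each obtained by picking, for each oriented triplet $(abc) \in T$, a cyclic shift that designates one element as the ``smallest'' so that the triplet constraint becomes a chain of strict inequalities. By Corollary~\ref{cor:poset-connected}, each $R_P(\varepsilon)$ is already strongly connected internally, so the problem reduces to showing that the various $R_P$'s can be stitched together along codimension-one boundaries with a globally consistent sign assignment $\varepsilon$.

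The first step is to build a \emph{flip graph} $\mathcal{F}$ whose vertices are valid posets and whose edges connect two posets differing by the cyclic shift of a single triplet (for example, $a \prec b \prec c$ becomes $b \prec c \prec a$). I would show $\mathcal{F}$ is connected by exploiting the planar embedding: once external legs are stripped, the diagram is a bipartite planar graph whose interior faces are alternating cycles of black and white vertices. The choice of triplet shifts defines an orientation on the black vertices, and acyclicity of the resulting poset corresponds precisely to a well-defined combinatorial flow around each face of the embedding. Any two valid posets therefore differ by a finite sequence of single-face flips, and a reference poset can be produced by rooting at the outer face and orienting along a spanning tree of the planar dual.

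Second, I would show that each edge of $\mathcal{F}$ corresponds to a codimension-one gluing in $R_T$. Concretely, if $P$ and $P'$ differ by changing the cyclic shift of $(abc)$ so that the smallest element shifts from $a$ to $b$, then one can find a linear extension $\sigma \in S_n^{(P)}$ that places $a$ in the last position and $\sigma' \in S_n^{(P')}$ that places $a$ immediately after position 1. The cyclic-sign identification $R(1,\ldots,\varepsilon_a a)\big|_{(1a)=0} = R(1,-\varepsilon_a a,\ldots)\big|_{(1a)=0}$ then provides a codimension-one bridge between $R_P(\varepsilon)$ and $R_{P'}(\varepsilon')$ once $\varepsilon_a$ is toggled. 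The existence of such $\sigma$ and $\sigma'$ follows from the planar embedding: the element $a$ sits at a leaf of the induced orientation and can be freely pushed to either extreme of the permutation without violating the other triplet constraints.

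Finally, I would construct the global sign $\varepsilon$ by following a spanning tree of $\mathcal{F}$ from the reference poset with all-positive signs, toggling the appropriate $\varepsilon_a$ along each edge of the tree. The main obstacle is verifying \emph{consistency around cycles} of $\mathcal{F}$: a closed loop of single-triplet shifts must toggle the sign of each index an even number of times, or else the adjacencies cannot all be realized simultaneously. I expect this to follow from the planarity by a parity argument, since the fundamental cycles of $\mathcal{F}$ correspond to sequences of shifts around interior faces of the embedding, and the alternating bipartite structure of such faces forces each external index appearing on the face to be toggled an even number of times. Combining intra-poset connectivity from Corollary~\ref{cor:poset-connected} with these bridges then yields the desired strongly connected $R_T$.
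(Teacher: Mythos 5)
Your decomposition of $S_n^{(T)}$ into disjoint sets $S_n^{(P)}$ indexed by posets, and your use of Corollary~\ref{cor:poset-connected} to get internal connectivity of each $R_P$, matches the paper exactly. But the bridge between different $R_P$'s is where your argument breaks down.

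Your flip graph $\mathcal{F}$ takes as edges ``change the cyclic shift of one triplet,'' i.e. move the decorated edge from one side of a single black triangle to another. The problem is that this move does \emph{not} in general produce a codimension-one gluing between $R_P$ and $R_{P'}$. If $(abc)$ is shifted from $a$-smallest ($a\prec b\prec c$) to $b$-smallest ($b\prec c\prec a$), the index $a$ is still constrained by every \emph{other} triplet that contains it; $a$ need not sit at either extreme of any linear extension. Your claim that ``$a$ sits at a leaf of the induced orientation'' is exactly where the argument fails: $a$ is a leaf only when $a$ appears in precisely one triplet, which is the exception rather than the rule. Concretely, for $R_P$ and $R_{P'}$ to share a facet you need $\sigma\in S_n^{(P)}$ and $\sigma'\in S_n^{(P')}$ differing by one adjacent transposition (or by the wrap-around move with a sign flip on a single column). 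A one-triplet shift changes two relations in the poset at once, and a short check shows no adjacent transposition of a valid $\sigma$ produces a valid $\sigma'$; the wrap-around gluing requires $a$ to be a global extremum of the poset, which a single shift does not arrange.

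The paper's move is strictly larger: it toggles the decoration on \emph{every} edge incident to a chosen vertex $i$, thereby flipping $i$ between a minimal and a maximal element of the poset. Only then can $i$ be pushed to one extreme of a linear extension in $P$ and the other extreme in $P'$, giving the codimension-one wrap-around gluing (with $\varepsilon_i \to -\varepsilon_i$). That move corresponds to a face rotation in the dual bipartite graph $\tilde G(P)$, and the connectivity of the move-graph is then outsourced to the known fact that perfect matchings of a bipartite planar graph are connected under face rotations (the citation to \cite{Propp_LatticeStructure}). Your appeal to ``rooting at the outer face and orienting along a spanning tree of the planar dual'' does not engage with the acyclicity constraint on valid decorations, and your closing parity argument for sign consistency is an expectation rather than a proof. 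To repair the proof you would need to replace the single-triplet shift by the vertex-toggle move, reinterpret valid posets as perfect matchings of the face-dual graph, and then the connectivity and the gluing both come out cleanly; this is what the paper does.
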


\begin{proof}
    To prove this theorem, we demonstrate a bijection between the posets with non-empty sets of linear extensions, and perfect matchings of a particular graph. To make the procedure clear, we follow an example with the 6-point irreducible internally planar diagram with triplets $T = \{(123),(345),(156),(264)\}$.

    Starting with a planar embedding of the internal edges of the diagram, one may replace each black vertex with a triangular face,
    \begin{equation}
        \vcenter{\hbox{ \onshellgraph{{(0,0)/1,(3.3333,2.6944)/2,(4,6.9282)/3,(4.6667,2.6944)/4,(8,0)/5,(4,1.5396)/6}}{{(0,0)/W1,(3.3333,2.6944)/W2,(4,6.9282)/W3,(4.6667,2.6944)/W4,(8,0)/W5,(4,1.5396)/W6}}{{(2.6667,3.0792)/B1,(5.3333,3.0792)/B2,(4,0.7698)/B3,(4,2.3094)/B4}}{{{W1,B1},{W2,B1},{W3,B1},{W3,B2},{W4,B2},{W5,B2},{W5,B3},{W6,B3},{W1,B3},{W2,B4},{W6,B4},{W4,B4}}}{0.6} }} \hspace{+5em} \vcenter{\hbox{ \decoratedpolygongraph{{(0,0)/1,(3.3333,2.6944)/2,(4,6.9282)/3,(4.6667,2.6944)/4,(8,0)/5,(4,1.5396)/6}}{{{1,2,3},{3,4,5},{5,6,1},{2,6,4}}}{ {{1,2},{2,3},{3,1},{1,5},{5,6},{6,1},{2,6},{6,4},{4,2},{3,4},{4,5},{5,3}} }{ {} }{0.6} }}
    \end{equation}
    
    We fix an orientation of the triplets by reading counterclockwise around each black triangle.  Correspondingly, reading the edges counterclockwise gives us most of the inequalities used by the posets necessary to describe these triplets. To get a specific choice of cyclic shifts of the triplets $P_i$, we must throw away one inequality for each triplet, which we denote by decorating the corresponding edge in a graph $G(P_i)$.
    \begin{equation}\label{eqn:emptyextensions}
        \begin{aligned}
            P_1 = \{&1 \prec 2 \prec 3, ~ 3 \prec 4 \prec 5,\\&~ 1 \prec 5 \prec 6,~ 2 \prec 6 \prec 4,~1 \prec j ~\forall~ j\}
        \end{aligned}
        \longleftrightarrow
        G(P_1) = \hspace{-3em} \vcenter{\hbox{ \decoratedpolygongraph{{(0,0)/1,(3.3333,2.6944)/2,(4,6.9282)/3,(4.6667,2.6944)/4,(8,0)/5,(4,1.5396)/6}}{{{1,2,3},{3,4,5},{5,6,1},{2,6,4}}}{ {{1,2},{2,3},{1,5},{5,6},{2,6},{6,4},{3,4},{4,5}} }{ {{3,1},{6,1},{5,3},{4,2}} }{0.6} }}
    \end{equation}
    \begin{equation}
        \begin{aligned}
            P_5 = \{&1 \prec 2 \prec 3, ~ 4 \prec 5 \prec 3,\\&~ 1 \prec 5 \prec 6,~ 4 \prec 2 \prec 6,~1 \prec j ~\forall~ j\}
        \end{aligned}
        \longleftrightarrow
        G(P_5) = \hspace{-3em} \vcenter{\hbox{ \decoratedpolygongraph{{(0,0)/1,(3.3333,2.6944)/2,(4,6.9282)/3,(4.6667,2.6944)/4,(8,0)/5,(4,1.5396)/6}}{{{1,2,3},{3,4,5},{5,6,1},{2,6,4}}}{ {{1,2},{2,3},{1,5},{5,6},{2,6},{4,2},{4,5},{5,3}} }{ {{3,1},{6,1},{3,4},{6,4}} }{0.6} }}
    \end{equation}

    Those sets of inequalities that form poset cannot contain a cycle of inequalities. In the diagram, this corresponds to there being a directed cycle of non-decorated edges.

    \begin{equation}
        \begin{aligned}
             4 \prec 5 \prec 6 \prec 4 \in P_1 \longleftrightarrow & ~ (4\rightarrow 5\rightarrow 6 \rightarrow 4 \text{ non-decorated cycle in \eqref{eqn:emptyextensions})}\\ &\implies S^{(P_1)}_n = \emptyset
        \end{aligned}
    \end{equation}
    
    Suppose we have two triangles $(ijk)$ and $(ilj)$. To prevent the cycles $i \prec j \prec i$ and $i \prec l \prec j \prec k \prec i$, the decorated edges for these triangles must include one from each group. Generalizing to larger groups of black triangles glued into a black polygon, we conclude that exactly one of the edges of the polygon is decorated, with the remaining decorations spent on preventing 2-cycles in the polygon's triangulation. (No such polygons appear in the tracked example.)
    
    In order to prevent the existence of such cycles around the white faces, each white face must have at least one decorated edge. Since there are an equal number of white and black faces, the only way this is possible is if each white face actually has exactly one decorated edge. On the bipartite dual graph, this corresponds to a perfect matching. Note all edges which are incoming to vertex 1 are decorated, since we have the convention that 1 is the first index labeling an oriented region. So we only need to consider the dual graph $\tilde G(P_i)$ of the faces that do not have $1$ as a vertex.
    \begin{gather*}
        \{P_i : S_n^{(P_i)} \neq \emptyset\} \longleftrightarrow \{\text{perfect matchings of the graph dual to the faces that don't include $1$}\} \\
        \{P_5,P_9\} \longleftrightarrow \left\{
        \tilde G(P_5) = \hspace{-3em} \vcenter{\hbox{ \dualdecoratedpolygongraph{{(0,0)/1,(3.3333,2.6944)/2,(4,6.9282)/3,(4.6667,2.6944)/4,(8,0)/5,(4,1.5396)/6}}{{{1,2,3},{3,4,5},{5,6,1},{2,6,4}}}{ {{1,2},{2,3},{1,5},{5,6},{2,6},{4,2},{4,5},{5,3}} }{ {{3,1},{6,1},{3,4},{6,4}} }{0.6}{
        \node[draw,circle,fill=black](A) at (4,2.309) {};
        \node[draw,circle,fill=white](B) at (4,3.509) {};
        \node[draw,circle,fill=white](C) at (5.039,1.709) {};
        \node[draw,circle,fill=black](D) at (5.5,3.1745) {};
        \draw[black,line width=1pt] (A) -- (B);
        \draw[red,line width=1pt] (A) -- (C);
        \draw[red,line width=1pt] (B) -- (D);
        \draw[black,line width=1pt] (C) -- (D);
        } }}  \hspace{-2em} ,
        \tilde G(P_9) = \hspace{-3em} \vcenter{\hbox{ \dualdecoratedpolygongraph{{(0,0)/1,(3.3333,2.6944)/2,(4,6.9282)/3,(4.6667,2.6944)/4,(8,0)/5,(4,1.5396)/6}}{{{1,2,3},{3,4,5},{5,6,1},{2,6,4}}}{ {{1,2},{2,3},{1,5},{5,6},{2,6},{5,3},{6,4},{3,4}} }{ {{3,1},{6,1},{4,5},{4,2}} }{0.6}{
        \node[draw,circle,fill=black](A) at (4,2.309) {};
        \node[draw,circle,fill=white](B) at (4,3.509) {};
        \node[draw,circle,fill=white](C) at (5.039,1.709) {};
        \node[draw,circle,fill=black](D) at (5.5,3.1745) {};
        \draw[red,line width=1pt] (A) -- (B);
        \draw[black,line width=1pt] (A) -- (C);
        \draw[black,line width=1pt] (B) -- (D);
        \draw[red,line width=1pt] (C) -- (D);
        } }} \right\}
    \end{gather*}
    From this point forward, we consider only those posets that have non-empty sets of linear extensions, and the corresponding suitable decorations that are dual to a perfect matching.

    Consider an index $i$, such that in the graph $G(P_i)$ (with edges that are inside of black faces removed) the corresponding vertex is not on the same face as $1$. Suppose we find a valid decoration where all edges ingoing to $i$ are decorated. Then, in the poset $P$ corresponding to this decoration, one has $j \prec i ~ \forall ~ j$, from which we see that $S_n^{(P)}$ includes some element $1\cdots i$ where $i$ is the last index. The region $R(1, \cdots, i) \in R_P$ shares a codimension-1 boundary with the region $R(1,-i,\cdots)$. This latter region is included in $R_{P'}$, where $P'$ is the same poset as $P$ except instead of $i$ being a maximal element, it is the smallest element larger than $1$. This corresponds to the decoration where all outgoing edges to $i$ are decorated instead. So, to show the strong connectedness of $R_T$, we must show that the posets used to construct subsets $R_P$ are adjacent to each other by the move of `swapping all-ingoing decorations for all-outgoing decorations'.

    Luckily, this move has a neat interpretation in terms of the perfect matchings. Swapping the all-ingoing and all-outgoing decorations in $G(P_i)$ correspond to rotating a face in $\tilde G(P_i)$ when that face has alternating edges. It has been shown that for bipartite planar diagrams, all perfect matchings are related to each other by such moves \cite{Propp_LatticeStructure}.
\end{proof}


\bibliographystyle{JHEP}
\bibliography{main}

\end{document}